\newtheorem{theorem}{Theorem}
\newtheorem{lemma}[theorem]{Lemma}
\newtheorem{remark}[theorem]{Remark}
\newtheorem{proposition}[theorem]{Proposition}
\DeclareMathOperator{\var}{Var}
\DeclarePairedDelimiter{\ceil}{\lceil}{\rceil}
\newcommand{\beq}{\begin{equation}}
\newcommand{\eeq}{\end{equation}}
\newcommand{\mc}{\mathcal}
\newcommand{\ox}{\otimes}
\newcommand*{\id}{\mathrm{id}}
\newcommand*{\tr}[1]{\mathrm{Tr}\left[#1\right]}
\newcommand*{\ket}[1]{| #1 \rangle}
\newcommand*{\bra}[1]{\langle #1 |}
\newcommand*{\mge}{\succcurlyeq}
\newcommand*{\mle}{\preccurlyeq}
\newcommand*{\bid}{\mathbf{1}}
\newcommand*{\cA}{\mathcal{A}}
\newcommand*{\cD}{\mathcal{D}}
\newcommand*{\cE}{\mathcal{E}}
\newcommand*{\cG}{\mathcal{G}}
\newcommand*{\cO}{\mathcal{O}}
\newcommand*{\cP}{\mathcal{P}}
\newcommand*{\cR}{\mathcal{R}}
\newcommand*{\cS}{\mathcal{S}}
\newcommand*{\fS}{\mathfrak{S}}
\newcommand*{\cT}{\mathcal{T}}
\newcommand*{\cU}{\mathcal{U}}
\newcommand*{\cX}{\mathcal{X}}
\newcommand*{\cY}{\mathcal{Y}}
\newcommand*{\cZ}{\mathcal{Z}}
\newcommand{\supp}{{\operatorname{supp}}}
\title{Strong converse exponent of channel interconversion}
\author[]{Aadil Oufkir}
\author[]{Yongsheng Yao}
\author[]{Mario Berta}
\affil[]{\small{Institute for Quantum Information,
  RWTH Aachen University,
  Aachen, Germany}}
\begin{document}

\maketitle

\begin{abstract}
In their seminal work, Bennett {\it et al.}~[IEEE Trans.~Inf.~Theory (2002)] showed that, with sufficient shared randomness, one noisy channel can simulate another at a rate equal to the ratio of their capacities. We establish that when coding above this channel interconversion capacity, the exact strong converse exponent is characterized by a simple optimization involving the difference of the corresponding R\'enyi channel capacities with H\"older dual parameters. We further extend this result to the entanglement-assisted interconversion of classical–quantum channels, showing that the strong converse exponent is likewise determined by differences of sandwiched R\'enyi channel capacities. The converse bound is obtained by relaxing to non-signaling assisted codes and applying H\"older duality together with the data processing inequality for R\'enyi divergences. Achievability is proven by concatenating refined channel coding and simulation protocols that go beyond first-order capacities, attaining an exponentially small conversion error, remaining robust under small variations in the
input distribution, and tolerating a sublinear gap between the conversion rates.
\end{abstract}


\section{Overview}

\paragraph{Introduction.} In his seminal work, Shannon \cite{shannon1948mathematical} established that the maximum achievable rate for reliable classical information transmission through a noisy channel is characterized by a single parameter $C$, called the capacity. This fundamental quantity defines two distinct operational regimes:
For rates below capacity ($r=\frac{1}{n}\log M<C$), messages can be transmitted with exponentially vanishing error probability as the blocklength $n$ increases \cite{shannon59,shannon1967lower,Fano1961Nov}. Conversely, any attempt to communicate above capacity ($r=\frac{1}{n}\log M>C$) inevitably leads to an error probability that approaches one exponentially fast in $n$~\cite{han89}. This latter phenomenon is known as the \textit{strong converse}, with the exponential decay rate of the success probability referred to as the \textit{strong converse exponent}.
Both the capacity concept and the strong converse property were later extended by  \cite{Holevo1998Jan, Schumacher1997Jul} and \cite{Winter2002Aug,Ogawa2002Aug,MosonyiOgawa2017strong} to coding over channels with classical input and quantum output, so-called classical-quantum channels.

Dual to the noisy channel coding problem, channel simulation aims to reproduce a noisy channel’s behavior using noiseless communication. Remarkably, when shared randomness is available (shared entanglement for classical-quantum channels), the channel capacity $C$ serves as the same fundamental threshold \cite{BSST2002entanglement,BDHSW2014quantum,BCR2011the}. As in coding—though operationally inverted—simulation exhibits a sharp transition at capacity: successful simulation with exponentially vanishing error is achievable when the perfect communication rate $r=\frac{1}{n}\log M$ exceeds capacity $(r>C)$ \cite{LiYao2021reliable,Oufkir2024OctTV, Li2024OctLarge,Oufkir2024OctPur}. The strong converse occurs for rates below capacity $(r<C)$, where the simulation error is guaranteed to converge to one exponentially fast \cite{BSST2002entanglement,BDHSW2014quantum,Oufkir2024OctTV,Li2024OctLarge,Oufkir2024OctPur}. The exponential rate of this convergence defines the strong converse exponent.

Generalizing both channel coding and channel simulation, the channel interconversion problem seeks to simulate a possibly noisy target channel using communication over another possibly noisy channel. When shared randomness is allowed (shared entanglement for classical-quantum channels), the first-order asymptotic rate for this task is given by the ratio of the capacities of the channels involved \cite{BSST2002entanglement,cao2024channel}. In fact, this can be derived by concatenating the known protocols for channel coding and simulation, pleasantly leading to first order reversibility. However, to the best of our knowledge, little—if anything—is known about higher-order refinements for this problem, partly exactly because concatenating protocols in a black box like fashion will in general not be optimal. In this work, we take the first step by studying the strong converse exponents for the channel interconversion problem. Next, we present an overview of our results. The precise definitions of the quantities involved are given in the subsequent section.

\paragraph{Overview of results.} In this paper, we study the strong converse exponent for channel interconversion under the purified and total variation distances. It represents the optimal rate of exponential convergence of the conversion error towards the worst-case performance. For classical-quantum channels, the strong converse exponent for the conversion $W^{\otimes n}\rightarrow T^{\otimes \ceil{rn}}$ under the \textit{purified distance} takes the form (see Theorem \ref{thm:maincq})
\begin{align}\label{intro-res:pur}
    \sup_{1< p < 2}\frac{2-p}{p}\left(r\cdot \widetilde{I}_{\frac{p}{2}}(T)-\widetilde{I}_{\frac{p}{2(p-1)}}(W)\right)
\end{align}
where $\widetilde{I}_{\alpha}(W)$ is the channel's sandwiched R\'enyi mutual information of order $\alpha$, i.e.,
\begin{equation}
    \widetilde{I}_{\alpha}(W)=\inf_{\sigma \,:\, \text{state}} \max_{x} \widetilde{D}_{\alpha}(W_x\|\sigma) \qquad \forall \alpha \in (0,\infty),
\end{equation}
 and $\widetilde{D}_{\alpha}$ is the sandwiched R\'enyi divergence of order $\alpha$, defined as 
\begin{equation}
    \widetilde{D}_{\alpha}(\rho \| \sigma)=\frac{1}{\alpha-1} \log \operatorname{Tr}\big({\sigma}^{\frac{1-\alpha}{2\alpha}} \rho {\sigma}^{\frac{1-\alpha}{2\alpha}}\big)^\alpha \qquad \forall \alpha \in (0,\infty). 
\end{equation}
Notably, the exact characterization of the strong converse exponent in \eqref{intro-res:pur} is the same under the entanglement and non-signaling assistance.  Furthermore, if the channel $T$ is classical, shared randomness assistance is sufficient to achieve such performance. Finally, by either setting $T = \id$ or $W = \id$, we neatly recover the strong converse exponent of channel coding and channel simulation, respectively.

We further characterize the strong converse exponent for the conversion of \textit{classical} channels $W^{\times n} \rightarrow T^{\times \ceil{rn}}$ under the \textit{total-variation distance} for any assistance ranging from shared randomness to non-signaling. This exponent takes the  form (see Theorem~\ref{thm:classical})
\begin{align}\label{intro-res:tv}
    \sup_{0< \alpha < 1}\sup_{1<p<\frac{1}{1-\alpha}}	\frac{1-p+\alpha p}{p}\left(r\cdot I_{(1-\alpha)p}(T)-I_{\frac{\alpha p}{p-1}}(W)\right).
\end{align}
Further, we prove the strong converse exponent under the most general R\'enyi divergence as a notion of measure. In particular, we quantify the strong converse exponent under $\mathtt{d}_{\alpha}(P\|Q)=1-(\sum_xP_x^\alpha Q_x^{1-\alpha})^{\frac{1}{1-\alpha}}$ for $\alpha \in (0, 1)$
\begin{align}
\sup_{1<p<\frac{1}{\alpha}}	\frac{1-\alpha p}{(1-\alpha) p}\left(r\cdot I_{\alpha p}(T)-I_{\frac{(1-\alpha) p}{p-1}}(W)\right).
\end{align}
Similarly, by either setting $T = \id$ or $W = \id$, we recover the strong converse exponent of the channel coding and the channel simulation, respectively.
We refer to Table \ref{tab:results} for a summary of our results. 

\paragraph{Technical challenges.} In channel coding as well as channel simulation problems, higher-order refinements are typically obtained via  relating the task performance to so-called meta-converses \cite{tomamichel2015second, WWY2014strong, gupta2015multiplicativity, Matthews2014Sep,Wang2019Feb,cao2024channel}. The latter can then be further tightly controlled using entropic divergences and variances in the block-length regime \cite{li2014second,tomamichel2012framework,Nagaoka2006Nov,hayashi2007error,MosonyiOgawa2015quantum,li2022tight,li2024operational}. In contrast, in the case of direct channel interconversion, such refinements are fundamentally more challenging due to the absence of a unified meta-converse that simultaneously captures the properties of both channels involved. This lack of a tight meta-converse framework makes it unclear how to directly relate interconversion performance to entropic quantities using standard methods and addressing this challenge requires the development of novel techniques beyond the traditional meta-converse-based approach. The channel interconversion problem might also be seen as the channel-superchannel analogue to the state-channel conversion problem of quantum dichotomies (cf.~the latest \cite{Berta2024Oct} and references therein). Nevertheless, determining the strong converse exponents for the former is significantly more challenging than for the latter.

\paragraph{Technical contributions.} We derive the achievability part of \eqref{intro-res:pur} and \eqref{intro-res:tv} starting from ideas first outlined in \cite{Dueck1979Jan} and \cite[Problem 2.5.16]{csiszar2011information} for classical channel coding. The methodology was enhanced to prove the strong converse exponent for classical-quantum \cite{MosonyiOgawa2017strong} and quantum \cite{Li2023Nov} channels, as well as quantum  dichotomies \cite{Berta2024Oct}. This starts by analyzing separate cases based on whether the rate exceeds the ratio of the channel mutual information. However, contrary to the aforementioned works, a standard first order code is insufficient here due to the bipartite nature of channel interconversion. Consequently, we derive a refined achievability result that goes beyond the first-order bound obtained by simply concatenating standard channel simulation and coding protocols. Namely, our new result attains an exponentially small conversion error, remains robust under small variations in the input distribution, and tolerates a sublinear gap between the conversion rates. The protocol is carefully designed using a tight finite block-length approach, incorporating the random coding error exponent \cite{Fano1961Nov}, tight channel simulations \cite{Oufkir2024OctMeta}, Taylor expansions of the mutual information \cite{hayashi2016correlation}, and relevant continuity bounds.

Unlike the purified distance (and R\'enyi divergences), the strong converse exponent under the TV distance \eqref{intro-res:tv} involves an unusual and technically demanding double optimization over two parameters $\alpha$ and $p$. Conjecturing such an expression is rather non-trivial, even though it correctly reduces to the known strong converse exponents for channel coding and simulation when one of the channels is the identity.  A key technical tool  that enables this double optimization\,---\,and underpins our achievability result for the TV distance\,---\,is the following inequality. For any probability distributions $P$, $Q$, $V$, and any $\eta \in (0,\frac{1}{2})$, we have that
\begin{equation}
    -\log(1-\operatorname{TV}(P, Q)) \le \log\left(\frac{1}{1-2\eta}\right) + \sup_{0 < \alpha < 1} \Big(\alpha D_{\max}^{\eta}(V \| P) + (1-\alpha) D_{\max}^{\eta}(V \| Q) \Big),
\end{equation}
where $D_{\max}^{\eta}$ denotes the smooth max-divergence. This inequality admits a generalization to the quantum setting, and we believe it is of independent interest in both classical and quantum information theory. The relation resembles that for the purified distance, as proved in \cite{li2024operational}, which we also use in this work. For any states $\rho$, $\sigma$ and $\tau$, we have that
\[
 -\log(1-\operatorname{Pur}(\rho, \sigma)) \le   \log 2+  D(\tau\| \rho) + D(\tau\| \sigma).
\]
The main difference between the total variation distance relation and the purified distance relation lies in that, the former requires an optimization over $\alpha \in (0,1)$, while the latter corresponds to a fixed value of $\alpha = \frac{1}{2}$. This connection is further reflected by the strong converse exponent characterizations: one can obtain \eqref{intro-res:pur} (up to a normalization by a factor of $2$) by setting $\alpha = \frac{1}{2}$ in \eqref{intro-res:tv}.

The converse of \eqref{intro-res:pur} and \eqref{intro-res:tv} is proved via the most general non-signaling strategies, employing Hölder's inequality, and the data-processing inequality for the R\'enyi divergences.

\paragraph{Related results.} The strong converse exponents for channel simulation have recently been obtained for classical channels under the total variation distance \cite{Oufkir2024OctTV}, under measures based on R\'enyi divergences \cite{Li2024OctLarge}, and for classical-quantum channels under the purified distance \cite{Oufkir2024OctPur}. The works \cite{Oufkir2024OctTV,Oufkir2024OctPur} proceed by showing the strong converse achievability with non-signaling strategies, then invoke rounding results \cite{Oufkir2024OctMeta} to show the existence of shared randomness-assisted or entanglement-assisted strategies. As outlined above, our approach is conceptually different. As such, our work shows a unification of the proof methods for establishing strong converse exponents for the two different tasks of channel simulation and channel coding.

\paragraph{Outline.} The rest of this manuscript is structured as follows. Before formally describing the channel interconversion problem in Section \ref{sec:problem-form}, we introduce the necessary notation and preliminaries in Section \ref{sec:prel}. In Section \ref{sec:classical}, we then present the strong converse exponent of classical channel interconversion under both the total variation and the purified distance. The classical-quantum setting is studied for the purified distance in Section \ref{sec:CQ} before concluding with some open problems in Section \ref{sec:conclusion}. We note that each section is split into a classical and a quantum part, such that it is straightforward to comprehend the classical results without having to digest any quantum notation.

\begin{table}[t!]
    \centering
    \renewcommand{\arraystretch}{1.6}
    \begin{tabular}{>{\centering\arraybackslash}m{2.2cm}|>{\centering\arraybackslash}m{7.5cm}|>{\centering\arraybackslash}m{5.5cm}}
        \hline
        & \textbf{Classical Channel Interconversion} \quad \qquad (SR, EA, NS) & \textbf{CQ Channel Interconversion} (EA, NS) \\
        \hline
        \textbf{Purified Distance} 
        & $\displaystyle\sup_{ 1< p < 2}\frac{2-p}{p}\left(r\cdot {I}_{\frac{p}{2}}(T)-{I}_{\frac{p}{2(p-1)}}(W)\right)$ 
        & $\displaystyle\sup_{1< p < 2}\frac{2-p}{p}\left(r\cdot \widetilde{I}_{\frac{p}{2}}(T)-\widetilde{I}_{\frac{p}{2(p-1)}}(W)\right)$ \\
        \hline
        \textbf{Total Variation Distance} 
        & $\displaystyle \sup_{\substack{0< \alpha < 1\\1<p< (1-\alpha)^{-1} }}\!\frac{1-p+\alpha p}{p}\left(\!r\cdot I_{(1-\alpha)p}(T)-I_{\frac{\alpha p}{p-1}}(W)\!\right)$ 
        & - \\
        \hline
        \textbf{R\'enyi   $\alpha\in (0,1)$} & $\sup_{1<p<\frac{1}{\alpha}}	\frac{1-\alpha p}{(1-\alpha) p}\left(r\cdot I_{\alpha p}(T)-I_{\frac{(1-\alpha) p}{p-1}}(W)\right)$ & -
        \\
        \hline
    \end{tabular}
    \caption{Overview of Results: Exponents of classical and classical-quantum (CQ) channel interconversion $W^{\otimes n}\rightarrow T^{\otimes \ceil{rn}}$ under the purified distance, the total variation distance and R\'enyi-based measure and various forms of assistance—shared randomness (SR), entanglement assistance (EA), and non-signaling (NS).}
    \label{tab:results}
\end{table}


\section{Preliminaries}\label{sec:prel}
\subsection{Notation}

\paragraph{Classical setting.}
In this work, we only consider finite alphabets $\cX, \cY, \cS$ and $\cZ$. The cardinality of $\cX$ is denoted $|\cX|$. $\bm{x}^n = (x_1, \dots, x_n)$ is an element of $\cX^n$.  The set of probability distributions on $\cX$ is denoted $\cP(\cX)$.  The set of all types of alphabet $\cX$ and length $n$ is denoted $\cT_n(\cX)$.  
$X$ denotes a  random variable distributed according to the probability distribution $P_{X}$ on $\cX$. $W_{Y|X}$ is a discrete memoryless channel from $\cX$ to $\cY$. For any input element $x \in \cX$,  $W_x(\cdot) = W_{Y|X=x}(\cdot) = W_{Y|X}(\cdot|x)$ is the conditional output probability distribution over $\cY$. For an input probability distribution $P_X$, the joint probability distribution of the input and output is denoted by $P_{X} W_{Y|X} (x,y) = P_{X}(x) W_{Y|X}(y|x)$.  The output probability distribution on $\cY$ given the input  distribution $P_{X}$ is denoted  by $PW_Y(\cdot) = \sum_{x\in \cX} P_{X}(x) W_{Y|X}(\cdot|x)$.  In the blocklength setting, we consider the iid channel $W_{\bm{Y|X}}^{\times n} = W_{Y_1|X_1}\times W_{Y_2|X_2} \times \cdots \times W_{Y_n|X_n}$ from $\cX^n$ to $\cY^n$. Finally, $P_{X} P_Y$ denotes a product  distribution on $\cX\times \cY$.\\

The total variation (TV) distance between two probability distributions $P, Q \in \mathcal{P}(\mathcal{X})$ is defined as follows:
\begin{align}
    \operatorname{TV}(P, Q) &= \frac{1}{2}\sum_{x\in \cX} \left|P(x) - Q(x)\right|.
\end{align}
The Kullback–Leibler divergence 
is defined  for $P, Q \in \mathcal{P}(\mathcal{X})$ as follows:
\begin{align}
    D(P\|Q) = \sum_{x\in \cX} P(x)\log \frac{P(x)}{Q(x)}
\end{align}
if $\supp(P)\subset \supp(Q)$ and $+\infty$ otherwise. 
\\The R\'enyi divergence is defined  for $P, Q \in \mathcal{P}(\mathcal{X})$ and $\alpha\in (0,1)\cup (1, \infty)$ as follows:
\begin{align}
    D_\alpha(P\|Q) = \frac{1}{\alpha -1} \log \sum_{x\in \cX} P(x)^{\alpha} Q(x)^{1-\alpha}
\end{align}
The max-divergence is defined  for $P, Q \in \mathcal{P}(\mathcal{X})$ as follows:
\begin{align}
    D_{\max}(P\| Q) = \inf\{\lambda : P \mle \exp(\lambda) Q\}.
\end{align}
The $\varepsilon$-smooth  relative max-divergence is defined  for $P, Q \in \mathcal{P}(\mathcal{X})$  and $\varepsilon\ge 0$ as follows:
\begin{align}
    D_{\max}^{\varepsilon}(P \| Q) = \inf_{\substack{\widetilde{P}\in \cP(\cX) \\ \mathrm{TV}(\widetilde{P}, P)\le \varepsilon}}  D_{\max}(\widetilde{P}\|Q).
\end{align}
The channel mutual information of the channel $W_{Y|X}$ and input probability distribution $P_X$ is defined as:
\begin{align}
    I(P_{X}, W_{Y|X}) &= \inf_{Q_Y\in \cP(\cY)} D\left( P_X W_{Y|X}\| P_X Q_Y\right). 
\end{align}
The channel capacity of the channel $W_{Y|X}$ is defined as:
\begin{align}
    C( W_{Y|X}) &=\sup_{P_{X}\in \cP(\cX)} I(P_{X}, W_{Y|X}) = \sup_{P_{X}\in \cP(\cX)}\inf_{Q_Y\in \cP(\cY)} D\left( P_X W_{Y|X}\| P_X Q_Y\right). 
\end{align}
The notion of the channel mutual information can be generalized in two different ways. Given a channel $W_{Y|X}$, an input probability distribution $P_X$ and a parameter $\alpha \in(0, \infty)$, the Sibson $\alpha$-mutual information is defined as \cite{Sibson1969Jun}
\begin{align}\label{eq:I(P,W)}
    I_{\alpha}(P_{X}, W_{Y|X}) &= \inf_{Q_Y\in \cP(\cY)} D_{\alpha}\left( P_X W_{Y|X}\| P_X Q_Y\right).
\end{align}
The Augustin--Csisz\'ar $\alpha$-mutual information is defined as \cite{Augustin78,csiszar1995generalized}
\begin{align}
    I^{\rm{ac}}_{\alpha}(P_{X}, W_{Y|X}) &= \inf_{Q_Y\in \cP(\cY)}  \mathbb{E}_{x\sim P_X} \left[D_{\alpha}\left(  W_{Y|X=x}\|  Q_Y\right)\right].
\end{align}
These notions of mutual information are different in general. However, they induce the same $\alpha$-channel capacity defined for $\alpha\in (0,\infty)$ \cite[Proposition 1]{csiszar1995generalized}:
\begin{align}
    I_{\alpha}(W_{Y|X}) = \sup_{P_{X}\in \cP(\cX)}I_{\alpha}(P_{X}, W_{Y|X}) =\sup_{P_{X}\in \cP(\cX)} I^{\rm{ac}}_{\alpha}(P_{X}, W_{Y|X}).
\end{align}
The mutual information variance of the channel $W_{Y|X}$ and the input probability distribution $P_X$ is defined as follows:
\begin{align}
    \var(P_X, W_{Y|X}) &= \sum_{x\in \cX} \sum_{y\in \cY} P_X(x)W_{Y|X}(y|x)\left( \log\Big(\frac{W(y|x)}{PW(y)}\Big) -D(P_XW_{Y|X}\| P_X PW_Y) \right)^2.
\end{align}
Taking the optimal  input probability distribution $P_X$, we obtain the mutual information variance of the channel $W_{Y|X}$:
\begin{align}
    \var(W_{Y|X}) &= \sup_{P_X\in \cP(\cX)}\var(P_X, W_{Y|X}).
\end{align}
\paragraph{Classical-quantum setting.} 
For a finite-dimensional Hilbert space $\mc{H}$, we let $\mc{L}(\mc{H})$ be the set of all linear operators on $\mc{H}$. We denote  the set of the  positive semi-definite  operators  on $\mc{H}$ as $\mc{P}(\mc{H})$. The set of quantum states on $\mc{H}$ is denoted as $\mathcal{S}(\mc{H})$. We use the notation $|\mc{H}|$ for the dimension of $\mc{H}$.  When $\mc{H}$ is associated with a system $A$, the above notations $\mc{L}(\mc{H})$, $\mc{P}(\mc{H})$, $\mathcal{S}(\mc{H})$, $|\mc{H}|$ and $I_{\mc{H}}$ also can be written as $\mc{L}(A)$, $\mc{P}(A)$, $\mathcal{S}(A)$, and $|A|$, respectively. The support of an operator $X$ is denoted by $\supp(X)$. For $A, B \in \mathcal{P}(\mathcal{H})$, we denote by $\{A \geq B\}$ the projection
onto the subspace spanned by the eigenvectors corresponding to the non-negative eigenvalues of $A-B$. The non-commutative minimum of  $A, B \in \mathcal{P}(\mathcal{H})$ is denoted $A\wedge B = A - (A-B)\{A \geq B\}$. 
A quantum channel $\mc{N}_{A \rightarrow B}$ is a completely positive and trace-preserving linear map from $\mc{L}(A)$ to $\mc{L}(B)$. 
A classical-quantum channel $T_{X \rightarrow Y}$ is a quantum channel with the following form:
\[
T_{X \rightarrow Y }(\cdot)=\sum_{x} \bra{x}(\cdot)\ket{x} \rho_Y^x,
\]
where $\{\ket{x}\}$ is an orthonormal
 basis of the underlying Hilbert space $\mc{H}_X$ and $\{\rho_Y^x\} \in \mc{S}(Y)$.\\

For $\rho , \sigma \in \mathcal{P}(\mathcal{H})$, the quantum relative
entropy of $\rho$ and $\sigma$ is defined as~\cite{Umegaki1954conditional}
\begin{equation}
D(\rho\|\sigma)= \begin{cases}
\tr{\rho(\log\rho-\log\sigma)} & \text{ if }\supp(\rho)\subseteq\supp(\sigma), \\
+\infty                        & \text{ otherwise.}
                  \end{cases}
\end{equation}
The log-Euclidean R\'enyi divergence and the sandwiched R\'enyi divergence were introduced in~\cite{MosonyiOgawa2017strong} and~\cite{MDSFT2013on, WWY2014strong}, respectively, which have been found wide applications so far.
Let $\alpha\in(0,+\infty)\setminus\{1\}$, $\rho\in\mathcal{S}(\mathcal{H})$ and $\sigma\in\mathcal{P}(\mathcal{H})$.
When $\alpha >1$ and $\supp(\rho)\subseteq\supp(\sigma)$ or $\alpha\in (0,1)$ and $\supp(\rho)\not\perp\supp(\sigma)$, the sandwiched R{\'e}nyi divergence of order $\alpha $
is defined as
\begin{align}
\widetilde{D}_{\alpha}(\rho \| \sigma)=\frac{1}{\alpha-1} \log \widetilde{Q}_{\alpha}(\rho \| \sigma),
\quad\text{with}\ \
\widetilde{Q}_{\alpha}(\rho \| \sigma)=\tr {({\sigma}^{\frac{1-\alpha}{2\alpha}} \rho {\sigma}^{\frac{1-\alpha}{2\alpha}})^\alpha}; \\
\end{align}
otherwise, we set $\widetilde{D}_{\alpha}(\rho \| \sigma)=+\infty$. When $\alpha > 1$ and $\supp(\rho)\subseteq\supp(\sigma)$  or  $\alpha \in (0,1)$ and $\supp(\rho)\cap\supp(\sigma)\neq\{0\}$, the log-Euclidean R{\'e}nyi divergence of order
$\alpha $ is defined as
\beq
D_{\alpha}^{\flat}(\rho \| \sigma)=\frac{1}{\alpha-1}\log Q_{\alpha}^{\flat}(\rho \| \sigma),
\quad\text{with}\ \
Q_{\alpha}^{\flat}(\rho \| \sigma)=\tr{2^{\alpha \log \rho +(1-\alpha) \log \sigma}};
\eeq
otherwise, we set $D_{\alpha}^{\flat}(\rho \| \sigma)=+\infty$. 

The $1/2$-sandwiched R\'enyi divergence is closely related to the fidelity and purified distance which are defined for two states $ \rho, \sigma \in \mathcal{S}(\mathcal{H})$ as follows:
\begin{align}
    F(\rho, \sigma) &= \left(\tr{\sqrt{\sqrt{\sigma}\rho \sqrt{\sigma}}}\right)^2,
    \\ \operatorname{Pur}(\rho, \sigma) &= \sqrt{1-F(\rho,\sigma)}.
\end{align}
For a quantum channel
 $\mc{N}_{A \rightarrow B}$, its entanglement-assisted classical capacity is defined as
 \begin{align}
 C_E(N_{A \rightarrow B})=\sup_{\rho_A \in \mc{S}(A)}
 \inf_{\sigma_B \in \mc{S}(B)}D(\mc{N}_{A \rightarrow B}(\rho_{RA})\| \rho_R \ox \sigma_B),
 \end{align}
where $\rho_{RA}$ is a purification of $\rho_A$.
For a classical-quantum channel $T_{X \rightarrow Y}$,
The channel mutual information of the channel $T_{X \rightarrow Y}$ and input probability distribution $P_X$ is defined as:
\begin{align}
    I(P_{X}, T_{X \rightarrow Y}) &= \inf_{\sigma_Y\in \mc{S}(Y)} \sum_{x} P_X(x) D\left( T_x\| \sigma_Y\right), 
\end{align}
where $T_x=T(\ket{x}\bra{x})$.

The mutual information variance of the classical-quantum channel $T_{X \rightarrow Y}$ and the input probability distribution $P_X$ is defined as follows:
\begin{align}
    \var(P_X, T_{X \rightarrow Y}) &= \sum_{x} P_X(x) \tr {T_x\left(\log T_x -\log(\sum_x P_X(x)W_x) \right)^2} .
\end{align}
Taking the optimal  input probability distribution $P_X$, we obtain the mutual information variance of the channel $T_{X \rightarrow Y}$:
\begin{align}
    \var(T_{X \rightarrow Y}) &= \sup_{P_X\in \cP(\cX)}\var(P_X, T_{X \rightarrow Y}).
\end{align}
Similar to the classical setting, for a classical-quantum channel $W_{X \rightarrow Y}$, we can define its sandwiched R\'enyi $\alpha$-channel capacity and log-Euclidean R\'enyi $\alpha$-channel capacity for $\alpha \in (0, \infty)$ respectively as
\begin{align}
\widetilde{I}_\alpha(W_{X \rightarrow Y}):&=\sup_{P_X \in \mathcal{P}(\mathcal{X})} \inf_{\sigma_Y \in \mc{S}(Y)} \widetilde{D}_\alpha(P_XW_{X \rightarrow Y}\| P_X\ox \sigma_Y) \\
&=\sup_{P_X \in \mathcal{P}(\mathcal{X})} \inf_{\sigma_Y \in \mc{S}(Y)} \mathbb{E}_{x \sim P_X}\widetilde{D}_\alpha(W_x\|\sigma_Y), \\
I_\alpha^\flat(W_{X \rightarrow Y}):&=\sup_{P_X \in \mathcal{P}(\mathcal{X})} \inf_{\sigma_Y \in \mc{S}(Y)} D_\alpha^\flat(P_XW_{X \rightarrow Y}\| P_X\ox \sigma_Y) \\
&=\sup_{P_X \in \mathcal{P}(\mathcal{X})} \inf_{\sigma_Y \in \mc{S}(Y)} \mathbb{E}_{x \sim P_X}D_\alpha^\flat(W_x\|\sigma_Y).
\end{align}

In the following proposition, we collect some important properties of these quantum R\'enyi divergences.
\begin{proposition}
\label{prop:mainpro}
For $\rho \in \mathcal{S}(\mathcal{H})$, $\sigma \in \mc{P}(\mc{H})$, the log-Euclidean R\'enyi divergence and the sandwiched R{\'e}nyi
divergence satisfy the following properties.
\begin{enumerate}[(i)]
  \item Monotonicity in $\sigma$~\cite{MDSFT2013on,MosonyiOgawa2017strong}: if $\sigma' \geq \sigma$, then $D^{(t)}_{\alpha}(\rho \| \sigma') \leq D^{(t)}_{\alpha}(\rho \| \sigma)$,
      for $(t)=\flat$, $\alpha \in (0,+\infty)$ and $(t)=\widetilde{}$,  $\alpha \in [\frac{1}{2},+\infty)$;
   \item  Data processing inequality \cite{Petz1986quasi,  Beigi2013sandwiched, MDSFT2013on, WWY2014strong, MosonyiOgawa2017strong}: for any quantum channel $\mc{N}$ from $\mc{L}(\mc{H})$ to $\mc{L}(\mc{H}')$, we have
      \begin{equation}
      D_{\alpha}^{(\rm{t})}(\mc{N}(\rho) \| \mc{N}(\sigma)) \leq D_{\alpha}^{(\rm{t})}(\rho \| \sigma),
      \end{equation}
      for $(t)=\flat$, $\alpha \in [0,1]$ and $(t)=\widetilde{}$, $\alpha \in [\frac{1}{2},+\infty)$;
\item Approximation by pinching~\cite{MosonyiOgawa2015quantum,hayashi2016correlation}:
        for $\alpha\geq 0$, we have
        \begin{equation}
    \widetilde{D}_\alpha(\mc{P}_\sigma(\rho)\|\sigma)
        \leq \widetilde{D}_\alpha(\rho\|\sigma)
        \leq \widetilde{D}_\alpha(\mc{P}_\sigma(\rho)\|\sigma)+2\log v(\sigma);
        \end{equation}
\item Additivity of the quantum R\'enyi mutual information~\cite{hayashi2016correlation}:
        for $\rho_{AB}\in\mc{S}(AB)$, $\sigma_{A'B'} \in \mc{S}(A'B')$, we have
        \begin{equation}
        \widetilde{I}_{\alpha}(AA':BB')_{\rho \ox \sigma}=\widetilde{I}_{\alpha}(A:B)_\rho+\widetilde{I}_{\alpha}(A':B')_\sigma,
        \end{equation}
   for $\alpha \in [\frac{1}{2},+\infty)$;  
\item  Variational expression~\cite{MosonyiOgawa2017strong}:
      when $\rho$ commutes with $\sigma$, we have
      \begin{equation}
          \widetilde{D}_{\alpha}(\rho \| \sigma)= \begin{cases}
         \min\limits_{\tau \in \mc{S}_\rho(\mc{H})} \big\{D(\tau \| \sigma)
         -\frac{\alpha}{\alpha-1}D(\tau \| \rho)\big\}, & \alpha \in (0,1), \\
         \max\limits_{\tau \in \mc{S}_\rho(\mc{H})} \big\{D(\tau \| \sigma)
         -\frac{\alpha}{\alpha-1}D(\tau \| \rho)\big\}, & \alpha \in (1,+\infty),
      \end{cases} 
      \end{equation}
      where $\mc{S}_\rho(\mc{H})=\{\tau~|~\tau \in \mc{P}(\mc{H}),~\supp(\tau) \subseteq \supp(\rho),~\tau~commutes~with~
      \rho~and~\sigma\}$.
\item Additivity of the sandwihced R\'enyi $\alpha$-channel capacity~\cite{WWY2014strong}:
        for any classical-quantum channels $W_{X \rightarrow}$ and $T_{S \rightarrow Z}$, we have
        \begin{equation}
        \widetilde{I}_{\alpha}(W_{X\rightarrow Y} \otimes T_{S \rightarrow Z} )=\widetilde{I}_{\alpha}(W_{X\rightarrow Y} )+\widetilde{I}_{\alpha}(T_{S \rightarrow Z} )
        \end{equation}
   for $\alpha \in [\frac{1}{2},+\infty)$;  
\end{enumerate}
\end{proposition}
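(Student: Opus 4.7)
This proposition collects well-known properties of the log-Euclidean and sandwiched Rényi divergences, so the plan is to verify each item by direct reduction to the cited sources, adjusting only to match the precise formulations used here. Since no single new technique is required, the proof will essentially consist of six short reductions stitched together.

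For item (i), the log-Euclidean case follows from operator monotonicity of the logarithm inside the trace defining $Q^\flat_\alpha$, while the sandwiched case for $\alpha \geq 1/2$ uses operator monotonicity of $t \mapsto t^{(1-\alpha)/\alpha}$ together with trace-monotonicity of $t \mapsto t^\alpha$; both arguments are given in \cite{MDSFT2013on, MosonyiOgawa2017strong}. For (ii), the data-processing inequality in the stated parameter ranges is by now standard, and I would invoke \cite{Petz1986quasi, Beigi2013sandwiched, MDSFT2013on, WWY2014strong, MosonyiOgawa2017strong} without reproof. For (iii), the lower bound is immediate from the data-processing inequality applied to the pinching channel $\mathcal{P}_\sigma$, while the upper bound uses the pinching inequality $\rho \leq v(\sigma)\, \mathcal{P}_\sigma(\rho)$ together with monotonicity of $\widetilde{D}_\alpha$ in the first argument and the scaling relation $\widetilde{D}_\alpha(\lambda \rho \| \sigma) = \widetilde{D}_\alpha(\rho\|\sigma) + \log \lambda$; this is the argument of \cite{MosonyiOgawa2015quantum, hayashi2016correlation}.

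Items (iv) and (vi) rely on multiplicativity of $\widetilde{Q}_\alpha$ under tensor products, which gives additivity of $\widetilde{D}_\alpha$ on product states. For the mutual information in (iv), the optimizing $\sigma_{BB'}$ for a product state factorizes into $\sigma_B \otimes \sigma_{B'}$, reducing the problem to two independent optimizations. For the channel capacity (vi), the inequality $\leq$ is immediate by restricting to product inputs, and the matching lower bound follows because the optimal input distribution for the tensor-product channel may be taken as a product distribution, as in \cite{WWY2014strong}. Finally, (v) reduces to the classical case since $\rho$ commutes with $\sigma$, and the variational formula then follows from Lagrangian duality applied to the classical Rényi divergence $D_\alpha$, exactly as worked out in \cite{MosonyiOgawa2017strong}.

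The principal conceptual obstacle, if one were to give fully self-contained proofs, would be (ii) for the sandwiched divergence in the range $\alpha \in [1/2, 1)$, which needs a nontrivial operator interpolation argument, and to a lesser extent (vi), where additivity of the channel capacity is delicate because it fails for many related quantum capacities. Since both are already well-documented in the cited literature, the practical task here is bookkeeping: ensuring that each statement is recorded with the precise domain of $\alpha$ and the precise assumptions on supports that will be invoked in the achievability and converse arguments later in the paper.
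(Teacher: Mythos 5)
The paper offers no proof of this proposition at all---it is a compilation of standard facts recorded with exactly the citations you invoke---so your strategy of deferring to the cited sources, with brief reduction sketches, is essentially the same approach. One small caution: your sketched upper bound in (iii), via $\rho \leq v(\sigma)\,\mathcal{P}_\sigma(\rho)$ together with first-argument monotonicity and scaling, would produce a factor $\frac{\alpha}{\alpha-1}\log v(\sigma)$ rather than $2\log v(\sigma)$ for $\alpha>1$ and reverses direction for $\alpha\in(0,1)$, and likewise the factorization of the optimizers claimed in (iv) and (vi) is itself the nontrivial content of those additivity statements, so in those items it is the citations, not the sketches, that carry the proof.
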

For $S_n$ being the symmetric group of the permutations of $n$ elements, we define  the set of symmetric states on system $A^n$ as
\beq
\mc{S}_{\rm{sym}}(A^n)=\left\{\sigma_{A^n} \in \mc{S}(A^n)~|~
W^{\pi}_{A^n} \sigma_{A^n}W^{\pi * }_{A^n}=\sigma_{A^n}, \ \forall\ \pi \in S_n\right\}.
\eeq
where $W^{\pi}_{A^n}: \ket{\psi_1} \ox \ldots \ox \ket{\psi_n}
\mapsto  \ket{\psi_{\pi^{-1}(1)}} \ox \ldots \ox \ket{\psi_{\pi^{-1}(n)}}$ is the natural representation of $\pi \in S_n$. There exists a  symmetric state that dominates all the other symmetric states, as stated in the following Lemma~\ref{lem:sym}.  The paper~\cite{Hayashi2009universal} and~\cite{CKR2009postselection} give two different constructions of this symmetric state, respectively. 
\begin{lemma}
\label{lem:sym}
For a finite-dimensional system $A$ and any $n\in\mathbb{N}$, there exists a  symmetric state $\sigma_{A^n}^u \in \mc{S}_{\rm{sym}}(A^n)$ such that every symmetric state $\sigma_{A^n} \in \mc{S}_{\rm{sym}}(A^n)$ is dominated as
\begin{equation}
\sigma_{A^n} \leq v_{n,|A|}\sigma_{A^n}^u,
\end{equation}
where $v_{n,|A|} \leq (n+1)^{\frac{(|A|+2)(|A|-1)}{2}}$ is a polynomial of $n$. The number of distinct eigenvalues of $\sigma_{A^n}^u$ is upper bounded by $v_{n,|A|}$ as well.
\end{lemma}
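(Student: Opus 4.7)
The plan is to exploit Schur--Weyl duality to pin down the structure of every permutation-symmetric state on $A^n$ and then build $\sigma^u_{A^n}$ as a uniform mixture across the isotypic blocks. Under the joint action of $U(|A|)$ and $S_n$, the tensor-power space decomposes as
\begin{equation}
\mc{H}_{A^n} \cong \bigoplus_{\lambda \in Y_{n,|A|}} U_\lambda \otimes V_\lambda,
\end{equation}
where $Y_{n,|A|}$ is the set of Young diagrams with $n$ boxes and at most $|A|$ rows, $U_\lambda$ carries an irreducible representation of $U(|A|)$, and $V_\lambda$ carries an irreducible representation of $S_n$. Counting weakly decreasing $|A|$-tuples summing to $n$ gives $|Y_{n,|A|}| \le (n+1)^{|A|-1}$, and the Weyl dimension formula yields $\dim U_\lambda \le (n+1)^{|A|(|A|-1)/2}$.

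Any $\sigma_{A^n} \in \mc{S}_{\rm{sym}}(A^n)$ commutes with $W^\pi_{A^n}$ for every $\pi \in S_n$, so by Schur's lemma applied to the $V_\lambda$ factor it is block-diagonal of the form $\sigma_{A^n} = \bigoplus_\lambda \omega_\lambda \otimes \frac{I_{V_\lambda}}{\dim V_\lambda}$ with $\omega_\lambda \in \mc{P}(U_\lambda)$ and $\sum_\lambda \tr{\omega_\lambda} = 1$. I would then set
\begin{equation}
\sigma^u_{A^n} = \frac{1}{|Y_{n,|A|}|} \sum_{\lambda \in Y_{n,|A|}} \frac{I_{U_\lambda}}{\dim U_\lambda} \otimes \frac{I_{V_\lambda}}{\dim V_\lambda},
\end{equation}
which is manifestly permutation-symmetric and has unit trace. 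Since $\tr{\omega_\lambda}\le 1$ forces $\omega_\lambda \le I_{U_\lambda}$ block-wise, the comparison
\begin{equation}
\sigma_{A^n} \le \bigoplus_\lambda I_{U_\lambda} \otimes \frac{I_{V_\lambda}}{\dim V_\lambda} = |Y_{n,|A|}| \cdot \max_\lambda(\dim U_\lambda) \cdot \sigma^u_{A^n}
\end{equation}
holds, so setting $v_{n,|A|} = |Y_{n,|A|}| \cdot \max_\lambda \dim U_\lambda \le (n+1)^{(|A|-1) + |A|(|A|-1)/2} = (n+1)^{(|A|+2)(|A|-1)/2}$ gives the claimed polynomial bound. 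The eigenvalue count follows immediately: $\sigma^u_{A^n}$ is proportional to the identity on each isotypic block, so its distinct eigenvalues are in bijection with the supporting blocks and number at most $|Y_{n,|A|}| \le v_{n,|A|}$.

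The genuinely nontrivial content is entirely absorbed into Schur--Weyl duality and the two elementary polynomial dimension estimates; once those are invoked, the domination reduces to the observation $\omega_\lambda \le I_{U_\lambda}$. The main obstacle is therefore expositional rather than mathematical: packaging the representation-theoretic ingredients concisely without a long digression. A convenient alternative avoiding Schur--Weyl is the post-selection construction of~\cite{CKR2009postselection}, where $\sigma^u_{A^n} \propto \int (|\phi\rangle\langle\phi|)^{\otimes n}\, \diff\phi$ is an integral of symmetric pure tensor powers against a natural measure on pure states, and the polynomial domination then follows from a direct dimension count on the symmetric subspace of $A^n$.
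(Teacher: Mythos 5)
Your argument is correct and is essentially the Schur--Weyl construction of the universal symmetric state from \cite{Hayashi2009universal}, which is precisely one of the two sources the paper defers to for this lemma (the paper states it without proof, citing \cite{Hayashi2009universal} and \cite{CKR2009postselection}), so you are on the same route as the paper's intended justification. The only blemish is the displayed ``equality'' $\bigoplus_\lambda I_{U_\lambda}\otimes \frac{I_{V_\lambda}}{\dim V_\lambda}= |Y_{n,|A|}|\cdot\max_\lambda(\dim U_\lambda)\cdot\sigma^u_{A^n}$, which in general is only a ``$\leq$'' (equality would require all $\dim U_\lambda$ to coincide); the domination chain still goes through unchanged, so nothing substantive is affected.
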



\section{Problem formulation}\label{sec:problem-form}

\paragraph{Classical channel interconversion.} Given two classical channels $W_{Y|X}$ and $T_{Z|S}$, a shared-randomness scheme to convert the channel $W$ into the channel $T$ consists of an encoding channel $\cE : \cS\times \cR \rightarrow \cX$, a 
decoding channel $\cD : \cY\times \cR \rightarrow \cZ$ and a shared random variable $P_R$. The synthesized conversion channel $\widetilde{T}$ is then 
\begin{align}
    \widetilde{T}(z|s) = \sum_{r\in \cR} P_R(r) \sum_{x\in \cX}\sum_{y\in \cY} \cE(x|s,r) W(y|x) \cD(z|y,r).
\end{align}
We denote the set of the shared randomness schemes by $\mathrm{SR}(W\rightarrow T)$.

Similarly, we can define the set of the non-signaling schemes $\mathrm{NS}(W\rightarrow T)$ which consists of non signaling channels $N: \cS\times \cY \rightarrow \cX \times \cZ$ satisfying:
\begin{align}
\sum_{x \in \mathcal{X}} N(x,z|s,y)&=N(z|y),  &\forall z\in \cZ,\; \forall s\in \cS, \; \forall y \in \cY, \\
\sum_{z \in \mathcal{Z}} N(x,z|s,y)&=N(x|s), &\forall x\in \cX,\; \forall s \in \cS,\; \forall y\in \cY.
\end{align}
The synthesized conversion channel is denoted:
\begin{align}
   (N\circ W)(z|s)=  \sum_{x\in \cX}\sum_{y\in \cY} N(x,z|s,y) W(y|x).
\end{align}
$N$ is said to have marginal $P_X$ if $N_{X|S=s} = P_X$ for all $s\in \cS$. 

For some specified distance ‘$\mathrm{dist}$', the optimal performance among all conversion schemes in some set $\cA$ is then defined as:
\begin{align}
     \mathrm{Succ}^{\cA}_{\mathrm{dist}}(W\rightarrow T) = \sup_{N\in \cA} 1-\mathrm{dist}(N\circ W, T).
\end{align}
We will focus on the total-variation distance and the purified distance. The total-variation distance between two channels $T$ and $T'$ is defined as follows:
\begin{align}
    \operatorname{TV}(T',T) = \sup_{s\in \cS} \operatorname{TV}(T'_s,T_s).
\end{align}
Similarly, the purified distance between two channels $T$ and $T'$ is defined as follows:
\begin{align}
    \operatorname{Pur}(T',T) = \sup_{s\in \cS} \operatorname{Pur}(T'_s,T_s).
\end{align}

 Shannon's theorem~\cite{shannon1948mathematical} and the reverse Shannon's theorem~\cite{BSST2002entanglement} imply that the ratio of the channel capacities $C(T)$ and $C(W)$ determines exactly when asymptotic channel interconversion is possible.  When the interconversion rate $r>\frac{C(W)}{C(T)}$, the strong converse property holds, i.e., the optimal performance $ \mathrm{Succ}_{\mathrm{dist}}^{\mathrm{SR}}(W^{\times n}\rightarrow T^{\times \ceil{rn}})$ or $ \mathrm{Succ}_{\mathrm{dist}}^{\mathrm{NS}}(W^{\times n}\rightarrow T^{\times \ceil{rn}})$ converges to $0$ as $n$ grows to infinity. The exact exponential decay rate is called the strong converse exponent of shared randomness classical channel interconversion~(resp. non-signaling assisted classical channel interconversion),
\begin{align*}
       & \lim_{n \rightarrow \infty} -\frac{1}{n} \log \mathrm{Succ}_{\mathrm{dist}}^{\mathrm{SR}}(W^{\times n}\rightarrow T^{\times \ceil{rn}}) \qquad  \left(\text{resp. }
        \lim_{n \rightarrow \infty} -\frac{1}{n} \log \mathrm{Succ}_{\mathrm{dist}}^{\mathrm{NS}}(W^{\times n}\rightarrow T^{\times \ceil{rn}}) \right).
\end{align*}
The goal of this paper is to study these strong converse exponents. 

\paragraph{Classical-quantum channel interconversion.} Let $W_{X \rightarrow Y}$ and $T_{S \rightarrow Z}$ be two quantum channels from Alice to Bob, an entanglement-assisted  scheme to convert  $W$ into $T$ consists of using a shared bipartite 
entangled state $\phi_{\tilde{S}\tilde{Y}}$, applying local quantum channel $\mathcal{E}_{S\tilde{S} \rightarrow X}$ at Alice's side, sending system $X$ from Alice to Bob through $W_{X \rightarrow Y}$, and at last applying local quantum channel
$\mathcal{D}_{Y\tilde{Y} \rightarrow Z}$ at Bob's side, such a  scheme is usually denoted as a triple $\Pi=\{\phi_{\tilde{S}\tilde{Y}},\mathcal{E}_{S\tilde{S} \rightarrow X},\mathcal{D}_{Y\tilde{Y} \rightarrow Z}\}$. The performance of 
this scheme is measured by the purified distance between the final channel $\Pi \circ W_{X \rightarrow Y}=\mathcal{D}_{Y\tilde{Y} \rightarrow Z} \circ W_{X \rightarrow Y}\circ \mathcal{E}_{S\tilde{S} \rightarrow X}((\cdot) \otimes \phi_{\tilde{S}\tilde{Y}})$ and $T_{S \rightarrow Z}$. The optimal performance among all entanglement-assisted conversion schemes is defined as
\[
 \mathrm{Succ}_{\mathrm{Pur}}^{\mathrm{EA}}(W\rightarrow T)=\sup_{\Pi \in \mathrm{EA}(W \rightarrow T)} 1-\mathrm{Pur}(\Pi \circ W, T),
\]
where $\mathrm{EA}(W \rightarrow T)$ is the set of all entanglement-assisted schemes for converting $W$ to $T$.

Similarly, a non-signaling scheme that converts $W$ to $T$ consists of a super channel $\Pi_{SY \rightarrow XZ}$ which satisfies the following restrictions:
\begin{align*}
    \begin{split}
   &J_{\Pi} \geq 0, ~\mathrm{Tr}_{XZ} J_{\Pi}=I_{SY} \\
   &\mathrm{Tr}_{X}= \frac{I_S}{|S|}\mathrm{Tr}_{XS}  J_{\Pi} \\
   &\mathrm{Tr}_{Z}= \frac{I_Y}{|Y|}\mathrm{Tr}_{ZY}  J_{\Pi}.
    \end{split}
\end{align*}
 The performance of 
this scheme is measured by the purified distance between the final channel $\Pi \circ W_{X \rightarrow Y}$ and $T_{S \rightarrow Z}$. The optimal performance among all non-signaling assisted conversion schemes is defined as
\[
 \mathrm{Succ}_{\mathrm{Pur}}^{\mathrm{NS}}(W\rightarrow T)=\sup_{\Pi \in \mathrm{NS}(W \rightarrow T)} 1-\mathrm{Pur}(\Pi \circ W, T),
\]
where $\mathrm{NS}(W \rightarrow T)$ is the set of all non-signaling assisted schemes for  converting $W$ to $T$.

The quantum Shannon's second theorem~\cite{BSST2002entanglement} and quantum reverse Shannon's theorem~\cite{BDHSW2014quantum, BCR2011the} imply that for asymptotically perfect channel interconversion, a conversion rate which is equal to the ratio between their entanglement-assisted classical capacity $C_E(W_{X \rightarrow Y})$ and
$C_E(T_{S \rightarrow Z})$ is necessary and sufficient. When the interconversion rate $r>\frac{C_E(W_{X\rightarrow Y})}{C_E(T_{S \rightarrow Z})}$, the strong converse property holds, i.e., the optimal performance $ \mathrm{Succ}_{\mathrm{Pur}}^{\mathrm{EA}}(W^{\otimes n}\rightarrow T^{\otimes \ceil{rn}})$ or $ \mathrm{Succ}_{\mathrm{Pur}}^{\mathrm{NS}}(W^{\otimes n}\rightarrow T^{\otimes \ceil{rn}})$ converges to $0$ as $n$ grows to infinity. The exact exponential decay rate is called the strong converse exponent of entanglement-assisted classical-quantum channel interconversion~(resp. non-signaling assisted classical-quantum channel interconversion),
\begin{align*}
       & \lim_{n \rightarrow \infty} -\frac{1}{n} \log \mathrm{Succ}_{\mathrm{Pur}}^{\mathrm{EA}}(W^{\otimes n}\rightarrow T^{\otimes \ceil{rn}}) \qquad  \left(\text{resp. }
        \lim_{n \rightarrow \infty} -\frac{1}{n} \log \mathrm{Succ}_{\mathrm{Pur}}^{\mathrm{NS}}(W^{\otimes n}\rightarrow T^{\otimes \ceil{rn}}) \right).
\end{align*}
The goal of this paper is to analyze these strong converse exponents. 


\section{Classical channels}\label{sec:classical}

In this section, we characterize the strong converse exponent for channel interconversion under the worst-case total variation and purified distance metrics.  

\begin{theorem}\label{thm:classical}
Let $W_{Y|X}$ and $T_{Z|S}$ be two classical channels over finite alphabets. We have that for all rates $r\ge 0$:  
\begin{align}
\lim_{n \rightarrow \infty}-\frac{1}{n}\log \mathrm{Succ}_{\rm{TV}}^{\cA}\left( W^{\times n} \rightarrow  T^{\times \ceil{rn}}\right)
&= \sup_{0< \alpha < 1}\sup_{1<p<\frac{1}{1-\alpha}}	\frac{1-p+\alpha p}{p}\left(r\cdot I_{(1-\alpha)p}(T)-I_{\frac{\alpha p}{p-1}}(W)\right),
\\\lim_{n \rightarrow \infty}-\frac{1}{n}\log \mathrm{Succ}_{\rm{Pur}}^{\cA}\left( W^{\times n} \rightarrow  T^{\times \ceil{rn}}\right)
&=\sup_{1< p < 2}\frac{2-p}{p}\left(r\cdot I_{\frac{p}{2}}(T)-I_{\frac{p}{2(p-1)}}(W)\right), \label{eq:pur}
\end{align}
where $\cA \in \{\mathrm{SR, EA, NS}\}$.  
\end{theorem}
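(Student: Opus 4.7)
The plan is to prove matching converse and achievability bounds for both distance measures; combined with the inclusions $\mathrm{SR}(W\to T) \subseteq \mathrm{EA}(W\to T) \subseteq \mathrm{NS}(W\to T)$, this forces the SR, EA, and NS exponents to coincide. I will therefore prove the converse (upper bound on $\mathrm{Succ}$) for the largest class, $\mathrm{NS}$, and construct an explicit scheme in the smallest class, $\mathrm{SR}$.

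For the converse under the purified distance, I would fix any non-signaling scheme $N$, set $\tilde{T}=N\circ W^{\times n}$ with $m=\ceil{rn}$, and start from the elementary bound $1-\operatorname{Pur}(\tilde T_{s^m},T^{\times m}_{s^m})\le F(\tilde T_{s^m},T^{\times m}_{s^m})$, giving $-\log \mathrm{Succ}_{\mathrm{Pur}} \ge D_{1/2}(T^{\times m}_{s^m}\,\|\,\tilde T_{s^m})$ for any $s^m$. Applying Hölder's inequality with conjugate exponents $p$ and $p/(p-1)$ against a product test distribution $\tau_0^{\times m}$ yields
\begin{equation*}
D_{1/2}(T^{\times m}_{s^m}\|\tilde T_{s^m}) \ge \frac{2-p}{p}\Bigl[ D_{p/2}(T^{\times m}_{s^m}\|\tau_0^{\times m}) - D_{p/(2(p-1))}(\tilde T_{s^m}\|\tau_0^{\times m})\Bigr].
\end{equation*}
Choosing $s^m$ and $\tau_0$ optimally makes the first term $m\,I_{p/2}(T)$; for the second term I would invoke the data-processing inequality for Rényi divergences and the non-signaling marginal constraints on $N$ to reduce the divergence to one against $W^{\times n}$, bounding it by $n\,I_{p/(2(p-1))}(W)$. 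Dividing by $n$ and optimizing over $p\in(1,2)$ delivers the purified-distance converse. For the TV distance the same NS-plus-Hölder template applies, but with a two-parameter Rényi factorization: for $\alpha\in(0,1)$ and $p\in(1,1/(1-\alpha))$ I would establish an analogous identity
\begin{equation*}
-\log(1-\operatorname{TV}(P,Q)) \ge \frac{1-p+\alpha p}{p}\Bigl[D_{(1-\alpha)p}(P\|\tau) - D_{\alpha p/(p-1)}(Q\|\tau)\Bigr],
\end{equation*}
which, combined with DPI and the NS marginals, produces the two-parameter formula for $-\log \mathrm{Succ}_{\mathrm{TV}}^{\mathrm{NS}}$.

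For achievability I would construct an SR scheme as a two-stage cascade. Stage one is a random-coding channel code over $W^{\times n}$ delivering $\log M=nR$ bits at exponentially small error, governed by a Gallager/Fano-type random-coding exponent. Stage two uses the decoded message as shared randomness and runs a tight channel simulation of $T^{\times m}$ drawn from the refined meta-converse-type simulations. The two refinements beyond a black-box concatenation are: (i) the coder must attain exponentially small error while tolerating a sublinear gap between $R$ and the target simulation rate, and (ii) the simulator must be robust to small variations in the induced input distribution. The overall error is then controlled by the key inequality from the overview,
\begin{equation*}
-\log(1-\operatorname{TV}(\tilde T_{s^m},T^{\times m}_{s^m})) \le \log\tfrac{1}{1-2\eta} + \sup_{0<\alpha<1}\Bigl(\alpha D_{\max}^{\eta}(V\|\tilde T_{s^m}) + (1-\alpha) D_{\max}^{\eta}(V\|T^{\times m}_{s^m})\Bigr),
\end{equation*}
applied to a type-based test distribution $V$: the stage-two simulation bounds the first smooth max-divergence, while type-theoretic smoothing bounds the second. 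Tuning $R$ together with $(\alpha,p)$ optimally recovers the two-parameter exponent. The purified-distance achievability is the $\alpha=1/2$ specialization, obtained via the analogous bound $-\log(1-\operatorname{Pur})\le \log 2+D(\tau\|\rho)+D(\tau\|\sigma)$.

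The main obstacle is achievability for the TV distance. Unlike the purified case, which effectively pins $\alpha=1/2$, the TV exponent requires both the new two-parameter smooth-max-divergence inequality and a cascade whose performance behaves uniformly across the $(\alpha,p)$ range. Concretely, one must simultaneously (a) prove the smoothing inequality in a sharp enough form to match the converse, (b) push the channel coder to exponentially small error with only a sublinear rate buffer, and (c) engineer the simulator so that the mismatch in the induced input distribution is absorbed without losing the exponent. These three coupled requirements are the true departure from the standard meta-converse toolkit flagged in the overview and constitute the technical heart of the proof.
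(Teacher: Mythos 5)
Your converse sketch follows essentially the paper's route (relax to NS, bound $1-\mathrm{dist}$ by a Chernoff/fidelity-type overlap, apply H\"older against a reference distribution, then use data processing together with the non-signaling marginal to absorb the second term into $I_{\cdot}(W^{\times n})$, and invoke additivity). One technical slip: you fix a single product test distribution $\tau_0^{\times m}$ and want it simultaneously to be the optimizer making the first term equal $m\,I_{p/2}(T)$ and to be amenable to DPI for the second term. These two roles are incompatible in general; the correct choice (as in the paper) is to keep the common reference free, take it of the form $(N\circ P_{Y^n})_Z$ so that DPI applies to the $\widetilde T$-term, and only lower-bound the $T$-term by an infimum over references, which still yields $I_{(1-\alpha)p}(T^{\times \ceil{rn}})$ after the supremum over input distributions. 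This is fixable and does not change the architecture.

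The genuine gap is in achievability. You propose to run a coding-plus-simulation cascade for the \emph{true} channels $W,T$ at some rate $R$ and to extract the exponent by bounding the two smooth max-divergences in the key inequality with a vaguely specified ``type-based'' $V$, ``tuning $R$ together with $(\alpha,p)$.'' That cannot produce the claimed formula: since $r$ exceeds the capacity ratio, no choice of $R$ makes the cascade work, and the exponent does not come from the protocol's own error terms. The paper's argument is a change-of-channel (tilting) argument in the spirit of Dueck/Csisz\'ar--K\"orner: one fixes auxiliary channels $\widetilde W,\widetilde T$, builds the refined interconversion protocol of Proposition~\ref{prop:1order-improved} \emph{for} $\widetilde W^{\times n}\to\widetilde T^{\times\ceil{r'n}}$ (with a case distinction according to whether $r\,I(P_S,\widetilde T)\lessgtr I(P_X,\widetilde W)$, and a padding construction with a product $P_Z$ when the rate is too large), and then chooses $V=(N\circ\widetilde W^{\times n})_{\bm s}$ (suitably padded) in Lemma~\ref{lem:reverse-chernoff}. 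The first smooth max-divergence then costs $\approx n\,\alpha D(P_X\widetilde W\|P_XW)$ (the price of the true channel behaving like $\widetilde W$), the second $\approx \ceil{rn}(1-\alpha)D(P_S\widetilde T\|P_ST)$ plus the rate-mismatch penalty, giving the exponent $\alpha D(P_X\widetilde W\|P_XW)+r(1-\alpha)D(P_S\widetilde T\|P_ST)+\alpha|r I(P_S,\widetilde T)-I(P_X,\widetilde W)|_+$ after optimizing over $\widetilde W,\widetilde T$ and the input type. Finally, a separate duality step (Proposition~\ref{prop:ach-var-u}, via variational representations of R\'enyi divergences and repeated use of Sion's minimax theorem) is needed to identify this KL-based variational expression with the double-parameter R\'enyi formula; your sketch contains neither the tilted-channel evaluation nor this variational identification, and without them the bound you would obtain does not match the stated exponent.
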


The proof of this theorem has two parts: the converse and the achievability bounds. The
converse is proven for  non-signaling  strategies, which subsume the other forms of assistance. The achievability uses a shared-randomness strategy, which is the weakest form of assistance we consider in this theorem.

We remark that our proof methods can be used to prove the strong converse exponent under the most general R\'enyi divergence as a notion of measure. The $\alpha$-R\'enyi based measure between two channels $T' = N\circ W$ and $T$ is defined as follows:
\begin{align}
    \mathtt{d}_{\alpha}(T\|  T') = 1-\inf_{s\in \cS}    \exp(-D_{\alpha}(T_s\|  T'_s)). 
\end{align}
Recently, \cite{Li2024OctLarge} characterizes exponents of channel simulation under this notion of measure. Similarly, we find the strong converse exponent under $\mathtt{d}_{\alpha}$ for $\alpha \in (0, 1)$.
\begin{theorem}\label{thm:SCE-Renyi}
    Let $W_{Y|X}$ and $T_{Z|S}$ be two classical channels over finite alphabets. We have that for all $\alpha \in (0,1)$ and all rates  $r\ge 0$:  
    \begin{align}
\lim_{n \rightarrow \infty}-\frac{1}{n}\log \mathrm{Succ}_{\mathtt{d}_\alpha}^{\cA}\left( W^{\times n} \rightarrow  T^{\times \ceil{rn}}\right)
&=\sup_{1<p<\frac{1}{\alpha}}	\frac{1-\alpha p}{(1-\alpha) p}\left(r\cdot I_{\alpha p}(T)-I_{\frac{(1-\alpha) p}{p-1}}(W)\right), 
\end{align}
where $\cA\in \{\rm{SR, EA, NS}\}$.
\end{theorem}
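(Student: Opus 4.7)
The proof mirrors that of Theorem~\ref{thm:classical}, specialized to the R\'enyi-$\alpha$ divergence measure. Since $1-\mathtt{d}_\alpha(T,T')=\inf_s\exp(-D_\alpha(T_s\|T'_s))$, computing the strong converse exponent is equivalent to analyzing
\[
\lim_{n\to\infty}\tfrac{1}{n}\inf_{N\in\cA}\sup_{s}D_\alpha\bigl(T_s^{\times m}\,\big\|\,(N\circ W^{\times n})_s\bigr),\qquad m=\ceil{rn}.
\]
The plan is to prove a matching upper bound on $\mathrm{Succ}$ under non-signaling assistance and a matching lower bound under shared-randomness assistance; because $\mathrm{SR}\subseteq\mathrm{EA}\subseteq\mathrm{NS}$, this covers all three cases simultaneously.

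For the converse, I would exploit H\"older duality exactly as in the purified-distance case. For any $p\in(1,1/\alpha)$ with conjugate $q=p/(p-1)$, and any auxiliary distribution $\sigma$ on $\cZ^m$, the identity $\sum_zT_s^{\times m}(z)^{\alpha}(N\circ W^{\times n})_s(z)^{1-\alpha}=\sum_z[T_s^{\times m}(z)^{\alpha}\sigma(z)^{(1-\alpha p)/p}]\cdot[(N\circ W^{\times n})_s(z)^{1-\alpha}\sigma(z)^{(1-(1-\alpha)q)/q}]$ (where the $\sigma$-exponents sum to zero) combined with H\"older gives
\[
D_\alpha\bigl(T_s^{\times m}\,\big\|\,(N\circ W^{\times n})_s\bigr)\;\geq\;\tfrac{1-\alpha p}{(1-\alpha)p}\Bigl[D_{\alpha p}(T_s^{\times m}\|\sigma)\;-\;D_{(1-\alpha)q}\bigl((N\circ W^{\times n})_s\|\sigma\bigr)\Bigr].
\]
Choosing $\sigma=(\sigma_Z^*)^{\times m}$ with $\sigma_Z^*$ the Sibson optimizer of $I_{\alpha p}(T)$, the tensor additivity of the Sibson mutual information yields $\sup_sD_{\alpha p}(T_s^{\times m}\|\sigma)=m\,I_{\alpha p}(T)$. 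For the second term, the non-signaling marginal condition $\sum_xN(x,z|s,y)=N(z|y)$ factors $(N\circ W^{\times n})_s$ through the channel $y\mapsto N(\cdot|y)$ applied to the $W^{\times n}$-output; by the data-processing inequality, a suitable choice of the second auxiliary distribution (or equivalently, selecting $\sigma$ as a universal product-symmetric state and invoking Lemma~\ref{lem:sym} to absorb the mismatch with $\sigma_Z^{*\otimes m}$ at polynomial cost) bounds this term by $n\,I_{(1-\alpha)q}(W)+o(n)$. Taking $m=\ceil{rn}$ and optimizing over $p\in(1,1/\alpha)$ delivers the claimed converse.

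For the achievability, I would adapt the refined two-stage shared-randomness protocol constructed in the proof of Theorem~\ref{thm:classical}, which concatenates a Gallager-type channel code for $W$ with a tight channel simulation of $T$ from \cite{Oufkir2024OctMeta}. The analysis splits into two cases based on whether, at the optimal $p$, the quantity $r\,I_{\alpha p}(T)$ is larger or smaller than $I_{(1-\alpha)q}(W)$; each case identifies which stage (coding or simulation) dominates the error and which provides the exponential margin. Using the random-coding error exponent, Taylor expansions of the mutual information about the optimal input type, and continuity bounds, one builds a code/simulator pair whose conversion error decays with the prescribed R\'enyi exponent, is robust under sublinear perturbations of the empirical input distribution, and tolerates a sublinear gap between the intermediate rates.

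The main obstacle is achievability rather than the converse: naively concatenating first-order optimal coding and simulation protocols gives only the right capacity ratio and loses the sharp exponent, so one must propagate the paired R\'enyi parameters $\alpha p$ and $(1-\alpha)q$ consistently through both stages and absorb all rate slack into sublinear corrections. On the converse side, the technical subtlety is that the H\"older argument requires a single auxiliary $\sigma$ that simultaneously yields the sharp bounds on $D_{\alpha p}(T\|\sigma)$ and $D_{(1-\alpha)q}((N\circ W)\|\sigma)$; handling this via a universal symmetric state (together with the polynomial bound from Lemma~\ref{lem:sym}) is a direct but careful adaptation of the device used for the purified-distance case.
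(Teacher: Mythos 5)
Your converse starts from the right inequality (it is the paper's Proposition~\ref{prop:classical-converse-u} after the reparametrization $\alpha\mapsto 1-\alpha$), but the way you instantiate the auxiliary distribution breaks the argument. H\"older forces the \emph{same} $\sigma$ in both factors, and you fix $\sigma=(\sigma_Z^*)^{\times\ceil{rn}}$ adapted to $T$ before the adversary's strategy is seen. For such a fixed $\sigma$ there is no bound $D_{(1-\alpha)q}\bigl((N\circ W^{\times n})_s\|\sigma\bigr)\le nI_{(1-\alpha)q}(W)+o(n)$: data processing (Lemma~\ref{lem:DPI-SC}) together with the non-signaling marginal condition only gives $\inf_{P_{Z^{\ceil{rn}}}}D_{(1-\alpha)q}\bigl(P_S(N\circ W^{\times n})\|P_SP_{Z^{\ceil{rn}}}\bigr)\le I_{(1-\alpha)q}(W^{\times n})$, i.e.\ the bound holds for an auxiliary adapted to $N$ (essentially $(N\circ P_{Y^n})_Z$), and an NS strategy can concentrate its output where $\sigma_Z^{*\otimes\ceil{rn}}$ is exponentially small, making your second term of order $\ceil{rn}\log(1/\min_z\sigma_Z^*(z))$ and the resulting lower bound vacuous. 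The paper avoids this by keeping the infimum over the auxiliary \emph{after} H\"older and using $\sup_\sigma(f(\sigma)-g(\sigma))\ge\inf_\sigma f-\inf_\sigma g$: the auxiliary is chosen for the adversarial ($N\circ W^{\times n}$) term, while the $T$-term is lower bounded by its own infimum, which is exactly the Sibson mutual information and, after the supremum over input distributions $P_{S^{\ceil{rn}}}$, yields $\ceil{rn}\,I_{\alpha p}(T)$. Note also that the paper first converts $\inf_s$ into an average over an arbitrary $P_{S^{\ceil{rn}}}$; your pointwise-in-$s$ version would additionally need the good $s$ for the $T$-term and the good $s$ for the $W$-term to coincide, which you do not address. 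Your fallback of a universal symmetric state via Lemma~\ref{lem:sym} does not repair the fixed-$\sigma$ issue as stated (without symmetrizing the strategy it gives only trivial bounds of order $\ceil{rn}\log|\cZ|$), and in the classical converse it is simply not needed.

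On achievability, your sketch omits the mechanism that actually produces the exponent. In the paper the exponent does not arise from a protocol whose conversion error decays at the prescribed R\'enyi rate; it arises from a change of measure. One first rewrites the target exponent in variational form, using Proposition~\ref{prop:ach-var-u} together with the identity $(1-\alpha)D_\alpha(P\|Q)=\inf_{R}\{\alpha D(R\|P)+(1-\alpha)D(R\|Q)\}$, which is the R\'enyi analogue of Lemma~\ref{lem:reverse-chernoff}. One then introduces auxiliary (tilted) channels $\widetilde W,\widetilde T$, runs the refined protocol of Proposition~\ref{prop:1order-improved} on $\widetilde W^{\times n}\to\widetilde T^{\times\ceil{rn}}$ with merely sub-exponential error $\exp(-\sqrt{n}\log n)$ (the case distinction is on $r\,I(P_S,\widetilde T)$ versus $I(P_X,\widetilde W)$, not on the R\'enyi capacities at the optimal $p$), and finally pays the relative-entropy costs $D(P_X\widetilde W\|P_XW)$ and $r\,D(P_S\widetilde T\|P_ST)$ when evaluating the true success on $(W,T)$ via the variational identity and data processing. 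Your plan of "propagating the paired R\'enyi parameters $\alpha p$ and $(1-\alpha)q$ through both stages" is exactly the unproven step, and it is the kind of black-box concatenation analysis the paper argues is insufficient; without the tilted channels and the variational reduction the achievability claim is not established.
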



\subsection{Converse}

In this section, we prove the converse  of the strong converse exponent for channel interconversion. Mainly, we prove the lower bound inequalities of Theorems~\ref{thm:classical} \& \ref{thm:SCE-Renyi} with  NS assistance. To that end, we  first prove the following proposition based on Hölder's inequality. 

\begin{proposition}\label{prop:classical-converse-u}
Let $W_{Y|X}$ and $T_{Z|S}$ be two classical channels over finite alphabets. We have that for all rates $r\ge 0$,  $n \in \mathbb{N}$, and $\alpha\in (0,1)$:  
\begin{align}
&\inf_{N\in \rm{NS}} -  \frac{1}{n}\log \inf_{\bm{s}^{\ceil{rn}}}\mathrm{Tr}\left[(T_{\bm{s}^{\ceil{rn}}}^{\times \ceil{rn}})^{(1-\alpha)}({N}\circ W^{\times n})_{\bm{s}^{\ceil{rn}}}^{\alpha } \right]  
\ge \sup_{1<p<\frac{1}{1-\alpha}}	\frac{1-p+\alpha p}{p}\left(r\cdot I_{(1-\alpha)p}(T)-I_{\frac{\alpha p}{p-1}}(W)\right).
\end{align}
\end{proposition}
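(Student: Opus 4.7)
The plan is to apply H\"older's inequality to decompose the quantity into two R\'enyi factors—one for the target $T^{\times m}$ (with $m=\lceil rn\rceil$) and one for the synthesized channel $R = N \circ W^{\times n}$—and then to exploit the non-signaling structure of $N$ together with the data-processing inequality (DPI) to control the $R$-factor by the R\'enyi channel capacity of $W$.

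Fix $p \in (1, 1/(1-\alpha))$ and set $\beta_1 := (1-\alpha)p \in (1-\alpha, 1)$ and $\beta_2 := \alpha p/(p-1) \in (1, +\infty)$, with H\"older conjugates $p_1 := p$ and $p_2 := p/(p-1)$. For any $\bm{s}$ and any auxiliary distribution $\sigma \in \cP(\cZ^m)$, write $T^{1-\alpha}_{\bm{s}}R_{\bm{s}}^{\alpha} = [T^{1-\alpha}_{\bm{s}}\sigma^{-b}] \cdot [\sigma^b R_{\bm{s}}^{\alpha}]$ with $b = 1 - \alpha - 1/p_1$; H\"older's inequality then yields
\[
\sum_{\bm{z}} T^{\times m}_{\bm{s}}(\bm{z})^{1-\alpha} R_{\bm{s}}(\bm{z})^{\alpha} \le Q_{\beta_1}\bigl(T^{\times m}_{\bm{s}}\,\|\,\sigma\bigr)^{1/p_1}\, Q_{\beta_2}(R_{\bm{s}}\,\|\,\sigma)^{1/p_2}.
\]
Taking logarithms, the two resulting coefficients $(1-\beta_1)/p_1$ and $(\beta_2-1)/p_2$ both simplify to $(1-p+\alpha p)/p$, matching the target prefactor.

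The crux is choosing a single $\sigma$ that serves both factors. Let $\tau^\ast \in \cP(\cY^n)$ be a Sibson optimizer satisfying $\sup_{\bm{x}} D_{\beta_2}(W^{\times n}_{\bm{x}} \| \tau^\ast) = I_{\beta_2}(W^{\times n}) = n\, I_{\beta_2}(W)$, using additivity of the classical R\'enyi $\alpha$-capacity. By the non-signaling constraint, $N(\bm{z}|\bm{y}) := \sum_{\bm{x}} N(\bm{x},\bm{z}|\bm{s},\bm{y})$ is independent of $\bm{s}$, so $\tilde{\sigma}(\bm{z}) := \sum_{\bm{y}} N(\bm{z}|\bm{y})\tau^\ast(\bm{y})$ is a well-defined distribution on $\cZ^m$. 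The $\bm{s}$-parametrized channel $\Psi_{\bm{s}}(\bm{z}|\bm{x},\bm{y}) := N(\bm{x},\bm{z}|\bm{s},\bm{y})/N(\bm{x}|\bm{s})$ sends the joint $N(\cdot|\bm{s}) W^{\times n}(\cdot|\cdot)$ to $R_{\bm{s}}$ and $N(\cdot|\bm{s}) \otimes \tau^\ast$ to $\tilde{\sigma}$. Applying DPI for $D_{\beta_2}$ and the standard identity for R\'enyi divergence of product-type distributions (valid for $\beta_2 > 1$),
\[
D_{\beta_2}(R_{\bm{s}}\|\tilde{\sigma}) \le D_{\beta_2}\bigl(N(\cdot|\bm{s}) W^{\times n} \,\|\, N(\cdot|\bm{s})\tau^\ast\bigr) \le \sup_{\bm{x}} D_{\beta_2}(W^{\times n}_{\bm{x}}\|\tau^\ast) = n\, I_{\beta_2}(W).
\]

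For the first factor, since $\tilde{\sigma}$ is a valid distribution on $\cZ^m$, the Sibson minimax form of the classical R\'enyi capacity yields $\sup_{\bm{s}} D_{\beta_1}(T^{\times m}_{\bm{s}} \| \tilde{\sigma}) \ge \inf_{\sigma'}\sup_{\bm{s}} D_{\beta_1}(T^{\times m}_{\bm{s}}\|\sigma') = I_{\beta_1}(T^{\times m}) = m\, I_{\beta_1}(T)$. Combining both bounds via $\sup_{\bm{s}}(-\log F) = -\log \inf_{\bm{s}} F$, dividing by $n$ and using $m \ge rn$, one obtains $-\tfrac{1}{n}\log \inf_{\bm{s}} F(N,\bm{s}) \ge \tfrac{1-p+\alpha p}{p} \bigl(r\, I_{\beta_1}(T) - I_{\beta_2}(W)\bigr)$ uniformly in $N \in \mathrm{NS}$; taking $\inf_N$ and then $\sup_p$ concludes. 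The main obstacle is precisely this coupling of a single $\sigma$ with two different R\'enyi factors—non-signaling is what allows the pushforward $\tilde{\sigma}$ to be defined independently of $\bm{s}$, making the DPI bound uniform over the outer supremum over $\bm{s}$. Without non-signaling, the conditional channel $\Psi_{\bm{s}}$ and its pushforward target would vary with $\bm{s}$ in an uncontrolled way, destroying the clean separation of the two R\'enyi capacities.
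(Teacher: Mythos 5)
Your proposal is correct, and its skeleton is the same as the paper's: the identical H\"older split with exponents $p$ and $p/(p-1)$ producing the prefactor $\tfrac{1-p+\alpha p}{p}$, a data-processing step through the non-signaling box to absorb the synthesized-channel term into $I_{\frac{\alpha p}{p-1}}(W^{\times n})$, and additivity to single-letterize. Where you genuinely deviate is in the bookkeeping of the two optimizations. The paper keeps an input distribution $P_{\bm{S}^n}$ throughout, works with the joint objects $P_{\bm{S}^n}T^{\times\ceil{rn}}$ and $P_{\bm{S}^n}\widetilde{T}$, chooses the common H\"older reference $P_{\bm{Z}^n}$ abstractly (as the minimizer for the synthesized channel), and bounds that term via its Lemma~\ref{lem:DPI-SC}; the $T$-side then becomes the Sibson capacity $\sup_{P}\inf_{Q}D_{(1-\alpha)p}$ directly, so no minimax identity is needed. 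You instead stay with the worst-case input $\bm{s}$, construct the reference explicitly as the pushforward $\tilde\sigma$ of a radius optimizer $\tau^\ast$ for $W^{\times n}$ through the $\bm{s}$-independent marginal $N(\bm{z}|\bm{y})$, and then invoke the R\'enyi capacity\,=\,R\'enyi radius characterization twice: on the $W$-side you need the nontrivial direction (existence of $\tau^\ast$ with $\sup_{\bm{x}}D_{\frac{\alpha p}{p-1}}(W^{\times n}_{\bm{x}}\|\tau^\ast)=nI_{\frac{\alpha p}{p-1}}(W)$, i.e.\ radius $\le$ capacity plus attainment), while on the $T$-side only the easy weak-duality direction is used. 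These facts are standard for finite alphabets (Sibson's identity, Sion's minimax, lower semicontinuity on the simplex), but they are extra ingredients not present in, or cited by, the paper's argument, so you should state and reference them; also note that finiteness of your DPI bound automatically forces $\supp(R_{\bm{s}})\subseteq\supp(\tilde\sigma)$, which legitimizes the H\"older factorization with the negative power of $\tilde\sigma$. What your variant buys is a very transparent use of non-signaling (a single reference $\tilde\sigma$, uniform in $\bm{s}$, making the $W$-bound uniform over the outer supremum), at the price of importing the capacity--radius minimax; the paper's route is slightly more self-contained.
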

\begin{proof}[Proof of Proposition \ref{prop:classical-converse-u}]
For $N\in \rm{NS}$ and denote  $\widetilde{T}^{ \ceil{rn}} = {N}\circ W^{\times n} $.
Observe that 
\begin{align}  
 &\inf_{N\in \rm{NS}} -  \log \inf_{\bm{s}^{\ceil{rn}}}\mathrm{Tr}\left[(T_{\bm{s}^{\ceil{rn}}}^{\times \ceil{rn}})^{(1-\alpha)}(\widetilde{T}_{\bm{s}^{\ceil{rn}}}^{ \ceil{rn}})^{\alpha } \right]  
= \inf_{N\in \rm{NS}} \sup_{P_{\bm{S}^n}} - \log\mathrm{Tr}\left[P_{\bm{S}^n}(T_{\bm{Z}|\bm{S}}^{\times \ceil{rn}})^{(1-\alpha)}(\widetilde{T}_{\bm{Z}|\bm{S}}^{ \ceil{rn}})^{\alpha } \right].
\end{align}
By Hölder's inequality, we have that for all $1<p<\frac{1}{1-\alpha}$:
\begin{align}
&\log\mathrm{Tr}\left[P_{\bm{S}^n}(T_{\bm{Z}|\bm{S}}^{\times \ceil{rn}})^{(1-\alpha)}(\widetilde{T}_{\bm{Z}|\bm{S}}^{ \ceil{rn}})^{\alpha } \right]  
\\&= \inf_{P_{\bm{Z}^n}}\log\mathrm{Tr}\left[P_{\bm{S}^n}^{\frac{1}{p}}(T_{\bm{Z}|\bm{S}}^{\times \ceil{rn}})^{(1-\alpha)}(P_{\bm{Z}^n})^{\frac{1}{p}-(1-\alpha)} P_{\bm{S}^n}^{\frac{p-1}{p}}(\widetilde{T}_{\bm{Z}|\bm{S}}^{ \ceil{rn}})^{\alpha }(P_{\bm{Z}^n})^{\frac{p-1}{p}-\alpha } \right]  
\\&\le \inf_{P_{\bm{Z}^n}}\log \left(\mathrm{Tr}\left[P_{\bm{S}^n}(T_{\bm{Z}|\bm{S}}^{\times \ceil{rn}})^{(1-\alpha)p}(P_{\bm{Z}^n})^{1-(1-\alpha)p} \right]\right)^{\frac{1}{p}}\left( \mathrm{Tr}\left[P_{\bm{S}^n}(\widetilde{T}_{\bm{Z}|\bm{S}}^{ \ceil{rn}})^{\frac{\alpha p}{p-1}}(P_{\bm{Z}^n})^{1-\frac{\alpha p}{p-1}} \right] \right)^{\frac{p-1}{p}}
\\&=\inf_{P_{\bm{Z}^n}}-\frac{1-p+\alpha p}{p}\left(D_{(1-\alpha)p}\left( P_{\bm{S}^n}T_{\bm{Z}|\bm{S}}^{\times \ceil{rn}}\middle\|P_{\bm{S}^n}P_{\bm{Z}^n}\right)-D_{\frac{\alpha p}{p-1}}\left( P_{\bm{S}^n}\widetilde{T}^{ \ceil{rn}}_{\bm{Z}|\bm{S}}\middle\|P_{\bm{S}^n}P_{\bm{Z}^n}\right)\right).
\end{align}
Moreover, by the data-processing inequality  (Lemma \ref{lem:DPI-SC}) we have that 
\begin{align}
I_{\frac{\alpha p}{p-1}}\left( W_{\bm{Y}|\bm{X}}^{\times n}\right) 
&=\sup_{P_{X^n}}\inf_{P_{Y^n}}D_{\frac{\alpha p}{p-1}}\left(P_{X^n}  W_{\bm{Y}|\bm{X}}^{\times n} \middle\| P_{X^n}P_{Y^n}\right)
\\& \ge \inf_{P_{Y^n}}D_{\frac{\alpha p}{p-1}}\left(P_{\bm{S}^n} \cdot  (N\circ W_{\bm{Y}|\bm{X}}^{\times n})_{Z|S} \middle\| P_{\bm{S}^n}\cdot (N\circ P_{Y^n})_{Z|S}\right)
\\&\ge \inf_{P_{\bm{Z}^n}}D_{\frac{\alpha p}{p-1}}\left( P_{\bm{S}^n}\widetilde{T}^{ \ceil{rn}}_{\bm{Z}|\bm{S}}\middle\|P_{\bm{S}^n}P_{\bm{Z}^n}\right).
\end{align}
where we used that $(N\circ P_{Y^n})_{Z|S} = (N\circ P_{Y^n})_{Z}$ since $N$ is non-signaling. Therefore 
\begin{align}
&
\inf_{N\in \rm{NS}} -  \log \inf_{\bm{s}^{\ceil{rn}}}\mathrm{Tr}\left[(T_{\bm{s}^{\ceil{rn}}}^{\times \ceil{rn}})^{(1-\alpha)}(\widetilde{T}_{\bm{s}^{\ceil{rn}}}^{\ceil{rn}})^{\alpha } \right]  
\\&= \inf_{N\in \rm{NS}} \sup_{P_{\bm{S}^n}} - \log\mathrm{Tr}\left[P_{\bm{S}^n}(T_{\bm{Z}|\bm{S}}^{\times \ceil{rn}})^{(1-\alpha)}(\widetilde{T}_{\bm{Z}|\bm{S}}^{ \ceil{rn}})^{\alpha } \right]
\\&\ge \inf_{N\in \rm{NS}} \sup_{P_{\bm{S}^n}} \sup_{P_{\bm{Z}^n}}\frac{1-p+\alpha p}{p}\left(D_{(1-\alpha)p}\left( P_{\bm{S}^n}T_{\bm{Z}|\bm{S}}^{\times \ceil{rn}}\middle\|P_{\bm{S}^n}P_{\bm{Z}^n}\right)-D_{\frac{\alpha p}{p-1}}\left( P_{\bm{S}^n}\widetilde{T}^{ \ceil{rn}}_{\bm{Z}|\bm{S}}\middle\|P_{\bm{S}^n}P_{\bm{Z}^n}\right)\right)
\\&\ge \inf_{N\in \rm{NS}} \sup_{P_{\bm{S}^n}} \frac{1-p+\alpha p}{p}\left(\inf_{P_{\bm{Z}^n}}D_{(1-\alpha)p}\left( P_{\bm{S}^n}T_{\bm{Z}|\bm{S}}^{\times \ceil{rn}}\middle\|P_{\bm{S}^n}P_{\bm{Z}^n}\right)-\inf_{P_{\bm{Z}^n}}D_{\frac{\alpha p}{p-1}}\left( P_{\bm{S}^n}\widetilde{T}^{ \ceil{rn}}_{\bm{Z}|\bm{S}}\middle\|P_{\bm{S}^n}P_{\bm{Z}^n}\right)\right)
\\&\ge \inf_{N\in \rm{NS}} \sup_{P_{\bm{S}^n}} \frac{1-p+\alpha p}{p}\left(\inf_{P_{\bm{Z}^n}}D_{(1-\alpha)p}\left( P_{\bm{S}^n}T_{\bm{Z}|\bm{S}}^{\times \ceil{rn}}\middle\|P_{\bm{S}^n}P_{\bm{Z}^n}\right)-I_{\frac{\alpha p}{p-1}}\left( W_{\bm{Y}|\bm{X}}^{\times n}\right)\right)
\\&=  \frac{1-p+\alpha p}{p}\left(I_{(1-\alpha)p}\left( T_{\bm{Z}|\bm{S}}^{\times \ceil{rn}}\right)-I_{\frac{\alpha p}{p-1}}\left( W_{\bm{Y}|\bm{X}}^{\times n}\right)\right).
\end{align}
Using the additivity of the $\alpha$-mutual information (see e.g.~\cite{Gallager}), we obtain
\begin{align}
\inf_{N\in \rm{NS}} -  \frac{1}{n}\log \inf_{\bm{s}^{\ceil{rn}}}\mathrm{Tr}\left[(T_{\bm{s}^{\ceil{rn}}}^{\times \ceil{rn}})^{(1-\alpha)}(\widetilde{T}_{\bm{s}^{\ceil{rn}}}^{ \ceil{rn}})^{\alpha } \right]  
&\ge\frac{1}{n} \cdot \frac{1-p+\alpha p}{p}\left(I_{(1-\alpha)p}\left( T_{\bm{Z}|\bm{S}}^{\times \ceil{rn}}\right)-I_{\frac{\alpha p}{p-1}}\left( W_{\bm{Y}|\bm{X}}^{\times n}\right)\right)
\\&=   \frac{1-p+\alpha p}{p}\left(\frac{\ceil{rn}}{n}\cdot I_{(1-\alpha)p}\left( T_{Z|S}\right)-I_{\frac{\alpha p}{p-1}}\left( W_{Y|X}\right)\right).
\end{align}
Since this inequality holds for all  $p\in (1, \frac{1}{1-\alpha})$, we conclude that 
\begin{align}
&\inf_{N\in \rm{NS}} -  \frac{1}{n}\log \inf_{\bm{s}^{\ceil{rn}}}\mathrm{Tr}\left[(T_{\bm{s}^{\ceil{rn}}}^{\times \ceil{rn}})^{(1-\alpha)}({N}\circ W^{\times n})_{\bm{s}^{\ceil{rn}}}^{\alpha } \right] 
\ge\sup_{1<p<\frac{1}{1-\alpha}}	\frac{1-p+\alpha p}{p}\left(r\cdot I_{(1-\alpha)p}(T)-I_{\frac{\alpha p}{p-1}}(W)\right).
\end{align}
\end{proof}
Proposition \ref{prop:classical-converse-u} implies  directly  the converse of Theorem \ref{thm:SCE-Renyi}. Moreover, using Proposition \ref{prop:classical-converse-u}, we can prove the converse of Theorem~\ref{thm:classical} with the NS assistance, which we restate:
\begin{proposition}\label{prop:classical-converse}
Let $W_{Y|X}$ and $T_{Z|S}$ be two classical channels over finite alphabets. We have that for all rates $r\ge 0$, for all $n \in \mathbb{N}$:  
\begin{align}
-\frac{1}{n}\log \mathrm{Succ}_{\rm{TV}}^{\rm{NS}}\left( W^{\times n} \rightarrow  T^{\times \ceil{rn}}\right)
&\ge\sup_{0< \alpha < 1}\sup_{1<p<\frac{1}{1-\alpha}}	\frac{1-p+\alpha p}{p}\left(r\cdot I_{(1-\alpha)p}(T)-I_{\frac{\alpha p}{p-1}}(W)\right),
\\ -\frac{1}{n}\log \mathrm{Succ}_{\rm{Pur}}^{\rm{NS}}\left( W^{\times n} \rightarrow  T^{\times \ceil{rn}}\right)
&\ge \sup_{1< p < 2}\frac{2-p}{p}\left(r\cdot I_{\frac{p}{2}}(T)-I_{\frac{p}{2(p-1)}}(T)\right).
\end{align}
\end{proposition}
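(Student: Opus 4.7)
The plan is to reduce Proposition~\ref{prop:classical-converse} directly to Proposition~\ref{prop:classical-converse-u} by bounding $1-\mathrm{TV}$ and $1-\mathrm{Pur}$ from above in terms of the trace quantity $\sum_z T_s(z)^{1-\alpha}\widetilde{T}_s(z)^\alpha$ that already appears in Proposition~\ref{prop:classical-converse-u}. For any non-signaling $N$ with synthesized channel $\widetilde{T}^{\ceil{rn}} = N\circ W^{\times n}$, the definition gives
\begin{equation}
\mathrm{Succ}_{\mathrm{dist}}^{\mathrm{NS}}(W^{\times n}\to T^{\times \ceil{rn}}) = \sup_{N\in\mathrm{NS}}\inf_{\bm{s}^{\ceil{rn}}}\bigl(1-\mathrm{dist}(\widetilde{T}^{\ceil{rn}}_{\bm{s}^{\ceil{rn}}},T^{\times \ceil{rn}}_{\bm{s}^{\ceil{rn}}})\bigr),
\end{equation}
so it suffices to dominate each $1-\mathrm{dist}(\widetilde{T}_{\bm{s}},T_{\bm{s}})$ pointwise by the relevant trace expression, then take $-\frac{1}{n}\log$ of the resulting chain and invoke Proposition~\ref{prop:classical-converse-u}.

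For the total-variation bound, the first step is the elementary inequality $\min(a,b)\le a^{1-\alpha}b^{\alpha}$ for $a,b\ge 0$ and $\alpha\in(0,1)$, which via the identity $1-\mathrm{TV}(P,Q)=\sum_x\min(P(x),Q(x))$ yields
\begin{equation}
1-\mathrm{TV}(T_{\bm{s}},\widetilde{T}_{\bm{s}})\le \sum_{\bm{z}} T_{\bm{s}}(\bm{z})^{1-\alpha}\widetilde{T}_{\bm{s}}(\bm{z})^{\alpha}=\mathrm{Tr}\bigl[(T^{\times\ceil{rn}}_{\bm{s}})^{1-\alpha}(\widetilde{T}^{\ceil{rn}}_{\bm{s}})^{\alpha}\bigr].
\end{equation}
Applying $-\frac{1}{n}\log$ and then $\inf_{N}$ on both sides, Proposition~\ref{prop:classical-converse-u} gives the desired bound for each fixed $\alpha\in(0,1)$, and taking the supremum over $\alpha$ produces the outer $\sup_{0<\alpha<1}$ in the statement.

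For the purified-distance bound, the key classical observation is that $1-\mathrm{Pur}(P,Q)=1-\sqrt{1-F(P,Q)}\le F(P,Q)$, which follows from the algebraic inequality $1-\sqrt{1-x}\le x$ on $x\in[0,1]$. Since $F(P,Q)=\bigl(\sum_x\sqrt{P(x)Q(x)}\bigr)^{2}$, this gives
\begin{equation}
1-\mathrm{Pur}(T_{\bm{s}},\widetilde{T}_{\bm{s}})\le \Bigl(\mathrm{Tr}\bigl[(T^{\times\ceil{rn}}_{\bm{s}})^{1/2}(\widetilde{T}^{\ceil{rn}}_{\bm{s}})^{1/2}\bigr]\Bigr)^{2}.
\end{equation}
Taking $-\frac{1}{n}\log$ introduces a factor of $2$ and reduces the problem to Proposition~\ref{prop:classical-converse-u} with $\alpha=\frac{1}{2}$, which yields $\sup_{1<p<2}\frac{1-p/2}{p}\bigl(r\cdot I_{p/2}(T)-I_{\frac{p}{2(p-1)}}(W)\bigr)$; multiplying this by $2$ recovers exactly the prefactor $\frac{2-p}{p}$ in the claimed bound.

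There is essentially no obstacle here: once the two elementary pointwise inequalities $1-\mathrm{TV}\le \sum P^{1-\alpha}Q^{\alpha}$ and $1-\mathrm{Pur}\le F$ are in hand, the proposition is assembled by monotonicity of $-\log$, taking $\inf_{N\in\mathrm{NS}}$, invoking Proposition~\ref{prop:classical-converse-u}, and in the TV case taking $\sup_{0<\alpha<1}$. The only subtlety worth double-checking is bookkeeping of the factor of $2$ between $F$ and $\sqrt{F}$ in the purified case so that the prefactor $\frac{2-p}{p}$ comes out correctly rather than $\frac{2-p}{2p}$.
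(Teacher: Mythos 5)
Your proposal is correct and is essentially identical to the paper's own proof: the paper also bounds $1-\mathrm{TV}=\sum_{\bm z}\min(T_{\bm s}(\bm z),\widetilde T_{\bm s}(\bm z))\le \sum_{\bm z}T_{\bm s}(\bm z)^{1-\alpha}\widetilde T_{\bm s}(\bm z)^{\alpha}$ and $1-\operatorname{Pur}\le F=\bigl(\sum_{\bm z}\sqrt{T_{\bm s}(\bm z)\widetilde T_{\bm s}(\bm z)}\bigr)^{2}$, then invokes Proposition~\ref{prop:classical-converse-u} (with $\alpha=\tfrac12$ and the factor $2$ in the purified case) and takes $\sup_{0<\alpha<1}$ for TV. Your bookkeeping of the factor of $2$ yielding the prefactor $\frac{2-p}{p}$ matches the paper exactly.
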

\begin{proof}[Proof of Proposition \ref{prop:classical-converse}]
For $N\in \rm{NS}$ and denote  $\widetilde{T}^{ \ceil{rn}} = {N}\circ W^{\times n} $. 

\noindent\textbf{TV-distance.}
For all $\alpha \in(0,1)$ we have that:   
\begin{align}
-\frac{1}{n}\log \mathrm{Succ}_{\rm{TV}}^{\rm{NS}}\left( W^{\times n} \rightarrow  T^{\times \ceil{rn}}\right)
&=  \inf_{N\in \rm{NS}}-  \frac{1}{n}\log \inf_{\bm{s}^{\ceil{rn}}}\mathrm{Tr}\left[(T_{\bm{s}^{\ceil{rn}}}^{\times \ceil{rn}})\wedge (\widetilde{T}_{\bm{s}^{\ceil{rn}}}^{\ceil{rn}})\right]  
\\&\ge \inf_{N\in \rm{NS}} -  \frac{1}{n}\log \inf_{\bm{s}^{\ceil{rn}}}\mathrm{Tr}\left[(T_{\bm{s}^{\ceil{rn}}}^{\times \ceil{rn}})^{(1-\alpha)}(\widetilde{T}_{\bm{s}^{\ceil{rn}}}^{ \ceil{rn}})^{\alpha } \right]  
\\&\ge \sup_{1<p<\frac{1}{1-\alpha}}	\frac{1-p+\alpha p}{p}\left(r\cdot I_{(1-\alpha)p}(T)-I_{\frac{\alpha p}{p-1}}(W)\right)
\end{align}
where we used Proposition \ref{prop:classical-converse-u} in the last inequality.

Since this inequality holds for all $\alpha \in (0,1)$, we conclude that 
\begin{align}
&-\frac{1}{n}\log \mathrm{Succ}_{\rm{TV}}^{\rm{NS}}\left( W^{\times n} \rightarrow  T^{\times \ceil{rn}}\right)
\ge\sup_{0< \alpha < 1}\sup_{1<p<\frac{1}{1-\alpha}}	\frac{1-p+\alpha p}{p}\left(r\cdot I_{(1-\alpha)p}(T)-I_{\frac{\alpha p}{p-1}}(W)\right).
\end{align}
\textbf{Purified distance.} Since $1-\operatorname{Pur}(P,Q)\le F(P,Q)$, we have that 
\begin{align}
-\frac{1}{n}\log \mathrm{Succ}_{\rm{Pur}}^{\rm{NS}}\left( W^{\times n} \rightarrow  T^{\times \ceil{rn}}\right)
&=\inf_{N\in \rm{NS}}-\frac{1}{n}\log \inf_{\bm{s}^{\ceil{rn}}}\;  1-\operatorname{Pur}\left( \widetilde{T}^{ \ceil{rn}}_{\bm{s}^{\ceil{rn}}} , T^{\times \ceil{rn}}_{\bm{s}^{\ceil{rn}}}\right)
\\&\ge \inf_{N\in \rm{NS}}-\frac{2}{n}\log \inf_{\bm{s}^{\ceil{rn}}} \tr{ (T^{\times \ceil{rn}}_{\bm{s}^{\ceil{rn}}})^{1/2} (\widetilde{T}^{ \ceil{rn}}_{\bm{s}^{\ceil{rn}}})^{1/2}  } 
\\&\ge \sup_{1<p<2}	\frac{2-p}{p}\left(r\cdot I_{\frac{p}{2}}(T)-I_{\frac{ p}{2(p-1)}}(W)\right)
\end{align}
where we used Proposition \ref{prop:classical-converse-u} in the last inequality with $\alpha = \frac{1}{2}$.
\end{proof}


\subsection{Achievability} 

In this section, we move to prove the achievability bound of Theorem \ref{thm:classical}. Both bounds for the total variation  and the purified distances rely on the following improved first order achievability for channel interconversion with an exponentially vanishing conversion error. 
\begin{proposition}\label{prop:1order-improved}
Let $T_{Z|S}$ and $W_{Y|X}$ be two channels over finite alphabets. Let $n\in \mathbb{N}$, $r\ge 0$,    $P_{S}$ be a type on $\cS^{\ceil{rn}}$ and $P_{X}\in \cP(\cX)$.  Let $\delta_{n} = 2\frac{\sqrt{\log(n) r \var({T}_{Z|S} )} }{n^{1/4}}$ and  $\gamma_{n} =2\frac{\sqrt{\log(n) \var( W_{Y|X})} }{n^{1/4}}$   such that:
\begin{align}
r\cdot I(P_{S}, {T}_{Z|S} ) +\delta_{n}\le I(P_{X}, {W}_{Y|X}) - \gamma_{n}.
\end{align}
Then, there is a shared-randomness strategy converting $W^{\times n}$ to $T^{\times \ceil{rn}}$ under any input of type close to $P_{S}$ and achieving a conversion TV-error at most $\exp(-\sqrt{n}\log(n))$. That is, for all $A\ge 0$, there is  $n_{0}(A)$ such that for all $n\ge n_{0}$:
\begin{align}
\inf_{N\in \mathrm{SR}} \sup_{\substack{\bm{s}^{\ceil{rn}}\sim P_{S}' \\ \mathrm{TV}(P_{S}, P_{S}')\le \frac{A}{n}}} \mathrm{TV}\left(({N}\circ W^{\times n})_{\bm{s}^{\ceil{rn}}} , T^{\times \ceil{rn}}_{\bm{s}^{\ceil{rn}}}\right) \le \exp(-\sqrt{n}\log(n)).
\end{align}
Furthermore, $N$ has marginal $P_{\bm{X}}^{\times n}$.
\end{proposition}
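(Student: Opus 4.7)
The plan is to construct the conversion protocol by pipelining two refined subroutines. First, a tight finite blocklength channel simulation produces $T^{\times \ceil{rn}}$ from $\ceil{nR_n}$ noiseless bits plus shared randomness. Second, a random channel code of blocklength $n$ and rate $R_n$ transmits those noiseless bits reliably through $W^{\times n}$, with codewords drawn i.i.d.\ from $P_X^{\times n}$ so that the overall encoder has marginal exactly $P_X^{\times n}$. Any choice of $R_n$ satisfying $r\cdot I(P_S, T) + \delta_n \le R_n \le I(P_X, W) - \gamma_n$ will do, which is a nonempty range by the hypothesis. Concatenation and the triangle inequality then reduce the conversion TV error to the sum of a simulation error and a coding error.

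For the simulation step, I would invoke the tight channel simulation bounds of~\cite{Oufkir2024OctMeta}, analysed via a second-order Taylor expansion of the relative entropy around $r\cdot I(P_S, T)$ with remainder controlled by the mutual-information variance $r\cdot \var(P_S, T) \le r\cdot \var(T)$~\cite{hayashi2016correlation}. The choice $\delta_n = 2\sqrt{\log(n)\, r\, \var(T)}/n^{1/4}$ is calibrated so that the Gaussian tail of the centered log-likelihood at deviation $\delta_n$ lies below $\exp(-2\sqrt{n}\log n)$. Robustness to input perturbations with $\operatorname{TV}(P_S, P_S') \le A/n$ follows from a standard continuity estimate giving $|I(P_S, T) - I(P_S', T)| = O((\log n)/n)$, a correction absorbed into the slack of $\delta_n$. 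For the coding step, the random coding error exponent~\cite{Fano1961Nov} combined with a Taylor expansion of the Sibson mutual information $I_\alpha(P_X, W)$ near $\alpha=1$ yields a bound of the form $\exp(-n E_r(R_n))$ with $E_r(R) \gtrsim (I(P_X, W)-R)^2/(2\var(P_X, W))$ for $R$ near $I(P_X, W)$; the choice $\gamma_n = 2\sqrt{\log(n)\, \var(W)}/n^{1/4}$ makes this error at most $\exp(-2\sqrt{n}\log n)$, and summing the two contributions yields the claimed $\exp(-\sqrt{n}\log n)$ bound for $n$ large.

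The main obstacle will be making both estimates simultaneously sharp and uniform over the input: the simulation error bound must hold for every input type $P_S'$ in the $A/n$-TV ball around $P_S$, while the random code preserves the product marginal $P_X^{\times n}$ and still attains the sharp random coding exponent. The former is handled by the smooth dependence of the tight simulation construction on the input type together with the $O((\log n)/n)$ continuity of the entropic quantities under $O(1/n)$ type perturbations. The latter is automatic once codewords are sampled from $P_X^{\times n}$ directly (rather than from a single type class), since product sampling trivially gives the correct marginal after averaging; the small gap between the product-sample and constant-composition exponents is of order $O((\log n)/n)$ and is fully absorbed into $\gamma_n$.
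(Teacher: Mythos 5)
Your proposal is correct and follows essentially the same route as the paper's proof: concatenating a random channel code through $W^{\times n}$ (codewords drawn i.i.d.\ from $P_X^{\times n}$, analysed via the random coding exponent of \cite{Fano1961Nov} with a Taylor expansion of the R\'enyi mutual information and the calibration $\gamma_n$) with a tight simulation of $T^{\times \ceil{rn}}$ from the noiseless bits (analysed via a R\'enyi/Taylor argument calibrated by $\delta_n$ and rounded to a shared-randomness protocol using \cite{Oufkir2024OctMeta}), with robustness to $O(1/n)$-perturbations of the input type absorbed into $\delta_n$ by continuity. The only cosmetic difference is that you phrase the simulation tail bound informally as a Gaussian/second-order estimate, whereas the paper constructs an explicit non-signaling smoothing of $T^{\times\ceil{rn}}$ and bounds its error through $D_{1+\beta}$, but the calibration and conclusion are the same.
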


The proof of this result is detailed in Appendix \ref{app:1order-improved}. Additionally, as we will see later, both notion of distances involve a strong converse exponent that can be written using  variational formulas as shown in the following proposition.
\begin{proposition}\label{prop:ach-var-u}
Let $\mathbb{A} \subset [0,1]$ be a convex set and let $W_{Y|X}$ and $T_{Z|S}$ be two classical channels over finite alphabets. 
We have that:
\begin{align}
 &\sup_{\alpha\in \mathbb{A}}\sup_{1<p<\frac{1}{1-\alpha}}	\frac{1-p+\alpha p}{p}\left(r\cdot I_{(1-\alpha)p}(T)-I_{\frac{\alpha p}{p-1}}(W)\right).
 \\&=\inf_{P_{X}} \sup_{P_{S}} \inf_{\widetilde{W}_{Y|X}} \inf_{\widetilde{T}_{Z|S} }\sup_{\alpha\in \mathbb{A}} \;\; \alpha D\left(P_{X}\widetilde{W}_{Y|X}\middle\|  P_{X} W_{Y|X}\right) +  r(1-\alpha)D\left( P_{S}\widetilde{T}_{Z|S}\middle\| P_{S} T_{Z|S}\right) \\&\qquad\qquad\qquad\qquad\qquad +\alpha\left|r \cdot  I(P_{S}, \widetilde{T}_{Z|S} )-  I(P_{X}, \widetilde{W}_{Y|X} ) \right|_{+}. \label{eq:var-form}
\end{align}
\end{proposition}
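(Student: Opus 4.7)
The plan is to express the left-hand side's R\'enyi-based mutual informations in terms of Kullback--Leibler divergences via the standard Donsker--Varadhan/Gibbs variational formula, and then to reorder the resulting sup/inf operations using Sion's minimax theorem. The key identity I would use is that for any probability distributions $P,Q$, for $\beta \in (0,1)$ one has $D_\beta(P\|Q) = \inf_R \{D(R\|Q) + \tfrac{\beta}{1-\beta} D(R\|P)\}$, and for $\gamma \in (1,\infty)$ one has $D_\gamma(P\|Q) = \sup_R \{D(R\|Q) - \tfrac{\gamma}{\gamma-1} D(R\|P)\}$. Specialising to channels via the identity $I_\alpha(W)=\sup_{P_X}I^{\mathrm{ac}}_\alpha(P_X,W)$ recalled in the preliminaries yields the Augustin--Csisz\'ar representations
\[
I_\beta(T) = \sup_{P_S}\inf_{\widetilde T}\Big\{I(P_S,\widetilde T)+\tfrac{\beta}{1-\beta}D(P_S\widetilde T\|P_ST)\Big\},\quad
I_\gamma(W) = \sup_{P_X}\sup_{\widetilde W}\Big\{I(P_X,\widetilde W)-\tfrac{\gamma}{\gamma-1}D(P_X\widetilde W\|P_XW)\Big\}.
\]

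Substituting $\beta := (1-\alpha)p$ and $\gamma := \alpha p/(p-1)$, the coefficients collapse nicely: $r\beta/p=r(1-\alpha)$ and $\gamma(1-\beta)/(p(\gamma-1))=\alpha$, and the overall prefactor $(1-p+\alpha p)/p$ equals $\lambda := (1-\beta)/p$, which ranges bijectively over $(0,\alpha)$ as $p$ traces $(1,1/(1-\alpha))$. After this substitution, and flipping the sign of $-I_\gamma(W)$ into $\inf_{P_X}\inf_{\widetilde W}$, the left-hand side rewrites as
\[
\sup_{\alpha\in\mathbb{A}}\sup_{\lambda\in(0,\alpha)}\sup_{P_S}\inf_{P_X,\widetilde T,\widetilde W}\Big\{\alpha D(P_X\widetilde W\|P_XW)+r(1-\alpha)D(P_S\widetilde T\|P_ST)+\lambda\big[rI(P_S,\widetilde T)-I(P_X,\widetilde W)\big]\Big\}.
\]

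From here I would apply Sion's minimax theorem to swap $\sup_{P_S}\leftrightarrow\inf_{P_X}$: the integrand is linear (hence concave) in $P_S$ for fixed $P_X$, and convex in $P_X$ for fixed $P_S$ (since $D(P_X\widetilde W\|P_XW)$ is linear in $P_X$ while $-I(P_X,\widetilde W)$ is convex in $P_X$ because mutual information is concave in its input), with both probability simplices compact and convex. Having rearranged the orders, the elementary evaluation $\sup_{\lambda\in(0,\alpha)}\lambda\cdot x=\alpha|x|_+$ for any $x\in\mathbb{R}$ converts the $\lambda$-parametrised term into $\alpha|rI(P_S,\widetilde T)-I(P_X,\widetilde W)|_+$, and a further invocation of Sion's theorem (using linearity in $\alpha$) pushes $\sup_\alpha$ innermost, matching the right-hand side of the proposition.

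The main obstacle is carefully verifying the convexity conditions needed for the minimax swaps. While the integrand is cleanly convex--concave in $(P_X,P_S)$ and linear in $(\alpha,\lambda)$, its dependence on $\widetilde W$ mixes a convex piece ($\alpha D(P_X\widetilde W\|P_XW)$) with a concave piece ($-\lambda I(P_X,\widetilde W)$, since $I(P,\cdot)$ is convex in the channel), so joint convexity in $(\widetilde W,\lambda)$ is not automatic. I would therefore sequence the swaps carefully---first $\sup_{P_S}\leftrightarrow\inf_{P_X}$ where convexity is manifest, then $\sup_\lambda$ and $\sup_\alpha$ individually using their linear (hence concave) dependence---and appeal to compactness of the channel domains to guarantee attainment of extrema, handling the mixed convexity either by a direct computation or via the standard ``one-sided inequality plus matching achievability'' argument familiar from analogous variational identities in Mosonyi--Ogawa and Berta et al.
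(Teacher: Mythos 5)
Your reduction of the left-hand side is essentially the paper's own proof read backwards: your $\lambda=\frac{1-p+\alpha p}{p}$ is exactly the auxiliary variable $\delta$ in the paper (their change of variables is $p=\frac{1}{1+\delta-\alpha}$), the R\'enyi variational formulas you quote are the same ones used there, and $\sup_{\lambda\in(0,\alpha)}\lambda x=\alpha|x|_+$ is precisely how the paper unpacks the $|\cdot|_+$ term. The coefficient algebra ($r\beta/p=r(1-\alpha)$, $\lambda\gamma/(\gamma-1)=\alpha$) is correct. Two small inaccuracies: the integrand is not linear in $P_S$ (the term $\lambda r I(P_S,\widetilde{T})$ is concave in $P_S$, which is what you actually need), and your representation of $I_\gamma(W)$ with $\sup_{\widetilde W}$ already hides a Sion swap of $\inf_{Q_Y}$ and $\sup_{\widetilde W}$. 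Also note the $\sup_{P_S}\leftrightarrow\inf_{P_X}$ exchange is easier than you claim: after the inner infima the objective splits into a $P_X$-part plus a $P_S$-part, so that swap is an identity and needs no minimax theorem at all.

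The genuine gap is exactly the step you defer to ``a direct computation or one-sided inequality plus matching achievability'': pushing $\sup_\lambda$ and $\sup_\alpha$ past $\inf_{P_X}$ and $\inf_{\widetilde W},\inf_{\widetilde T}$ with $\sup_{P_S}$ sandwiched in between. Linearity of the integrand in $(\alpha,\lambda)$ does not survive the partial optimizations at which these swaps must be performed: $\sup_{P_S}\inf_{\widetilde T}\{r(1-\alpha)D(P_S\widetilde{T}\|P_ST)+\lambda r I(P_S,\widetilde{T})\}$ is a supremum of concave functions of $(\alpha,\lambda)$, hence not obviously concave, and $\inf_{\widetilde W}\{\alpha D(P_X\widetilde{W}\|P_XW)-\lambda I(P_X,\widetilde{W})\}$ is an infimum of convex functions of $P_X$, hence not obviously convex, so invoking Sion ``individually using linear dependence'' does not go through. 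This is precisely what the paper's machinery is for: it writes $-\delta I(P_X,\widetilde{W})=\sup_{P_Y}\{-\delta D(P_X\widetilde{W}\|P_XP_Y)\}$ and introduces an explicit $\inf_{P_Z}$, performs the channel-variable swaps at fixed $(\alpha,\delta)$ where convexity can be checked by an entropy rewriting, evaluates the inner $\inf_{\widetilde W},\inf_{\widetilde T}$ in closed form via the R\'enyi variational formula, and only then swaps with $\inf_{P_Z}$ and $\inf_{P_X}$, using a dedicated concavity statement in $(\alpha,\delta)$ (Lemma~\ref{Lem:concave-alpha-delta}) and convexity in $P_X$ obtained as a supremum over $P_Y$ of linear functions. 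Your fallback does not rescue the argument: the inequality you get for free is LHS $\le$ RHS (push the suprema inwards), while the direction that needs all of the above is RHS $\le$ LHS; and ``matching achievability'' is unavailable, since this proposition is the purely variational identity the paper needs in order to prove achievability in the first place, so appealing to the operational result would be circular. To complete your proof you must supply the analogues of these steps, at which point it coincides with the paper's argument run in the opposite direction.
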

The proof of this proposition can be found in Appendix \ref{app:variational}. This proposition reduces the problem of proving the achievability part of the strong converse to proving the upper bound using the variational formulas \eqref{eq:var-form}.  This can be done by considering  cases where the rate satisfies $r \cdot  I(P_{S}, \widetilde{T}_{Z|S} ) \le   I(P_{X}, \widetilde{W}_{Y|X} )$ or not for arbitrary fixed channels $\widetilde{T}_{Z|S}$ and $\widetilde{W}_{Y|X}$. Observe that this condition  resembles the condition of Proposition \ref{prop:1order-improved} which guarantees that  a channel interconversion strategy exists.  Although the high-level proof strategy is the same and relies on Propositions \ref{prop:1order-improved} and \ref{prop:ach-var-u}, we need to separate the proofs for the total variation distance and the purified distance.


\subsubsection{Total variation distance}

We start with the TV distance and prove the following proposition. 
\begin{proposition}\label{prop:ach-var-tv}
Let $W_{Y|X}$ and $T_{Z|S}$ be two classical channels over finite alphabets. We have that for all rates $r\ge 0$:  
\begin{align}
\lim_{n \rightarrow \infty}-\frac{1}{n}\log \mathrm{Succ}_{\rm{TV}}^{\rm{SR}}\left( W^{\times n} \rightarrow  T^{\times \ceil{rn}}\right)
&\le \inf_{P_{X}} \sup_{P_{S}} \inf_{\widetilde{W}_{Y|X}} \inf_{\widetilde{T}_{Z|S} }\sup_{0 < \alpha <1} \alpha\left|r \cdot  I(P_{S}, \widetilde{T}_{Z|S} )-  I(P_{X}, \widetilde{W}_{Y|X} ) \right|_{+}
\\&\;\;+\alpha  D\left(P_{X}\widetilde{W}_{Y|X}\middle\|  P_{X} W_{Y|X}\right) +  r(1-\alpha)D\left( P_{S}\widetilde{T}_{Z|S}\middle\| P_{S} T_{Z|S}\right),
\end{align}
where the minimization is over channels $\widetilde{W}_{Y|X}$ and $\widetilde{T}_{Z|S}$. 
\end{proposition}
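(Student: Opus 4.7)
The plan is to upper bound $-\tfrac{1}{n}\log(1-\mathrm{TV}(N\circ W^{\times n}(\bm s),T^{\times\ceil{rn}}(\bm s)))$ for a carefully constructed shared-randomness protocol $N$, and then to recognize the resulting expression as the variational bound in the statement. The central tool is the smooth max-divergence/TV inequality from the introduction, which I will apply with $P = N\circ W^{\times n}(\bm s)$, $Q = T^{\times\ceil{rn}}(\bm s)$, and $V=\widetilde T^{\times\ceil{rn}}(\bm s)$:
\[
-\log(1-\mathrm{TV}(P,Q))\le\log\tfrac{1}{1-2\eta}+\sup_{0<\alpha<1}\bigl(\alpha D_{\max}^\eta(V\|P)+(1-\alpha)D_{\max}^\eta(V\|Q)\bigr).
\]

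I first fix an input distribution $P_X$. Because there are only polynomially many input types $P_S$ on $\cS^{\ceil{rn}}$, it suffices to construct, for each $P_S$, a shared-randomness protocol $N_{P_S}$ with marginal $P_X^{\times n}$ whose TV error satisfies the claimed bound uniformly over inputs $\bm s$ of type close to $P_S$; the encoder then selects $N_{P_S}$ according to the observed input type. When the tilted rate condition $rI(P_S,\widetilde T)+\delta_n\le I(P_X,\widetilde W)-\gamma_n$ holds, $N_{P_S}$ is obtained directly from Proposition~\ref{prop:1order-improved} applied to $(\widetilde W,\widetilde T)$, so that $\mathrm{TV}(N_{P_S}\circ\widetilde W^{\times n}(\bm s),\widetilde T^{\times\ceil{rn}}(\bm s))\le e^{-\sqrt n\log n}$. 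Otherwise, I apply Proposition~\ref{prop:1order-improved} to a rate-reduced target $\widetilde T'$ (a suitable mixture of $\widetilde T$ with a $P_S$-independent spread) whose mutual information sits just below the tilted capacity $I(P_X,\widetilde W)/r$.

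Given the protocol, the $(1-\alpha)$-term in the key inequality is bounded by $D_{\max}^\eta(V\|T^{\times\ceil{rn}}(\bm s))\le rn\,D(P_S\widetilde T\|P_ST)+o(n)$ via a standard Sanov-type typicality argument: smooth $V$ to its $P_S\widetilde T$-typical subset, on which the likelihood ratio concentrates around $e^{\ceil{rn}D(P_S\widetilde T\|P_ST)}$. For the $\alpha$-term I apply the triangle inequality for the smooth max-divergence,
\begin{align*}
D_{\max}^\eta(V\|N_{P_S}\!\circ\! W^{\times n}(\bm s))\le{}&D_{\max}^{\eta_1}\bigl(V\|N_{P_S}\!\circ\!\widetilde W^{\times n}(\bm s)\bigr)\\
&+D_{\max}^{\eta_2}\bigl(N_{P_S}\!\circ\!\widetilde W^{\times n}(\bm s)\|N_{P_S}\!\circ\! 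W^{\times n}(\bm s)\bigr),
\end{align*}
with $\eta_1+\eta_2\le\eta$. The second term is $\le n\,D(P_X\widetilde W\|P_XW)+o(n)$ by restricting the $Y$-output to sequences jointly typical under $\widetilde W$ with $P_X$-typical inputs, a step enabled by the marginal property $N_{P_S,X^n|S^n}=P_X^{\times n}$. The first term is $o(n)$ in the good case via Proposition~\ref{prop:1order-improved}, and contributes $n|rI(P_S,\widetilde T)-I(P_X,\widetilde W)|_+ +o(n)$ in the bad case from the combination of the rate reduction $\widetilde T\to\widetilde T'$ and the resulting change-of-distribution cost.

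Assembling and dividing by $n$ yields
\[
-\tfrac{1}{n}\log\bigl(1-\mathrm{TV}(N_{P_S}\!\circ\! W^{\times n}(\bm s),T^{\times\ceil{rn}}(\bm s))\bigr)\le\sup_{0<\alpha<1}\bigl[\alpha\bigl(D(P_X\widetilde W\|P_XW)+|rI(P_S,\widetilde T)-I(P_X,\widetilde W)|_+\bigr)+(1-\alpha)rD(P_S\widetilde T\|P_ST)\bigr]+o(1),
\]
which is precisely the variational expression in the proposition after worst-casing over $P_S$, infimizing over $(\widetilde W,\widetilde T)$, and finally infimizing over $P_X$. The main obstacle will be the bookkeeping in the bad case: the rate-reduced $\widetilde T'$ must be constructed so that its mutual information drops just below the tilted capacity (keeping Proposition~\ref{prop:1order-improved} applicable), the induced $D_{\max}$ cost between $\widetilde T^{\times\ceil{rn}}$ and the simulatable $\widetilde T'^{\times\ceil{rn}}$ matches the claimed $|\cdot|_+$ penalty up to $o(n)$, and the smoothing parameters along the triangle chain remain compatible with both the $e^{-\sqrt n\log n}$ error of Proposition~\ref{prop:1order-improved} and the typicality-based bounds elsewhere.
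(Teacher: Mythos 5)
Your overall strategy mirrors the paper's (the smooth-max/TV inequality of Lemma~\ref{lem:reverse-chernoff}, reduction to a fixed type, a case split driven by Proposition~\ref{prop:1order-improved}, and identification with the variational form of Proposition~\ref{prop:ach-var-u}), but two steps that carry the actual technical weight are not closed. First, the bad case, i.e.\ the production of the $\alpha\left|r I(P_S,\widetilde T)-I(P_X,\widetilde W)\right|_+$ penalty, is exactly the crux, and your "rate-reduced target $\widetilde T'$, a suitable mixture of $\widetilde T$ with a $P_S$-independent spread" does not deliver it as stated. For a single-letter mixture $\widetilde T'=(1-\lambda)\widetilde T+\lambda P_Z$ the max-divergence cost per copy is $\log\frac{1}{1-\lambda}$, while the mutual-information reduction you can guarantee (convexity) is only $\lambda I(P_S,\widetilde T)$; matching the required drop to $I(P_X,\widetilde W)/r$ then forces a $D_{\max}$ cost of log-ratio type, which does not equal the difference $r I(P_S,\widetilde T)-I(P_X,\widetilde W)$ in general. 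The paper instead does block surgery: pick $r'\le r$ with $r' I(P_S,\widetilde T)+\delta_n=I(P_X,\widetilde W)-\gamma_n$, simulate $\widetilde T^{\times\ceil{r'n}}$ on a sub-block of $\bm s^{\ceil{rn}}$ whose empirical type is $O(1/n)$-close to $P_S$, and pad the remaining $\ceil{rn}-\ceil{r'n}$ coordinates with $P_Z^{\times(\ceil{rn}-\ceil{r'n})}$, $P_Z$ the minimizer in $I(P_S,\widetilde T)$; the extra $D_{\max}$ cost is then $(\ceil{rn}-\ceil{r'n})D(P_S''\widetilde T\|P_S'' P_Z)\approx n\big(rI(P_S,\widetilde T)-I(P_X,\widetilde W)\big)$, which is the claimed penalty. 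You flag this bookkeeping as "the main obstacle" but do not supply the construction, so the proof of the very term that distinguishes this result is missing.

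Second, your choice $V=\widetilde T^{\times\ceil{rn}}(\bm s)$ forces a chained bound $D_{\max}^{\eta}(V\|N\circ W)\le D_{\max}^{\eta_1}(V\|N\circ\widetilde W)+D_{\max}^{\eta_2}(N\circ\widetilde W\|N\circ W)$ with $\eta_1+\eta_2\le\eta$, but this additive smooth triangle inequality is not valid in general: if the first leg is achieved with value $\lambda_1$, the smoothing mass of the middle distribution gets inflated by a factor $e^{\lambda_1}$ when transferred to $V$. In your good case this is harmless because $V$ is within TV $e^{-\sqrt n\log n}$ of $N\circ\widetilde W$ (so one simply re-uses the smoothing of the middle distribution), but in your bad case the first leg is supposed to carry a $\Theta(n)$ value, and the chaining collapses. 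The paper avoids the issue entirely by taking $V$ to be the protocol's own output under $\widetilde W$ (tensored with the $\widetilde T$-block on the padded coordinates), so each term is handled by a single smoothing together with the data-processing inequality (Lemma~\ref{lem:DPI-SC}, using the marginal $P_X^{\times n}$) and the almost-i.i.d.\ continuity bound of Lemma~\ref{lem:al-iid}. A smaller point: "the encoder selects $N_{P_S}$ according to the observed input type" is not an admissible single SR protocol, since the decoder cannot learn which $N_{P_S}$ was chosen; the correct reduction is the paper's route via Sion's minimax theorem, restriction to permutation-invariant input distributions, and the type decomposition with its polynomial prefactor, which reaches the same per-type statement legitimately.
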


Combining Proposition \ref{prop:ach-var-tv} and Proposition \ref{prop:ach-var-u} with $\mathbb{A}=(0,1)$ gives the achievability of Theorem \ref{thm:classical} for the TV distance. To prove Proposition \ref{prop:ach-var-tv}, the following inequality proves to be useful (see Appendix \ref{sec:lemmas} for a proof). 
\begin{lemma}\label{lem:reverse-chernoff}
    Let $P, Q \in \cP(\cX)$. We have that for all $\eta\in (0,\frac{1}{2})$:
    \begin{align}
        \tr{P \wedge Q} \ge (1-2\eta) \sup_{ V \in \cP(\cX)}\inf_{0 < \alpha < 1} \exp\left(-\alpha D_{\max}^{\eta}(V \| P) -(1-\alpha) D_{\max}^{\eta}(V \| Q) \right).
    \end{align}
\end{lemma}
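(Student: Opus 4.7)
The plan is to reduce the lemma to a pointwise bound for each fixed $V$. Fix $V \in \cP(\cX)$ and write $\lambda_1 := D_{\max}^{\eta}(V\|P)$ and $\lambda_2 := D_{\max}^{\eta}(V\|Q)$. Since the map $\alpha\mapsto \alpha\lambda_1 + (1-\alpha)\lambda_2$ is linear on the open interval $(0,1)$, its supremum equals $\max(\lambda_1,\lambda_2)$, and therefore
\[
\inf_{0<\alpha<1}\exp\bigl(-\alpha\lambda_1-(1-\alpha)\lambda_2\bigr) \;=\; \exp\bigl(-\max(\lambda_1,\lambda_2)\bigr).
\]
So the lemma will follow once I establish $\tr{P\wedge Q}\ge(1-2\eta)\exp(-\max(\lambda_1,\lambda_2))$ for every fixed $V$; taking the supremum over $V$ on the right-hand side finishes.

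To obtain that pointwise bound, I would first unpack the smooth max-divergence: pick (near-)optimal smoothings $\widetilde{V}^1,\widetilde{V}^2 \in \cP(\cX)$ satisfying $\operatorname{TV}(\widetilde{V}^i,V)\le\eta$, $\widetilde{V}^1\le \exp(\lambda_1)P$, and $\widetilde{V}^2\le \exp(\lambda_2)Q$. Setting $\mu:=\max(\lambda_1,\lambda_2)$ lets me weaken both dominations into a single common prefactor, $\widetilde{V}^1\le \exp(\mu) P$ and $\widetilde{V}^2\le \exp(\mu) Q$, which absorbs the asymmetry between the two divergences.

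The key step is the monotonicity of the pointwise minimum. From $\widetilde{V}^1\wedge\widetilde{V}^2 \le \widetilde{V}^1 \le \exp(\mu)\, P$ and $\widetilde{V}^1\wedge\widetilde{V}^2 \le \widetilde{V}^2 \le \exp(\mu)\, Q$, I immediately deduce $\widetilde{V}^1\wedge\widetilde{V}^2 \le \exp(\mu)\,(P\wedge Q)$ pointwise, so tracing and rearranging gives $\tr{P\wedge Q}\ge \exp(-\mu)\tr{\widetilde{V}^1\wedge\widetilde{V}^2}$. The last trace equals $1-\operatorname{TV}(\widetilde{V}^1,\widetilde{V}^2)$, and the triangle inequality with $V$ as the midpoint bounds $\operatorname{TV}(\widetilde{V}^1,\widetilde{V}^2)\le \operatorname{TV}(\widetilde{V}^1,V)+\operatorname{TV}(V,\widetilde{V}^2)\le 2\eta$, yielding the claim.

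I do not anticipate a substantial obstacle; the argument is a short chain hinging on the monotonicity of $\wedge$ and the TV triangle inequality. The only mild care-point is that the infimum defining $D_{\max}^{\eta}$ need not be attained. This is easily handled by choosing $\widetilde{V}^i$ with $D_{\max}(\widetilde{V}^i\|\cdot)\le \lambda_i+\varepsilon$ for arbitrarily small $\varepsilon>0$ and letting $\varepsilon\to 0$, a limit that passes harmlessly through the outer $\sup_V$ in the lemma.
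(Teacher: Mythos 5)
Your proof is correct and follows essentially the same route as the paper's: unpack the smoothed max-divergences into dominations $\widetilde{V}^i \le e^{\lambda_i}P$ (resp.\ $e^{\lambda_i}Q$), use monotonicity of the pointwise minimum together with the TV triangle inequality to get $\tr{P\wedge Q}\ge(1-2\eta)e^{-\max(\lambda_1,\lambda_2)}$, and identify $e^{-\max(\lambda_1,\lambda_2)}$ with the infimum over $\alpha$. Your extra $\varepsilon$-care for non-attainment of the smoothing infimum is harmless (and in fact unnecessary over a finite alphabet, where the infimum is attained), so the argument matches the paper's.
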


\begin{proof}[Proof of Proposition~\ref{prop:ach-var-tv}]
Observe that by Sion's minimax theorem \cite{sion1958general}:
\begin{align}
\mathrm{Succ}_{\rm{TV}}^{\rm{SR}}\left( W^{\times n} \rightarrow  T^{\times \ceil{rn}}\right)
&= \sup_{N\in \rm{SR}}\inf_{\bm{s}^{\ceil{rn}}\in \cS^{\ceil{rn}}}  1-\mathrm{TV}\left(({N}\circ W^{\times n})_{\bm{s}^{\ceil{rn}}} , T^{\times \ceil{rn}}_{\bm{s}^{\ceil{rn}}}\right) 
 \\&\ge  \sup_{N\in \rm{SR}}\inf_{\bm{s}^{\ceil{rn}}\in \cS^{\ceil{rn}}}  \tr{({N}\circ W^{\times n})_{\bm{s}^{\ceil{rn}}} \wedge T^{\times \ceil{rn}}_{\bm{s}^{\ceil{rn}}}}
   \\ &=   \inf_{ P_{\bm{S}^{ \ceil{rn}}} } \sup_{N\in \rm{SR}}  \mathbb{E}_{\bm{s}^{\ceil{rn}} \sim P_{\bm{S}^{ \ceil{rn}}}} \tr{({N}\circ W^{\times n})_{\bm{s}^{\ceil{rn}}} \wedge T^{\times \ceil{rn}}_{\bm{s}^{\ceil{rn}}}}
. 
\end{align}
Furthermore, we can restrict the minimization on $P_{\bm{S}^{ \ceil{rn}}}$ to be on permutation invariant probability distributions. In this case, there is a type $t\in \cT_{\ceil{rn}}(\cS)$ such that $P_{\bm{S}^{ \ceil{rn}}}\mge (\ceil{rn}+1)^{-|\cS|} \frac{1}{|t|}\sum_{\bm{s}^{\ceil{rn}} \in t} \bid\{\bm{s}^{\ceil{rn}} \in t\}$ (see Appendix \ref{sec-types}). All elements in the type $t$ have the same empirical distribution denoted $P_{S}$. Hence for $\bm{s}^{\ceil{rn}}(t) \sim t$, we have that:
\begin{align}
\mathrm{Succ}_{\rm{TV}}^{\rm{SR}}\left( W^{\times n} \rightarrow  T^{\times \ceil{rn}}\right)
&\ge   \inf_{ P_{\bm{S}^{ \ceil{rn}}} } \sup_{N\in \rm{SR}}  \mathbb{E}_{\bm{s}^{\ceil{rn}} \sim P_{\bm{S}^{ \ceil{rn}}}}\tr{({N}\circ W^{\times n})_{\bm{s}^{\ceil{rn}}} \wedge T^{\times \ceil{rn}}_{\bm{s}^{\ceil{rn}}}}
\\&\ge  (\ceil{rn}+1)^{-|\cS|}  \inf_{ t\in \cT_{\ceil{rn}}(\cS)} \sup_{N\in \rm{SR}}  \tr{({N}\circ W^{\times n})_{\bm{s}^{\ceil{rn}}} \wedge T^{\times \ceil{rn}}_{\bm{s}^{\ceil{rn}}}}.
\end{align}
\sloppy Fixing $t$ and $P_{S}$ (the empirical distribution of $t$), we make cases depending on the comparison between $r \cdot  I(P_{S}, \widetilde{T}_{Z|S} )$ and $ I(P_{X}, \widetilde{W}_{Y|X} ) $. Let  $P_Z$ such that   $I(P_{S}, \widetilde{T}_{Z|S} ) = D(P_{S} \widetilde{T}_{Z|S} \| P_{S}P_Z)$. 
Let $\delta_{n} = \frac{2}{n^{1/4}}\sqrt{\log(n) \var(\widetilde{T}_{Z|S} )} $ and  $\gamma_{n} =\frac{2}{n^{1/4}} \sqrt{\log(n) \var(\widetilde{W}_{Y|X})}$. 

\textbf{Case $1$. $r\cdot I(P_{S}, \widetilde{T}_{Z|S} ) +\delta_{n} \le I(P_{X}, \widetilde{W}_{Y|X}) - \gamma_{n}.$}
			
By Proposition~\ref{prop:1order-improved}, for sufficiently large  $n$, $\widetilde{W}_{\bm{Y}|\bm{X}}^{\times n}$ can be converted to $\widetilde{T}_{\bm{Z}|\bm{S}}^{\times \ceil{rn}}$ up to an error $\exp(-\sqrt{n}\log(n))$ (under  input  $\bm{s}^{\ceil{rn}}\in t$)  using a SR strategy $N$ of marginal $P_{\bm{X}}^{\times n}$. Hence, we have that for constant  $\eta \in (0, \frac{1}{2})$ and $T_{\min} = \min_{s, z}T(z|s)\bid\{T(z|s)>0\}$:
\begin{align}
&-\log  \tr{({N}\circ W^{\times n})_{\bm{s}^{\ceil{rn}}} \wedge T^{\times \ceil{rn}}_{\bm{s}^{\ceil{rn}}}} + \log(1-2\eta)
\\&\overset{(a)}\le \sup_{0< \alpha <1}  \alpha D_{\max}^{\eta}\left(  (N\circ \widetilde{W}^{\times n})_{\bm{s}^{\ceil{rn}}(t)} \middle\|  (N\circ W^{\times n})_{\bm{s}^{\ceil{rn}}(t)}\right) + (1-\alpha) D_{\max}^{\eta}\left(  (N\circ \widetilde{W}^{\times n})_{\bm{s}^{\ceil{rn}}(t)}\middle\|  T^{\times \ceil{rn}}_{\bm{s}^{\ceil{rn}}(t)} \right)
\\&\overset{(b)}\le \sup_{0< \alpha <1}  \alpha D_{\max}^{\eta}\left(  P_{\bm{X}}^{\times n}\widetilde{W}_{\bm{Y}|\bm{X}}^{\times n} \middle\|  P_{\bm{X}}^{\times n} W^{\times n}_{\bm{Y}|\bm{X}}\right) + (1-\alpha) D_{\max}^{\eta}\left(  (N\circ \widetilde{W}^{\times n})_{\bm{s}^{\ceil{rn}}(t)}\middle\|  T^{\times \ceil{rn}}_{\bm{s}^{\ceil{rn}}(t)} \right)
\\&\overset{(c)}\le \sup_{0< \alpha <1} n\alpha D\left(P_{X} \widetilde{W}_{Y|X}\middle\|  P_{X} W_{Y|X}\right) + (1-\alpha) \sum_{t=1}^{\ceil{rn}}D\left(   \widetilde{T}_{s_{t}}\middle\|  T_{s_{t}} \right) +\cO(\sqrt{n}) +(\sqrt{n}+1)e^{-\sqrt{n}\log(n)}T_{\min}^{-\sqrt{n}}
\\&=\sup_{0< \alpha < 1}  n  \alpha D\left(P_{X} \widetilde{W}_{Y|X}\middle\| P_{X} W_{Y|X}\right) +  \ceil{rn}(1-\alpha) D\left( P_{S} \widetilde{T}_{Z|S} \middle\|  P_{S} T_{Z|S} \right) +\cO(\sqrt{n})
\end{align}
where in $(a)$ we used Lemma~\ref{lem:reverse-chernoff}; in $(b)$ we used the data-processing inequality (Lemma \ref{lem:DPI-SC}, note that $N$ is  of marginal $P_{\bm{X}}^{\times n}$); in $(c)$ we used Lemma \ref{lem:iid} and Lemma \ref{lem:al-iid}.

\textbf{Case $2$. $r\cdot I(P_{S}, \widetilde{T}_{Z|S} ) +\delta_{n} \ge I(P_{X}, \widetilde{W}_{Y|X}) - \gamma_{n}.$}

Let $r' \le r $ such that $r'\cdot I(P_{S}, \widetilde{T}_{Z|S} ) +\delta_{n} = I(P_{X}, \widetilde{W}_{Y|X}) - \gamma_{n}$. By Proposition~\ref{prop:1order-improved},   for sufficiently large $n$, $\widetilde{W}_{\bm{Y}|\bm{X}}^{\times n}$ can be converted to $\widetilde{T}_{\bm{Z}|\bm{S}}^{\times \ceil{r'n}}$ up to an error $\exp(-\sqrt{n}\log(n))$  (under any input $\bm{s}^{\ceil{r'n}}$ of type $P_{S}'$ satisfying $\mathrm{TV}(P_{S}',P_{S})\le \cO\left(\frac{1}{n}\right)$) using a SR strategy $N$ of marginal $P_{\bm{X}}^{\times n}$.

Let  $P_Z$ be a probability distribution satisfying  $D(P_{S} \widetilde{T}_{Z|S} \|P_{S}P_Z) = \inf_{P_Z}D(P_{S} \widetilde{T}_{Z|S} \|P_{S}P_Z)$ and  $\bm{s}^{\ceil{r'n}}$ a sub-string of $\bm{s}^{\ceil{rn}}$ of type $P_{S}'$  such that $\mathrm{TV}(P_S', P_S)\le \cO\left(\frac{1}{n}\right)$. Let $\bm{s}^{\ceil{rn}-\ceil{r'n}} = \bm{s}^{\ceil{rn}}\setminus \bm{s}^{\ceil{r'n}}$ and define   
 \begin{align}
 \widetilde{N}(\bm{z}^{\ceil{rn}}\bm{x}^{n}|\bm{y}^{n}\bm{s}^{\ceil{rn}}) = {N}(\bm{z}^{\ceil{r'n}}\bm{x}^{n}|\bm{y}^{n}\bm{s}^{\ceil{r'n}})\times P_{\bm{Z}}^{\times \ceil{rn}-\ceil{r'n}}(\bm{z}^{\ceil{rn}-\ceil{r'n}}). 
\end{align}
Note that $\widetilde{N}$ corresponds to a valid SR strategy. Let $P''_{S}$ be the empirical distribution of $\bm{s}^{\ceil{rn}-\ceil{r'n}} = \bm{s}^{\ceil{rn}}\setminus \bm{s}^{\ceil{r'n}}$, we have also $\mathrm{TV}(P_{S}'',P_{S})\le \cO\left(\frac{1}{n}\right)$. So by the same reasoning as in Case $1$, we obtain for constant $\eta\in (0, \frac{1}{2})$:
 
\begin{align}
&-\log  \tr{(\widetilde{N}\circ W^{\times n})_{\bm{s}^{\ceil{rn}}} \wedge T^{\times \ceil{rn}}_{\bm{s}^{\ceil{rn}}}} + \log(1-2\eta)
\\&\overset{(a)}\le \sup_{0< \alpha <1}  \alpha D_{\max}^{\eta}\left(  (N\circ \widetilde{W}^{\times n})_{\bm{s}^{\ceil{r'n}}} \times \widetilde{T}^{\times \ceil{rn}-\ceil{r'n}}_{\bm{s}^{\ceil{rn}-\ceil{r'n}}}\middle\|  (N\circ W^{\times n})_{\bm{s}^{\ceil{r'n}}}\times P_{\bm{Z}}^{\times \ceil{rn}-\ceil{r'n}}\right) 
\\&\qquad\qquad  + (1-\alpha) D_{\max}^{\eta}\left(  (N\circ \widetilde{W}^{\times n})_{\bm{s}^{\ceil{r'n}}}\times \widetilde{T}^{\times \ceil{rn}-\ceil{r'n}}_{\bm{s}^{\ceil{rn}-\ceil{r'n}}}\middle\|  T^{\times \ceil{rn}}_{\bm{s}^{\ceil{rn}}} \right)
\\&\overset{(b)}\le \sup_{0< \alpha <1}  \alpha D_{\max}^{\eta}\left(  P_{\bm{X}}^{\times n}\widetilde{W}_{\bm{Y}|\bm{X}}^{\times n} \times \widetilde{T}^{\times \ceil{rn}-\ceil{r'n}}_{\bm{s}^{\ceil{rn}-\ceil{r'n}}}\middle\|  P_{\bm{X}}^{\times n} W^{\times n}_{\bm{Y}|\bm{X}}\times P_{\bm{Z}}^{\times \ceil{rn}-\ceil{r'n}}\right) 
\\&\qquad \qquad + (1-\alpha) D_{\max}^{\eta}\left(  (N\circ \widetilde{W}^{\times n})_{\bm{s}^{\ceil{r'n}}}\times \widetilde{T}^{\times \ceil{rn}-\ceil{r'n}}_{\bm{s}^{\ceil{rn}-\ceil{r'n}}}\middle\|  T^{\times \ceil{rn}}_{\bm{s}^{\ceil{rn}}} \right)
\\&\overset{(c)}\le \sup_{0< \alpha <1} n\alpha D\left(P_{X} \widetilde{W}_{Y|X}\middle\|  P_{X} W_{Y|X}\right)  + (\ceil{rn}-\ceil{r'n})\alpha D\left(P_{S}'' \widetilde{T}_{Z|S}\middle\| P_{S}'' P_Z\right)
\\&\qquad \qquad + \ceil{rn}(1-\alpha) D\left( P_{S} \widetilde{T}_{Z|S} \middle\|  P_{S} T_{Z|S} \right)  +\cO(\sqrt{n}) +(\sqrt{n}+1)e^{-\sqrt{n}\log(n)}T_{\min}^{-\sqrt{n}}
\\&\overset{(d)}=\sup_{0< \alpha < 1}  n\alpha D\left(P_{X} \widetilde{W}_{Y|X}\middle\|  P_{X} W_{Y|X}\right) +\ceil{rn}(1-\alpha) D\left( P_{S} \widetilde{T}_{Z|S} \middle\|  P_{S} T_{Z|S} \right) 
\\&\qquad\qquad +\ceil{rn}\alpha D\left(P_{S} \widetilde{T}_{Z|S}\middle\| P_{S} P_Z\right) - n\alpha D\left(P_{X} \widetilde{W}_{Y|X}\middle\|  P_{X} P_{Y}\right) +\cO(n^{3/4}\sqrt{\log(n)})
\end{align}
where in $(a)$ we used Lemma~\ref{lem:reverse-chernoff}; in $(b)$ we used the data-processing inequality (Lemma \ref{lem:DPI-SC}, note that $N$ is  of marginal $P_{\bm{X}}^{\times n}$); in $(c)$ we used Lemma \ref{lem:iid} and Lemma \ref{lem:al-iid}; in $(d)$  we used $\mathrm{TV}(P_{S}'',P_{S})\le \cO\left(\frac{1}{n}\right)$,  $r'\cdot I(P_{S}, \widetilde{T}_{Z|S} ) +\delta_{n} = I(P_{X}, \widetilde{W}_{Y|X}) - \gamma_{n}$ and $n\alpha (\gamma_{n}+\delta_{n}) = \cO(n^{3/4}\sqrt{\log(n)})$ (the variance terms are bounded by constants independent of $n$ \cite[Corollary III.5]{Dupuis2019Jul}).
 
We have the non-asymptotic upper bound in both cases: 
\begin{align}
    &-\frac{1}{n}\log \mathrm{Succ}_{\rm{TV}}^{\rm{SR}}\left( W^{\times n} \rightarrow  T^{\times \ceil{rn}}\right)    \\&\le \sup_{ t\in \cT_{\ceil{rn}}(\cS) }  \inf_{N\in \rm{SR}}-\frac{1}{n}\log  \tr{({N}\circ W^{\times n})_{\bm{s}^{\ceil{rn}}(t)} \wedge T^{\times \ceil{rn}}_{\bm{s}^{\ceil{rn}}(t)}} +\frac{|\cS|\log(\ceil{rn}+1)}{n}
    \\&\le    \sup_{P_{S}} \inf_{\widetilde{T}, \widetilde{W}} \sup_{0< \alpha < 1} \alpha D\left(P_{X}\widetilde{W}_{Y|X}\middle\|  P_{X} W_{Y|X}\right) +  r(1-\alpha)D\left( P_{S}\widetilde{T}_{Z|S}\middle\| P_{S} T_{Z|S}\right) 
    \\&\qquad \qquad\qquad\qquad +\alpha \left|r\cdot I(P_{S}, \widetilde{T}_{Z|S} ) - I(P_{X}, \widetilde{W}_{Y|X})\right|_+
 + \cO\left(\frac{\sqrt{\log(n)}}{n^{1/4}}\right).
\end{align}
Finally, taking the limit $n\rightarrow \infty$ and optimizing over $P_{X}$ we obtain the required bound:
\begin{align}
\lim_{n \rightarrow \infty}-\frac{1}{n}\log \mathrm{Succ}_{\rm{TV}}^{\rm{SR}}\left( W^{\times n} \rightarrow  T^{\times \ceil{rn}}\right)
&\le \inf_{P_{X}} \sup_{P_{S}} \inf_{\widetilde{T}, \widetilde{W} }\sup_{0< \alpha < 1} \alpha\left|r \cdot  I(P_{S}, \widetilde{T}_{Z|S} )-  I(P_{X}, \widetilde{W}_{Y|X} ) \right|_{+}
  \\&\;\;\;+\alpha D\left(P_{X}\widetilde{W}_{Y|X}\middle\|  P_{X} W_{Y|X}\right) +  r(1-\alpha)D\left( P_{S}\widetilde{T}_{Z|S}\middle\| P_{S} T_{Z|S}\right).
\end{align}
\end{proof}


\subsubsection{Purified distance}

Now, we move to the purified distance and prove the following proposition:

\begin{proposition}\label{prop:ach-var-pur}
Let $W_{Y|X}$ and $T_{Z|S}$ be two classical channels over finite alphabets. We have that for all rates $r\ge 0$:  
\begin{align}
&\lim_{n \rightarrow \infty}-\frac{1}{n}\log \mathrm{Succ}_{\rm{Pur}}^{\rm{SR}}\left( W^{\times n} \rightarrow  T^{\times \ceil{rn}}\right)
\\&\le \inf_{P_{X}} \sup_{P_{S}} \inf_{\widetilde{W}_{Y|X}} \inf_{\widetilde{T}_{Z|S} } D\left(P_X\widetilde{W}_{Y|X}\middle\|  p_XW_{Y|X}\right) + r\cdot   D\left(P_S \widetilde{T}_{Z|S}\middle\|P_S T_{Z|S}\right) +\left|r \cdot  I(P_{S}, \widetilde{T}_{Z|S} )-  I(P_{X}, \widetilde{W}_{Y|X} ) \right|_{+},
\end{align}
where the minimization is over channels $\widetilde{W}_{Y|X}$ and $\widetilde{T}_{Z|S}$. 
\end{proposition}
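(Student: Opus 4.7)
The plan is to mirror the proof of Proposition~\ref{prop:ach-var-tv}, replacing the total-variation based inequality (Lemma~\ref{lem:reverse-chernoff}) with the purified-distance analogue
\begin{equation}
-\log\bigl(1-\operatorname{Pur}(\rho, \sigma)\bigr) \le \log 2 + D(\tau\| \rho) + D(\tau\| \sigma),
\end{equation}
valid for any probability distribution $\tau$, as established in~\cite{li2024operational} and already invoked in the overview. The absence of an $\alpha$-optimization in this bound, compared to the $\sup_{\alpha\in(0,1)}$ present in Lemma~\ref{lem:reverse-chernoff}, is exactly what eliminates the outer $\sup_\alpha$ from the purified-distance strong converse exponent.

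First, I would apply Sion's minimax theorem together with a restriction to permutation-invariant input distributions, reducing, up to a polynomial prefactor, to a fixed type $t\in\cT_{\ceil{rn}}(\cS)$ with empirical distribution $P_S$. Fix then arbitrary auxiliary channels $\widetilde{W}_{Y|X}$, $\widetilde{T}_{Z|S}$ and an input distribution $P_X$, and split into two cases based on whether $r\cdot I(P_S, \widetilde{T}_{Z|S}) + \delta_n \le I(P_X, \widetilde{W}_{Y|X}) - \gamma_n$ or not. In the sub-capacity case, Proposition~\ref{prop:1order-improved} produces a shared-randomness strategy $N$ with marginal $P_X^{\times n}$ that converts $\widetilde{W}^{\times n}$ to $\widetilde{T}^{\times \ceil{rn}}$ with error $\exp(-\sqrt{n}\log n)$. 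Applying the purified-distance inequality with $\tau = (N\circ \widetilde{W}^{\times n})_{\bm{s}^{\ceil{rn}}(t)}$, the data-processing inequality together with the marginal property of $N$ controls the first KL term by $n\cdot D(P_X \widetilde{W}_{Y|X}\| P_X W_{Y|X})$, while the second is controlled by $\ceil{rn}\cdot D(P_S \widetilde{T}_{Z|S}\| P_S T_{Z|S})+\cO(\sqrt{n})$, yielding the first two terms of the target bound with a vanishing $|\cdot|_+$ contribution (since we are in Case~1).

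The above-capacity case proceeds by choosing $r'<r$ saturating the Case~1 condition with equality, applying Proposition~\ref{prop:1order-improved} to convert $\widetilde{W}^{\times n}$ to $\widetilde{T}^{\times \ceil{r'n}}$, and padding the remaining $\ceil{rn}-\ceil{r'n}$ coordinates with i.i.d.\ samples from the minimizer $P_Z$ satisfying $I(P_S, \widetilde{T}_{Z|S}) = D(P_S \widetilde{T}_{Z|S}\| P_S P_Z)$. The padded block contributes an additional $(r-r')\cdot I(P_S, \widetilde{T}_{Z|S}) \approx \bigl|r\cdot I(P_S, \widetilde{T}_{Z|S}) - I(P_X, \widetilde{W}_{Y|X})\bigr|_+$, which supplies the $|\cdot|_+$ term. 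The main technical obstacle is controlling the $\exp(-\sqrt{n}\log n)$ slack between the actual strategy output and the ideal i.i.d.\ target inside the KL divergence against $T^{\times \ceil{rn}}_{\bm{s}^{\ceil{rn}}(t)}$; this is resolved, exactly as in the TV proof, by passing through smoothed max-divergences and invoking the i.i.d.\ expansions of Lemmas~\ref{lem:iid} and~\ref{lem:al-iid}, whose $\cO(\sqrt{n})$ cost vanishes after dividing by $n$. The proof concludes by taking $n\to\infty$ and optimizing over $P_X$, $P_S$, $\widetilde{W}$ and $\widetilde{T}$ to obtain the claimed variational expression.
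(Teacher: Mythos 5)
Your overall architecture coincides with the paper's: lower bound the success by fidelity via $1-\operatorname{Pur}\ge\tfrac12 F$, reduce to a fixed type by Sion's minimax theorem and permutation invariance, split into the two cases around $r\cdot I(P_S,\widetilde{T}_{Z|S})+\delta_n \lessgtr I(P_X,\widetilde{W}_{Y|X})-\gamma_n$, invoke Proposition~\ref{prop:1order-improved} with $\tau=(N\circ\widetilde{W}^{\times n})_{\bm{s}^{\ceil{rn}}(t)}$ in the relative-entropy bound for the fidelity (Lemma~\ref{Lem:fid-D}), use the data-processing inequality (Lemma~\ref{lem:DPI-SC}) for the term against $(N\circ W^{\times n})$, and pad with $P_Z^{\times(\ceil{rn}-\ceil{r'n})}$ in the above-capacity case to generate the $|\cdot|_+$ term. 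All of that matches the paper.

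However, your proposed treatment of the conversion-error slack is a genuine gap. You claim the term $D\big((N\circ\widetilde{W}^{\times n})_{\bm{s}^{\ceil{rn}}(t)}\,\big\|\,T^{\times\ceil{rn}}_{\bm{s}^{\ceil{rn}}(t)}\big)$ can be handled ``exactly as in the TV proof, by passing through smoothed max-divergences and invoking Lemmas~\ref{lem:iid} and~\ref{lem:al-iid}''. Those lemmas give upper bounds on $D_{\max}^{\eta}$, and an upper bound on a smoothed max-divergence does not imply any upper bound on the plain relative entropy appearing in Lemma~\ref{Lem:fid-D}: smoothing modifies the first argument, and $D(P\|Q)$ can be arbitrarily larger than $D_{\max}^{\eta}(P\|Q)$ (e.g., when $P$ places $\eta$ mass where $Q$ is tiny). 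Nor can you swap the relative entropies in the fidelity inequality for smoothed max-divergences by going through $F(P,Q)\ge(\mathrm{Tr}[P\wedge Q])^2$ and Lemma~\ref{lem:reverse-chernoff}, since that costs a factor of $2$ in the exponent and would not yield the claimed bound. The step the paper actually uses — and which your argument needs — is a continuity bound for the relative entropy in its first argument (here \cite[Corollary 5.9]{Bluhm2023May}): since $(N\circ\widetilde{W}^{\times n})_{\bm{s}^{\ceil{rn}}(t)}$ is within $\exp(-\sqrt{n}\log n)$ in TV of $\widetilde{T}^{\times\ceil{rn}}_{\bm{s}^{\ceil{rn}}(t)}$ and the reference $T^{\times\ceil{rn}}_{\bm{s}^{\ceil{rn}}(t)}$ has nonzero entries at least $T_{\min}^{\ceil{rn}}$, one gets $D\big((N\circ\widetilde{W}^{\times n})_{\bm{s}^{\ceil{rn}}(t)}\,\big\|\,T^{\times\ceil{rn}}_{\bm{s}^{\ceil{rn}}(t)}\big)\le \ceil{rn}\,D(P_S\widetilde{T}_{Z|S}\|P_S T_{Z|S})+\ceil{rn}\,e^{-\sqrt{n}\log n}\log(T_{\min}^{-1})+\cO(1)$, whose correction vanishes after dividing by $n$; the same device is needed in your Case~2 for the block of length $\ceil{r'n}$ (and against $P_Z$). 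With this replacement your argument goes through; as written, the slack-control step would fail.
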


Combining Proposition \ref{prop:ach-var-pur} and Proposition \ref{prop:ach-var-u} with $\mathbb{A}=\{\frac{1}{2}\}$ gives the achievability of Theorem \ref{thm:classical} for the purified distance. 

\begin{proof}[Proof of Proposition~\ref{prop:ach-var-pur}]
Observe that by the inequality $  1-\operatorname{Pur}(P,Q)\ge \frac{1}{2}F(P,Q)$ and Sion's minimax theorem \cite{sion1958general} we have that:
\begin{align}
\mathrm{Succ}_{\rm{Pur}}^{\rm{SR}}\left( W^{\times n} \rightarrow  T^{\times \ceil{rn}}\right)
&= \sup_{N\in \rm{SR}}\inf_{\bm{s}^{\ceil{rn}}\in \cS^{\ceil{rn}}}  1-\operatorname{Pur}\left(({N}\circ W^{\times n})_{\bm{s}^{\ceil{rn}}} , T^{\times \ceil{rn}}_{\bm{s}^{\ceil{rn}}}\right) 
 \\&\ge   \frac{1}{2}\sup_{N\in \rm{SR}}\inf_{\bm{s}^{\ceil{rn}}\in \cS^{\ceil{rn}}}  F\left(({N}\circ W^{\times n})_{\bm{s}^{\ceil{rn}}} , T^{\times \ceil{rn}}_{\bm{s}^{\ceil{rn}}}\right) 
   \\ &=  \frac{1}{2} \inf_{ P_{\bm{S}^{ \ceil{rn}}} } \sup_{N\in \rm{SR}}  \mathbb{E}_{\bm{s}^{\ceil{rn}} \sim P_{\bm{S}^{ \ceil{rn}}}}F\left(({N}\circ W^{\times n})_{\bm{s}^{\ceil{rn}}} , T^{\times \ceil{rn}}_{\bm{s}^{\ceil{rn}}}\right). 
\end{align}
Furthermore, we can restrict the minimization on $P_{\bm{S}^{ \ceil{rn}}}$ to be on permutation invariant probability distributions. In this case, there is a type $t\in \cT_{\ceil{rn}}(\cS)$ such that $P_{\bm{S}^{ \ceil{rn}}}\mge (\ceil{rn}+1)^{-|\cS|} \frac{1}{|t|}\sum_{\bm{s}^{\ceil{rn}} \in t} \bid\{\bm{s}^{\ceil{rn}} \in t\}$ (see Appendix \ref{sec-types}). All elements in the type $t$ have the same empirical distribution denoted $P_{S}$. Hence for $\bm{s}^{\ceil{rn}}(t) \sim t$, we have that:
\begin{align}
\mathrm{Succ}_{\rm{Pur}}^{\rm{SR}}\left( W^{\times n} \rightarrow  T^{\times \ceil{rn}}\right)
&\ge   \frac{1}{2} \inf_{ P_{\bm{S}^{ \ceil{rn}}} } \sup_{N\in \rm{SR}}  \mathbb{E}_{\bm{s}^{\ceil{rn}} \sim P_{\bm{S}^{ \ceil{rn}}}}F\left(({N}\circ W^{\times n})_{\bm{s}^{\ceil{rn}}} , T^{\times \ceil{rn}}_{\bm{s}^{\ceil{rn}}}\right)
\\&\ge  \frac{1}{2}(\ceil{rn}+1)^{-|\cS|}  \inf_{ t\in \cT_{\ceil{rn}}(\cS)} \sup_{N\in \rm{SR}}  F\left(({N}\circ W^{\times n})_{\bm{s}^{\ceil{rn}}(t)} , T^{\times \ceil{rn}}_{\bm{s}^{\ceil{rn}}(t)}\right).
\end{align}
Fixing $t$ and $P_{S}$ (the empirical distribution of $t$), we make cases depending on the comparison between $r \cdot  I(P_{S}, \widetilde{T}_{Z|S} )$ and $ I(P_{X}, \widetilde{W}_{Y|X} ) $. Let  $P_Z$ such that   $I(P_{S}, \widetilde{T}_{Z|S} ) = D(P_{S} \widetilde{T}_{Z|S} \| P_{S}P_Z)$. 

\textbf{Case $1$. $r\cdot I(P_{S}, \widetilde{T}_{Z|S} ) +\delta_{n} \le I(P_{X}, \widetilde{W}_{Y|X}) - \gamma_{n}.$}

Recall the notation $T_{\min} = \min_{s, z}T(z|s)\bid\{T(z|s)>0\}$. 
By Proposition~\ref{prop:1order-improved}, for sufficiently large  $n$, $\widetilde{W}_{\bm{Y}|\bm{X}}^{\times n}$ can be converted to $\widetilde{T}_{\bm{Z}|\bm{S}}^{\times \ceil{rn}}$ up to an error $\exp(-\sqrt{n}\log(n))$ (under  input  $\bm{s}^{\ceil{rn}}\in t$)  using a SR strategy $N$ of marginal $P_{\bm{X}}^{\times n}$. Hence, we have that 
\begin{align}
&-\log  F\left( (N\circ W^{\times n})_{\bm{s}^{\ceil{rn}}(t)} , T^{\times \ceil{rn}}_{\bm{s}^{\ceil{rn}}(t)} \right)
\\&\le  D\left(  (N\circ \widetilde{W}^{\times n})_{\bm{s}^{\ceil{rn}}(t)} \middle\|  (N\circ W^{\times n})_{\bm{s}^{\ceil{rn}}(t)}\right) + D\left(  (N\circ \widetilde{W}^{\times n})_{\bm{s}^{\ceil{rn}}(t)}\middle\|  T^{\times \ceil{rn}}_{\bm{s}^{\ceil{rn}}(t)} \right)
\\&\le  \mathbb{E}_{\bm{x}^{n}\sim P_{\bm{X}}^{\times n}}D\left(\widetilde{W}^{\times n}_{\bm{x}^{n}}\middle\|  W^{\times n}_{\bm{x}^{n}}\right) + D\left(  (N\circ \widetilde{W}^{\times n})_{\bm{s}^{\ceil{rn}}(t)}\middle\|  T^{\times \ceil{rn}}_{\bm{s}^{\ceil{rn}}(t)} \right)
\\&\le n  D\left(P_X\widetilde{W}_{Y|X}\middle\|  p_XW_{Y|X}\right) + \ceil{rn}  D\left(P_S \widetilde{T}_{Z|S}\middle\|P_S T_{Z|S}\right) + \ceil{rn} \exp(-\sqrt{n}\log(n)) \log(T_{\min}^{-1}) +2\log(2),
\end{align}
where we use Lemma~\ref{Lem:fid-D} in the first inequality; we use  the data-processing inequality (Lemma \ref{lem:DPI-SC}, note that $N$ is  of marginal $P_{\bm{X}}^{\times n}$) in the second inequality; in the third inequality we use the continuity bound for the relative entropy \cite[Corollary 5.9]{Bluhm2023May} (see e.g., \href{https://arxiv.org/pdf/2208.00922}{Cor. 5.9}).

\textbf{Case $2$. $r\cdot I(P_{S}, \widetilde{T}_{Z|S} ) +\delta_{n} \ge I(P_{X}, \widetilde{W}_{Y|X}) - \gamma_{n}.$}

Let $r' \le r $ such that $r'\cdot I(P_{S}, \widetilde{T}_{Z|S} ) +\delta_{n} = I(P_{X}, \widetilde{W}_{Y|X}) - \gamma_{n}$. By Proposition~\ref{prop:1order-improved},   for sufficiently large $n$, $\widetilde{W}_{\bm{Y}|\bm{X}}^{\times n}$ can be converted to $\widetilde{T}_{\bm{Z}|\bm{S}}^{\times \ceil{r'n}}$ up to an error $\exp(-\sqrt{n}\log(n))$  (under any input $\bm{s}^{\ceil{r'n}}$ of type $P_{S}'$ satisfying $\mathrm{TV}(P_{S}',P_{S})\le \cO\left(\frac{1}{n}\right)$) using a SR strategy $N$ of marginal $P_{\bm{X}}^{\times n}$.

Let  $P_Z$ be a probability distribution satisfying  $D(P_{S} \widetilde{T}_{Z|S} \|P_{S}P_Z) = \inf_{P_Z}D(P_{S} \widetilde{T}_{Z|S} \|P_{S}P_Z)$ and let  $\bm{s}^{\ceil{r'n}}$ be  a substring of $\bm{s}^{\ceil{rn}}$ of type $P_{S}'$,  such that $\mathrm{TV}(P_S', P_S)\le \cO\left(\frac{1}{n}\right)$. Let $\bm{s}^{\ceil{rn}-\ceil{r'n}} = \bm{s}^{\ceil{rn}}\setminus \bm{s}^{\ceil{r'n}}$ and define   
 \begin{align}
 \widetilde{N}(\bm{z}^{\ceil{rn}}\bm{x}^{n}|\bm{y}^{n}\bm{s}^{\ceil{rn}}) = {N}(\bm{z}^{\ceil{r'n}}\bm{x}^{n}|\bm{y}^{n}\bm{s}^{\ceil{r'n}})\times P_{\bm{Z}}^{\times \ceil{rn}-\ceil{r'n}}(\bm{z}^{\ceil{rn}-\ceil{r'n}}).
\end{align}
Note that $\widetilde{N}$ corresponds to a valid SR strategy. 
 Let $P''_{S}$ be the empirical distribution of $\bm{s}^{\ceil{rn}-\ceil{r'n}} = \bm{s}^{\ceil{rn}}\setminus \bm{s}^{\ceil{r'n}}$, we have also $\mathrm{TV}(P_{S}'',P_{S})\le \cO\left(\frac{1}{n}\right)$. So by the same reasoning as in Case $1$, we obtain 
\begin{align}
&-\log F\left((\widetilde{N}\circ W^{\times n})_{\bm{s}^{\ceil{rn}}(t)} , T^{\times \ceil{rn}}_{\bm{s}^{\ceil{rn}}(t)}\right)
\\&=  -\log F\left( ({N}\circ W^{\times n})_{\bm{s}^{\ceil{r'n}}} , T^{\times \ceil{r'n}}_{\bm{s}^{\ceil{r'n}}}\right)-\log F\left(P_{\bm{Z}}^{\times \ceil{rn}-\ceil{r'n}} , T^{\times \ceil{rn}-\ceil{r'n}}_{\bm{s}^{\ceil{rn}-\ceil{r'n}}}\right)
\\&\overset{(a)}{\le}  D\left( (N\circ \widetilde{W}^{\times n})_{\bm{s}^{\ceil{r'n}}}\middle\| (N\circ W^{\times n})_{\bm{s}^{\ceil{r'n}}}\right) + D\left( (N\circ \widetilde{W}^{\times n})_{\bm{s}^{\ceil{r'n}}}\middle\| T^{\times \ceil{r'n}}_{\bm{s}^{\ceil{r'n}}}\right) 
\\&\qquad +D\left( \widetilde{T}^{\times \ceil{rn}-\ceil{r'n}}_{\bm{s}^{\ceil{rn}-\ceil{r'n}}}\middle\| T^{\times \ceil{rn}-\ceil{r'n}}_{\bm{s}^{\ceil{rn}-\ceil{r'n}}}\right) +D\left( \widetilde{T}^{\times \ceil{rn}-\ceil{r'n}}_{\bm{s}^{\ceil{rn}-\ceil{r'n}}}\middle\| P_{\bm{Z}}^{\times \ceil{rn}-\ceil{r'n}}\right) 
\\&\overset{(b)}{\le} \mathbb{E}_{\bm{x}^{n}\sim P_{\bm{X}}^{\times n}}D\left(\widetilde{W}^{\times n}_{\bm{x}^{n}}\middle\|  W^{\times n}_{\bm{x}^{n}}\right) + D\left(  \widetilde{T}^{\times \ceil{r'n}}_{\bm{s}^{\ceil{r'n}}}\middle\| T^{\times \ceil{r'n}}_{\bm{s}^{\ceil{r'n}}}\right) +\ceil{r'n} \exp(-\sqrt{n}\log(n))\log(T_{\min}^{-1})
\\&\qquad +\mathbb{E}_{s\sim P_{S}''}(\ceil{rn}-\ceil{r'n})D\left( \widetilde{T}_{s}\middle\| T_{s}\right) +(\ceil{rn}-\ceil{r'n})D\left( P_S\widetilde{T}_{Z|S}\middle\| P_S P_Z\right) +2\log(2)
\\&\overset{(c)}{\le} n D\left(P_X\widetilde{W}_{Y|X}\middle\|  p_XW_{Y|X}\right) +  \ceil{r'n}D\left( P_S\widetilde{T}_{Z|S}\middle\| P_ST_{Z|S}\right) +(\ceil{rn}-\ceil{r'n})D\left(P_S\widetilde{T}_{Z|S}\middle\|P_ST_{Z|S}\right) 
\\&\qquad +(\ceil{rn}-\ceil{r'n})D\left( P_S\widetilde{T}_{Z|S}\middle\| P_S P_Z\right) 
+\cO(1) 
\\&\overset{(d)}{\le} n  D\left(P_X\widetilde{W}_{Y|X}\middle\|  p_XW_{Y|X}\right) +  \ceil{rn}D\left( P_S\widetilde{T}_{Z|S}\middle\| P_ST_{Z|S}\right) +\ceil{rn}D\left( P_S\widetilde{T}_{Z|S}\middle\| P_S P_Z\right) 
\\&\qquad -n I(P_{X}, \widetilde{W}_{Y|X})
+ \cO(n^{3/4}\sqrt{\log(n)}) 
\end{align}
where in $(a)$ we used Lemma~\ref{Lem:fid-D}; in $(b)$ we used the data-processing inequality for non-signaling channels (Lemma \ref{lem:DPI-SC}) and the continuity bound of \cite[Corollary 5.9]{Bluhm2023May}; in $(c)$ we used the fact that $\mathrm{TV}(P_S', P_S)\le \cO\left(\frac{1}{n}\right)$ and $\mathrm{TV}(P_{S}'',P_{S})\le \cO\left(\frac{1}{n}\right)$; in $(d)$ we used  the definition of $r'\cdot I(P_{S}, \widetilde{T}_{Z|S} ) +\delta_{n} = I(P_{X}, \widetilde{W}_{Y|X}) - \gamma_{n}$ and $n (\gamma_{n}+\delta_{n}) = \cO(n^{3/4}\sqrt{\log(n)})$ (the variance terms are bounded by constants independent of $n$ \cite[Corollary III.5]{Dupuis2019Jul}).
  
We have the non-asymptotic upper bound in both cases: 
\begin{align}
    &-\frac{1}{n}\log \mathrm{Succ}_{\rm{Pur}}^{\rm{SR}}\left( W^{\times n} \rightarrow  T^{\times \ceil{rn}}\right)    \\&\le \sup_{ t\in \cT_{\ceil{rn}}(\cS) }  \inf_{N\in \rm{SR}}-\frac{1}{n}\log  F\left(({N}\circ W^{\times n})_{\bm{s}^{\ceil{rn}}(t)} , T^{\times \ceil{rn}}_{\bm{s}^{\ceil{rn}}(t)}\right) +\frac{|\cS|\log(\ceil{rn}+1)}{n}+\frac{\log 2}{n}
    \\&\le    \sup_{P_{S}}\!\inf_{\widetilde{T}, \widetilde{W}} D\left(\!P_{X}\widetilde{W}_{Y|X}\middle\|  P_{X} W_{Y|X}\!\right) +  rD\left(\! P_{S}\widetilde{T}_{Z|S}\middle\| P_{S} T_{Z|S}\!\right) +\left|r\cdot I(P_{S}, \widetilde{T}_{Z|S} ) - I(P_{X}, \widetilde{W}_{Y|X})\right|_+
 + \widetilde{\cO}\left(\!n^{-\frac{1}{4}}\!\right).
\end{align}
Finally, taking the limit $n\rightarrow \infty$ and optimizing over $P_{X}$ we obtain the required bound:
\begin{align}
&\lim_{n \rightarrow \infty}-\frac{1}{n}\log \mathrm{Succ}_{\rm{Pur}}^{\rm{SR}}\left( W^{\times n} \rightarrow  T^{\times \ceil{rn}}\right)
\\&\le \inf_{P_{X}} \sup_{P_{S}} \inf_{\widetilde{W}_{Y|X}} \inf_{\widetilde{T}_{Z|S} } D\left(P_X\widetilde{W}_{Y|X}\middle\|  p_XW_{Y|X}\right) + r\cdot   D\left(P_S \widetilde{T}_{Z|S}\middle\|P_S T_{Z|S}\right) +\left|r \cdot  I(P_{S}, \widetilde{T}_{Z|S} )-  I(P_{X}, \widetilde{W}_{Y|X} ) \right|_{+}\!.
\end{align}
\end{proof}
Finally, the achievability result of Theorem \ref{thm:SCE-Renyi} can be proved similarly using the variational property of the R\'enyi divergence \cite{vanErven2014Jun}: 
\begin{align}
   (1-\alpha) D_{\alpha}(P\|Q) = \inf_{R\in \cP(\cX)} \alpha D(R\|P) + (1-\alpha) D(R\|Q). 
\end{align}


\section{Classical-quantum channels}\label{sec:CQ}

In this section, we determine the strong converse exponents of entanglement-assisted and non-signaling assisted  classical-quantum channel interconversion. The main result of this section is stated in the following theorem.

\begin{theorem}
\label{thm:maincq}
For any classical-quantum channels $W_{X \rightarrow Y}$, $T_{S \rightarrow Z}$ and real $r \geq 0$, we have
\begin{equation}
\lim_{n \rightarrow \infty} -\frac{1}{n} \log \mathrm{Succ}_{\mathrm{Pur}}^{\cA}(W^{\otimes n}\rightarrow T^{\otimes  \ceil{rn}})
=\sup_{1<p<2} \frac{2-p}{p}\big(r\widetilde{I}_{\frac{p}{2}}(T_{S\rightarrow Z})-\widetilde{I}_{\frac{p}{2(p-1)}}(W_{X\rightarrow Y})   \big),
\end{equation}
where $\cA \in \{\mathrm{EA, NS}\}$. 
\end{theorem}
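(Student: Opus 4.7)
The proof naturally splits into the converse (lower bound, to be established for the strongest assistance NS) and the achievability (upper bound, to be established for the weaker assistance EA). Both parts follow the general template of the classical purified-distance analysis (Propositions~\ref{prop:classical-converse-u} and~\ref{prop:ach-var-pur}), but each ingredient must be upgraded to handle non-commutativity.

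\emph{Converse.} I would fix a non-signaling superchannel $\Pi$, write $\widetilde{T}^{\ceil{rn}} = \Pi\circ W^{\otimes n}$, and start from $1-\mathrm{Pur}(\rho,\sigma)\le F(\rho,\sigma)=\exp(-\widetilde{D}_{1/2}(\rho\|\sigma))$, which already gives
\[
-\tfrac{1}{n}\log \mathrm{Succ}_{\mathrm{Pur}}^{\mathrm{NS}} \ge \inf_{\Pi}\sup_{P_{\bm{S}^{\ceil{rn}}}}\tfrac{1}{n}\,\widetilde{D}_{1/2}\bigl(P_{\bm{S}^{\ceil{rn}}}\otimes \widetilde{T}^{\ceil{rn}}\,\big\|\,P_{\bm{S}^{\ceil{rn}}}\otimes T^{\otimes \ceil{rn}}\bigr).
\]
The key step is a non-commutative Hölder inequality: for any $1<p<2$ and any auxiliary $\sigma_{\bm Z}$, bound $\widetilde{Q}_{1/2}$ by a product of two weighted traces using conjugate exponents $p$ and $p/(p-1)$, yielding an inequality of the form $\frac{2-p}{p}\bigl[\widetilde{D}_{p/2}(P\otimes T^{\otimes \ceil{rn}}\|P\otimes\sigma_{\bm Z}) -\widetilde{D}_{p/(2(p-1))}(P\otimes \widetilde{T}^{\ceil{rn}}\|P\otimes\sigma_{\bm Z})\bigr]$. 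Since $p/(2(p-1))\ge 1/2$ for $p<2$, the DPI for sandwiched Rényi (Proposition~\ref{prop:mainpro}(ii)) applies; combined with the non-signaling property (the marginal of $\Pi$ on $\bm Z$ is independent of the input), it converts the second divergence into one over $W^{\otimes n}$ with a fixed input-independent state on the output. Optimizing over $\sigma_{\bm Z}$ and $P_{\bm{S}^{\ceil{rn}}}$ and invoking additivity of $\widetilde{I}_\alpha$ (Proposition~\ref{prop:mainpro}(vi)) single-letterizes the bound, and sup over $p\in(1,2)$ closes the converse.

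\emph{Achievability.} I would mirror the proof of Proposition~\ref{prop:ach-var-pur}. Using $1-\mathrm{Pur}(\rho,\sigma)\ge\tfrac{1}{2}F(\rho,\sigma)$, Sion's minimax, and permutation-symmetry (Lemma~\ref{lem:sym}) the problem reduces to fixing a type $P_S$ and maximizing an EA conversion fidelity $F$ over schemes $\Pi$. I would then invoke the quantum analog of the two-divergence bound from~\cite{li2024operational}, $-\log F(\rho,\sigma)\le \log 2 + D(\tau\|\rho)+D(\tau\|\sigma)$, applied with $\tau = \Pi\circ \widetilde{W}^{\otimes n}(\cdot)$ for a ``perturbed'' channel $\widetilde{W}$ and a target perturbation $\widetilde{T}$. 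For fixed $P_X,P_S,\widetilde W,\widetilde T$, split into the two cases $r\,I(P_S,\widetilde{T})<I(P_X,\widetilde{W})$ (direct EA interconversion at rate $r$) and the opposite case (partial conversion at rate $r'<r$ with $r' I(P_S,\widetilde T)+\delta_n = I(P_X,\widetilde W)-\gamma_n$, padded by a tensor product of the optimal output state). Both cases require the CQ/EA analog of Proposition~\ref{prop:1order-improved}: an EA protocol with exponentially small purified-distance conversion error, input-type robustness, and tolerance to a sublinear rate gap. DPI for relative entropy, the continuity bound of~\cite{Bluhm2023May}, and the variational formula for $\widetilde{D}_\alpha$ (Proposition~\ref{prop:mainpro}(v))---with commutativity arranged by the pinching map on the fixed output reference (Proposition~\ref{prop:mainpro}(iii), which costs only $2\log v(\sigma)=o(n)$)---then convert the resulting sum of relative entropies into the sandwiched Rényi difference that matches the converse. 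Finally, a variational identity analogous to Proposition~\ref{prop:ach-var-u} specialised to $\alpha=1/2$ (and relying on additivity of $\widetilde I_\alpha$) rewrites the $\inf\sup\inf\inf$ expression as $\sup_{p\in(1,2)}\tfrac{2-p}{p}(r\widetilde{I}_{p/2}(T)-\widetilde{I}_{p/(2(p-1))}(W))$.

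\emph{Main obstacle.} The principal technical hurdle is the EA-assisted refined first-order achievability (the CQ analog of Proposition~\ref{prop:1order-improved}): producing a single protocol that simultaneously attains exponentially small purified-distance error, is robust under $O(1/n)$ variations of the input type, and tolerates a sublinear $\tilde O(n^{-1/4})$ rate gap. This requires composing a tight CQ channel code (Arimoto-type refinements of Holevo--Schumacher--Westmoreland) with a tight EA channel simulation (via convex-split / position-based coding in the spirit of~\cite{Oufkir2024OctMeta}) in such a way that the exponential error guarantee is preserved through the composition, and the composed protocol's marginal still matches the prescribed $P_X$ up to lower-order corrections. A secondary but technically annoying issue is that the variational formula only applies under commutativity; routing through pinching and bounding the resulting polynomial prefactor $v(\sigma)$ via Proposition~\ref{prop:mainpro}(iii) resolves this, but must be done carefully so that the pinched reference state still commutes with the channel images uniformly over the symmetric subspace.
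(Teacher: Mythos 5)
Your converse is essentially the paper's argument: the H\"older-type splitting of the fidelity into a difference of two sandwiched R\'enyi divergences relative to a common comparison state (the paper's Lemma~\ref{lem:hof}), the observation that a non-signaling superchannel maps a replacer channel $\mathcal{R}^{\sigma_{Y^n}}$ to a replacer channel, the data-processing inequality, and additivity to single-letterize. The only caution is quantifier order: the comparison state cannot be an arbitrary $\sigma_{\bm Z}$ on the output; it must be the image of the replaced channel under the superchannel so that DPI converts the negatively-signed divergence into one over $W^{\otimes n}$, and the positively-signed one is then relaxed by an infimum over all output states. Your achievability skeleton (fidelity lower bound, Sion/type reduction, the two-relative-entropy bound of \cite{li2024operational}, the case split on $r\,I(P_S,\widetilde T)$ versus $I(P_X,\widetilde W)$, and the CQ/EA analog of Proposition~\ref{prop:1order-improved}, which you correctly flag as the main new ingredient) also matches the paper's Proposition~\ref{prop:cqfir} and the classical template.

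The genuine gap is in your last step, the passage from the sum-of-relative-entropies expression to the sandwiched R\'enyi difference. The variational identity you invoke (a sandwiched analog of Proposition~\ref{prop:ach-var-u} at $\alpha=\tfrac12$) is not directly available: the minimax manipulations need a variational formula valid for non-commuting states, which holds for the log-Euclidean divergence but for $\widetilde D_\alpha$ only under commutativity. Accordingly, the paper proves the CQ variational identity (Proposition~\ref{prop:var}) and the resulting exponent bound (Proposition~\ref{prop:cqinter}) in terms of the log-Euclidean capacities $I^\flat_\alpha$, which differ from $\widetilde I_\alpha$ already per copy, so a further upgrade is required. Your proposed fix\,---\,arrange commutativity by a single pinching at blocklength $n$ at cost $2\log v=o(n)$\,---\,does not go through as described: pinching the $n$-fold outputs destroys exactly the iid/type structure on which Proposition~\ref{prop:cqfir} (types on $\mathcal{S}^n$, Taylor expansions, variance terms) and the permutation-symmetric minimax reduction rely, so the refined first-order achievability cannot be run on the $n$-fold pinched channel treated as one letter. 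The paper's resolution is a two-scale blocking argument: fix $m$, apply the $I^\flat$ bound to the pinched block channels $\mathcal{P}_{\sigma^u_{Y^m}}\circ W^{\otimes m}$ and $\mathcal{P}_{\sigma^u_{Z^m}}\circ T^{\otimes m}$ with asymptotics in $k$, physically insert the pinchings into the protocol and control the fidelity loss by Lemma~\ref{lem:appen1} (a polynomial-in-$m$ factor per block), then convert $I^\flat$ of the pinched block channels into $\widetilde I$ of the original channels using commutativity after pinching, Lemma~\ref{lem:sym}, monotonicity and the pinching inequality (Proposition~\ref{prop:mainpro}~(i),(iii)), and additivity ((iv),(vi)); all corrections are $O(\log v_m/m)$ and vanish as $m\to\infty$ after $k\to\infty$. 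Without this (or an equivalent device), your sketch establishes the exponent only with log-Euclidean, not sandwiched, channel capacities.
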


We divide the proof of this theorem into the converse part and the achievability part. In the determination of the converse part, we mainly use the operator  Hölder's inequality. In the establishment of the achievability part, we first derive an upper bound in terms of the log-Euclidean R\'enyi divergence and then improve it to the sandwiched one by using the pinching technique.


\subsection{Converse}

In this section, we prove the converse  of the strong converse exponents of entanglement-assisted and non-signaling assisted classical-quantum channel interconversion.

\begin{proposition}
\label{prop:maincqcon}
For any classical-quantum channels $W_{X \rightarrow Y}$, $T_{S \rightarrow Z}$ and $r \geq 0$, we have
\begin{equation}
\begin{split}
\lim_{n \rightarrow \infty} -\frac{1}{n} \log \mathrm{Succ}_{\mathrm{Pur}}^{\mathrm{EA}}(W^{\otimes n}\rightarrow T^{\otimes  \ceil{rn}}) \geq & \lim_{n \rightarrow \infty} -\frac{1}{n} \log \mathrm{Succ}_{\mathrm{Pur}}^{\mathrm{NS}}(W^{\otimes n}\rightarrow T^{\otimes  \ceil{rn}}) \\
\geq &\sup_{1<p<2} \frac{2-p}{p}\big(r\widetilde{I}_{\frac{p}{2}}(T_{S\rightarrow Z})-\widetilde{I}_{\frac{p}{2(p-1)}}(W_{X\rightarrow Y})   \big).
\end{split}
\end{equation}
\end{proposition}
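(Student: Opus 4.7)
The plan is to parallel the classical converse of Proposition~\ref{prop:classical-converse-u}, replacing scalar Hölder by its operator-valued counterpart and classical Rényi divergences by their sandwiched quantum analogs. Since $1 - \mathrm{Pur}(\rho,\sigma) \le F(\rho,\sigma)$, for any NS superchannel $\Pi$ with $\widetilde{T}^{\ceil{rn}} := \Pi \circ W^{\otimes n}$ it suffices to lower bound $-\tfrac{1}{n}\log \inf_{\bm{s}^{\ceil{rn}}} F(\widetilde{T}^{\ceil{rn}}_{\bm{s}}, T_{\bm{s}}^{\otimes \ceil{rn}})$. Using the CQ factorization $\sqrt{F(P_{\bm{S}} \otimes \widetilde{T}^{\ceil{rn}}, P_{\bm{S}} \otimes T^{\otimes \ceil{rn}})} = \mathbb{E}_{P_{\bm{S}}}\sqrt{F(\widetilde{T}^{\ceil{rn}}_{\bm{s}}, T^{\otimes \ceil{rn}}_{\bm{s}})}$ together with the averaging identity $\inf_{\bm{s}}\sqrt{F_{\bm{s}}} = \inf_{P_{\bm{S}}}\mathbb{E}_{P_{\bm{S}}}\sqrt{F_{\bm{s}}}$, the converse reduces to a lower bound on $\sup_{P_{\bm{S}}}(-\tfrac{1}{n}\log F_{\mathrm{joint}}(P_{\bm{S}}))$ where $F_{\mathrm{joint}}(P_{\bm{S}}) := F(P_{\bm{S}} \otimes \widetilde{T}^{\ceil{rn}}, P_{\bm{S}} \otimes T^{\otimes \ceil{rn}})$.

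The core technical step is an operator Hölder inequality. Writing $\sqrt{F(\rho,\sigma)} = \|\rho^{1/2}\sigma^{1/2}\|_1$ (Uhlmann), I would insert the product comparison state $\omega := P_{\bm{S}} \otimes \tau_{Z^{\ceil{rn}}}$ via $\rho^{1/2}\sigma^{1/2} = (\rho^{1/2}\omega^{1/2 - 1/p})(\omega^{1/p - 1/2}\sigma^{1/2})$ and apply $\|AB\|_1 \le \|A\|_{p/(p-1)}\|B\|_p$ with $p \in (1,2)$. Using that $BB^*$ and $AA^*$ are unitarily similar to the sandwiched expressions defining $\widetilde{Q}_{p/2}$ and $\widetilde{Q}_{p/(2(p-1))}$, respectively, and that the CQ block structure correctly absorbs the $P_{\bm{S}}$-weights, one obtains
\[
\sqrt{F_{\mathrm{joint}}(P_{\bm{S}})} \le \widetilde{Q}_{p/2}(P_{\bm{S}}\otimes T^{\otimes \ceil{rn}}\|\omega)^{1/p}\,\widetilde{Q}_{p/(2(p-1))}(P_{\bm{S}}\otimes \widetilde{T}^{\ceil{rn}}\|\omega)^{(p-1)/p}.
\]
Taking $-2\log$ and applying $-\log\widetilde{Q}_\alpha = (1-\alpha)\widetilde{D}_\alpha$, the prefactors $\tfrac{2}{p}\cdot\tfrac{2-p}{2}$ and $\tfrac{2(p-1)}{p}\cdot\tfrac{2-p}{2(p-1)}$ both simplify to $\tfrac{2-p}{p}$, yielding
\[
-\log F_{\mathrm{joint}}(P_{\bm{S}}) \ge \tfrac{2-p}{p}\left[\widetilde{D}_{p/2}(P_{\bm{S}}\otimes T^{\otimes \ceil{rn}}\|\omega) - \widetilde{D}_{p/(2(p-1))}(P_{\bm{S}}\otimes \widetilde{T}^{\ceil{rn}}\|\omega)\right].
\]

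To conclude, I would take $\sup_{P_{\bm{S}}}\sup_\tau$ of both sides. Applying $\sup_\tau(A-B) \ge \inf_\tau A - \inf_\tau B$ at fixed $P_{\bm{S}}$ (pick $\tau^* = \arg\min_\tau B$) and subsequently $\sup_{P_{\bm{S}}}(A' - B') \ge \sup_{P_{\bm{S}}} A' - \sup_{P_{\bm{S}}} B'$ (pick $P_{\bm{S}}^* = \arg\max A'$), the first term becomes the sandwiched Rényi channel capacity $\widetilde{I}_{p/2}(T^{\otimes \ceil{rn}}) = \ceil{rn}\,\widetilde{I}_{p/2}(T)$ by the Sibson-like definition and additivity (Proposition~\ref{prop:mainpro}(vi)), while the second is upper bounded by $\widetilde{I}_{p/(2(p-1))}(\widetilde{T}^{\ceil{rn}}) \le \widetilde{I}_{p/(2(p-1))}(W^{\otimes n}) = n\,\widetilde{I}_{p/(2(p-1))}(W)$ via data-processing of $\widetilde{D}_\alpha$ for $\alpha \ge 1/2$ (Proposition~\ref{prop:mainpro}(ii)) applied to $\Pi$ acting on $P_{\bm{X}}^{\otimes n} W^{\otimes n}$ versus $P_{\bm{X}}^{\otimes n} \otimes \sigma_{Y^n}$; the non-signaling constraints on $\Pi$ force the $Z^{\ceil{rn}}$-output against $\sigma_{Y^n}$ to factorize into a product state independent of the $\bm{S}$-input. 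Dividing by $n$, sending $n \to \infty$, and taking $\sup_{p \in (1,2)}$ then yields the claim. The main obstacle I anticipate is precisely this last ingredient: formalizing the data-processing inequality for sandwiched Rényi divergences under non-signaling superchannels in the CQ setting so that the comparison state has the required product form $P_{\bm{S}} \otimes \tau_{Z^{\ceil{rn}}}$, mirroring the classical identity $(N \circ P_{Y^n})_{Z|S} = (N \circ P_{Y^n})_Z$ used in Proposition~\ref{prop:classical-converse-u}. The operator Hölder step itself is clean once the correct power $\omega^{1/p - 1/2}$ of the auxiliary state is inserted between $\rho^{1/2}$ and $\sigma^{1/2}$.
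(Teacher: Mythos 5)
Your proposal is essentially the paper's proof: your operator-Hölder insertion of $\omega^{1/p-1/2}$ between $\rho^{1/2}$ and $\sigma^{1/2}$ is exactly Lemma~\ref{lem:hof} (specialized to a product comparison state), and the remaining structure\,---\,first divergence $\to \ceil{rn}\,\widetilde{I}_{p/2}(T)$, second divergence bounded by $n\,\widetilde{I}_{p/(2(p-1))}(W)$ via the non-signaling structure and data processing, then additivity (Proposition~\ref{prop:mainpro}) and $\sup_{1<p<2}$\,---\,coincides with the paper's argument; your restriction to classical inputs and the minimax over $P_{\bm S}$ are harmless cosmetic variations for a converse bound. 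The one step you flag as open is closed in the paper as follows: rather than an arbitrary product state $P_{\bm S}\otimes\tau$, take the comparison state to be $\Pi_n\circ\mathcal{R}^{\sigma_{Y^n}}(\rho)$, where $\mathcal{R}^{\sigma_{Y^n}}$ is the replacer channel; since a non-signaling superchannel maps a replacer channel to a replacer channel, this state automatically has the required form $\rho_{R}\otimes(\text{fixed state on }Z^{\ceil{rn}})$, independent of the $S$-input, so the first term is a genuine Rényi mutual information of the target output. Then the second term is handled by the data-processing inequality for channel divergences under superchannels \cite{wang2019resource}, which gives $\widetilde{D}_{\frac{p}{2(p-1)}}\big(\Pi_n\circ W^{\otimes n}(\rho)\,\big\|\,\Pi_n\circ\mathcal{R}^{\sigma_{Y^n}}(\rho)\big)\le \sup_{\tau_{X^nR_n}}\widetilde{D}_{\frac{p}{2(p-1)}}\big(W^{\otimes n}(\tau)\,\big\|\,\mathcal{R}^{\sigma_{Y^n}}(\tau)\big)$; taking $\inf_{\sigma_{Y^n}}$ yields $\widetilde{I}_{\frac{p}{2(p-1)}}(W^{\otimes n})=n\,\widetilde{I}_{\frac{p}{2(p-1)}}(W)$ by Proposition~\ref{prop:mainpro}(vi), which is exactly the quantum substitute for the classical identity $(N\circ P_{Y^n})_{Z|S}=(N\circ P_{Y^n})_{Z}$ you were aiming to mirror.
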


\begin{proof}
By definition, the first inequality holds trivially. Hence, we only need to prove the second inequality. For any non-signaling assisted conversion scheme $\Pi_n$  converting  $W^{\otimes n}_{X \rightarrow Y}$ to $T^{\otimes \ceil{rn}}_{S \rightarrow Z}$, quantum states $\sigma_{Y^n}$ and $\rho_{S^{\ceil{rn}}R_n}$, $1<p<2$, we let $\mathcal{R}^{\sigma_{Y^n}}_{X^n\rightarrow Y^n}$  represent the replacer channel associated with $\sigma_{Y^n}$. Then, we have
\begin{equation}
\label{equ:cqcon}
\begin{split}
&\frac{p}{2-p}\log F(\Pi_n \circ W^{\otimes n}(\rho_{S^{\ceil{rn}}R_n} ), T^{\otimes \ceil{rn}}(\rho_{S^{\lceil nr \rceil} R_n})) \\
\overset{(a)}\leq &-\widetilde{D}_{\frac{p}{2}}(T^{\otimes \ceil{rn}}(\rho_{S^{\ceil{rn}}R_n})\| \Pi_n \circ \mathcal{R}^{\sigma_{Y^n}}(\rho_{S^{\ceil{rn}}R_n}))+\widetilde{D}_{\frac{p}{2(p-1)}}(\Pi_n \circ W^{\otimes n}(\rho_{S^{\ceil{rn}}R_n} )\|\Pi_n \circ \mathcal{R}^{\sigma_{Y^n}}(\rho_{S^{\ceil{rn}}R_n})) \\
\overset{(b)}= &-\widetilde{D}_{\frac{p}{2}}(T^{\otimes \ceil{rn}}(\rho_{S^{\ceil{rn}}R_n})\| \rho_{R_n} \otimes \Pi_n \circ \mathcal{R}^{\sigma_{Y^n}}(\rho_{S^{\ceil{rn}}}))+\widetilde{D}_{\frac{p}{2(p-1)}}(\Pi_n \circ W^{\otimes n}(\rho_{S^{\ceil{rn}}R_n} )\|\Pi_n \circ \mathcal{R}^{\sigma_{Y^n}}(\rho_{S^{\ceil{rn}}R_n})) \\
\leq &-\widetilde{I}_{\frac{p}{2}}(R_n:Z^{\ceil{rn}})_{T^{\otimes \ceil{rn}}(\rho_{S^{\ceil{rn}}R_n})}+\sup_{\tau_{S^{\ceil{rn}}R_n}}\widetilde{D}_{\frac{p}{2(p-1)}}(\Pi_n \circ W^{\otimes n}(\tau_{S^{\ceil{rn}}R_n} )\|\Pi_n \circ \mathcal{R}^{\sigma_{Y^n}}(\tau_{S^{\ceil{rn}}R_n})) \\
\overset{(c)}\leq&-\widetilde{I}_{\frac{p}{2}}(R_n:Z^{\ceil{rn}})_{T^{\otimes \ceil{rn}}(\rho_{S^{\ceil{rn}}R_n})}+\sup_{\tau_{X^{n}R_n}}\widetilde{D}_{\frac{p}{2(p-1)}}( W^{\otimes n}(\tau_{X^{n}R_n} )\|\mathcal{R}^{\sigma_{Y^n}}(\tau_{X^{n}R_n})) ,
\end{split}
\end{equation}
where  (a) is due to Lemma~\ref{lem:hof} in Appendix,  (b) uses  the fact that a  non-signaling superchannel sends a replacer channel to a replacer channel, (c) follows from the data processing inequality~\cite{wang2019resource}.

Because Eq.~(\ref{equ:cqcon}) holds for any non-signaling assisted conversion schemes, $\sigma_{Y^n}$ and input state $\rho_{S^{\ceil{rn}}R_n}$, we have
\begin{equation}
\label{equ:confian}
\begin{split}
&\frac{p}{2-p}\log \max_{\Pi_n \in \mathrm{NS}(W^{\otimes n} \rightarrow T^{\otimes \ceil{rn}}) }\inf_{\rho_{S^{\ceil{rn}}R_n}}F(\Pi_n \circ W^{\otimes n}(\rho_{S^{\ceil{rn}}R_n} ), T^{\otimes \ceil{rn}}(\rho_{S^{\ceil{rn}}R_n})) \\
\leq &-\sup_{\rho_{S^{\ceil{rn}}R_n}}\widetilde{I}_{\frac{p}{2}}(R_n:Z^{\ceil{rn}})_{T^{\otimes \ceil{rn}} (\rho_{S^{\ceil{rn}}R_n})}+\inf_{\sigma_{Y^n}}\sup_{\tau_{X^{n}R_n}}\widetilde{D}_{\frac{p}{2(p-1)}}( W^{\otimes n}(\tau_{X^{n}R_n} )\|\mathcal{R}^{\sigma_{Y^n}}_{X^n\rightarrow Y^n}(\tau_{X^{n}R_n})) \\
=&-\widetilde{I}_{\frac{p}{2}}(T^{\otimes \ceil{rn}})+\widetilde{I}_{\frac{p}{2(p-1)}}(W^{\otimes n}) \\
\overset{(a)}=&-\ceil{rn}\widetilde{I}_{\frac{p}{2}}(T)+n\widetilde{I}_{\frac{p}{2(p-1)}}(W),
\end{split}
\end{equation}
where (a) comes from Proposition~\ref{prop:mainpro}~(\romannumeral6). Eq.~(\ref{equ:confian}) gives
\begin{equation}
\label{equ:cq}
\begin{split}
&\lim_{n \rightarrow \infty} -\frac{1}{n} \log \mathrm{Succ}_{\mathrm{Pur}}^{\mathrm{NS}}(W^{\otimes n}\rightarrow T^{\otimes  \ceil{rn}}) \\
\geq & \lim_{n \rightarrow \infty} -\frac{1}{n} \log \max_{\Pi_n \in \mathrm{NS}(W^{\otimes n} \rightarrow T^{\otimes \ceil{rn}}) }\inf_{\rho_{S^{nr}R_n}}F(\Pi_n \circ W^{\otimes n}(\rho_{S^{\ceil{rn}}R_n} ), T^{\otimes \ceil{rn}}(\rho_{S^{\ceil{rn}}R_n}))   \\
\ge &\frac{2-p}{p}\big(r\widetilde{I}_{\frac{p}{2}}(T_{Y\rightarrow B})-\widetilde{I}_{\frac{p}{2(p-1)}}(W_{X\rightarrow A})   \big).
\end{split}
\end{equation}
 Because Eq.~(\ref{equ:cq}) holds for any $1<p<2$, we complete the proof of the converse part.
\end{proof}


\subsection{Achievability}

In this section, we prove the achievability  of the strong converse exponents of entanglement-assisted and non-signaling assisted classical-quantum channel interconversion. We first construct an intermediate quantity and derive its variational expression, then prove that it is an upper bound of the strong converse exponents.

\begin{proposition}
\label{prop:var}
Let $W_{X \rightarrow Y}$, $T_{S \rightarrow Z}$ be two classical-quantum channels and $r \geq 0$, we have 
\begin{equation}
\label{equ:maincq}
\begin{split}
&\sup_{1<p<2} \frac{2-p}{p}\big(rI^{\flat}_{\frac{p}{2}}(T_{S\rightarrow Z})-I_{\frac{p}{2(p-1)}}^{\flat}(W_{X\rightarrow Y})   \big) \\
=&\inf_{P_X \in \mathcal{P}(\mathcal{X})}\sup_{P_S \in \mathcal{P}(\mathcal{S})}\inf_{\{\widetilde{T}_s\} \in \mathcal{F}_T}\inf_{\{\widetilde{W}_x\} \in \mathcal{F}_W}
\big(\mathbb{E}_{x\sim P_X}D(\widetilde{W}_x\|W_x)+ r\mathbb{E}_{s \sim P_S}D(\widetilde{T}_s\|T_s)\\
&+|rI(P_{S},\widetilde{T}_{S \rightarrow Z})-I(P_{X},\widetilde{W}_{X \rightarrow Y})|_+ \big),
\end{split}
\end{equation}
where $\frac{1}{\alpha}+\frac{1}{\beta}=2$, $\mathcal{F}_W=\{ \{\widetilde{W}_x\}~|~\widetilde{W}_x \in \mathcal{S}_{W_x},~\forall x\in \mathcal{X} \}$ and $\mathcal{F}_T=\{ \{\widetilde{T}_s\}~|~\widetilde{T}_s \in \mathcal{S}_{T_s},~\forall s\in \mathcal{S} \}$.
\end{proposition}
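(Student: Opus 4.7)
The strategy closely follows the classical variational identity of Proposition~\ref{prop:ach-var-u}, with the crucial input being that the commutativity hypothesis in Proposition~\ref{prop:mainpro}(v) is automatically satisfied in the classical-quantum setting. For any c-q channel $V_{X\to Y}$ with input $P_X$, the joint state $P_X V = \sum_x P_X(x) |x\rangle\langle x|_X \otimes V_x$ commutes with every product state $P_X \otimes \sigma_Y$; on such pairs $D^\flat_\alpha$ coincides with $\widetilde D_\alpha$, and Proposition~\ref{prop:mainpro}(v) yields the variational representation
\[
D^\flat_\alpha(P_X V \,\|\, P_X \otimes \sigma_Y) = \underset{\{V'_x \in \mathcal S_{V_x}\}}{\mathrm{opt}} \sum_x P_X(x) \Big[D(V'_x\|\sigma_Y) - \tfrac{\alpha}{\alpha-1} D(V'_x\|V_x)\Big],
\]
where $\mathrm{opt}=\min$ for $\alpha \in (0,1)$ and $\mathrm{opt}=\max$ for $\alpha > 1$.

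First I would apply this identity with $\alpha_T = p/2 \in (1/2,1)$ for the $T$-term and $\alpha_W = p/(2(p-1)) > 1$ for the $W$-term. A direct computation gives $\tfrac{\alpha_T}{\alpha_T-1} = -\tfrac{p}{2-p}$ and $\tfrac{\alpha_W}{\alpha_W-1} = \tfrac{p}{2-p}$, so multiplying through by the overall prefactor $\tfrac{2-p}{p}$ normalizes the auxiliary relative entropies to precisely $\mathbb{E}_{x\sim P_X} D(\widetilde W_x\|W_x)$ and $r\,\mathbb{E}_{s\sim P_S} D(\widetilde T_s\|T_s)$, matching the right-hand side.

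Next I would interchange the infima over the reference states $\sigma_Y,\sigma_Z$ with the inner extremizations over $\{V'_x\}$ using Sion's minimax theorem on the compact convex sets $\mathcal S(\mathcal Y),\mathcal S(\mathcal Z)$, exploiting the joint convexity of $(V',\sigma)\mapsto \sum_x P_X(x) D(V'_x\|\sigma)$ in $\sigma$ and linearity (via an affine reparametrization) in $V'$. Since $\inf_\sigma \sum_x P_X(x) D(V'_x\|\sigma) = I(P_X, V'_{X\to Y})$ is the ordinary mutual information of the tilted c-q channel, this yields, for each fixed $p\in(1,2)$, the identity
\[
\tfrac{2-p}{p}\big(r I^\flat_{p/2}(T) - I^\flat_{p/(2(p-1))}(W)\big) = \inf_{P_X}\sup_{P_S}\inf_{\widetilde T,\widetilde W}\Big\{\mathbb E_x D(\widetilde W_x\|W_x) + r\mathbb E_s D(\widetilde T_s\|T_s) + \tfrac{2-p}{p}\big[r I(P_S,\widetilde T) - I(P_X,\widetilde W)\big]\Big\}.
\]

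Finally, taking the supremum over $p\in(1,2)$ and noting that $\tfrac{2-p}{p}$ sweeps $(0,1)$, one recognizes $\sup_p \tfrac{2-p}{p}[rI(P_S,\widetilde T)-I(P_X,\widetilde W)] = |rI(P_S,\widetilde T)-I(P_X,\widetilde W)|_+$, producing the right-hand side of \eqref{equ:maincq}. The main obstacle is justifying the exchange of $\sup_p$ with the inner $\inf_{P_X}\sup_{P_S}\inf_{\widetilde T,\widetilde W}$: the $\le$ direction follows from weak duality, while the reverse direction requires a case split according to whether $rI(P_S,\widetilde T)\ge I(P_X,\widetilde W)$ at the optimal interior minimizers, selecting $p$ at (or in a limit towards) the appropriate Hölder-dual endpoint, precisely mirroring the classical argument of Appendix~\ref{app:variational}.
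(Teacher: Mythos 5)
Your opening claim is false: $P_XV=\sum_x P_X(x)\,|x\rangle\langle x|\otimes V_x$ does \emph{not} commute with $P_X\otimes\sigma_Y$ unless every $V_x$ commutes with $\sigma_Y$, so Proposition~\ref{prop:mainpro}(v) is not ``automatically'' applicable, and $D^\flat_\alpha$ does \emph{not} coincide with $\widetilde D_\alpha$ on such pairs. (If it did, $I^\flat_\alpha=\widetilde I_\alpha$ for classical-quantum channels and the entire pinching argument that upgrades Proposition~\ref{prop:cqinter} to Theorem~\ref{thm:maincq} would be superfluous.) Moreover, the identity you display, $D^\flat_\alpha(P_XV\|P_X\otimes\sigma_Y)=\mathrm{opt}_{\{V_x'\}}\sum_x P_X(x)[D(V_x'\|\sigma_Y)-\tfrac{\alpha}{\alpha-1}D(V_x'\|V_x)]$, asserts $D^\flat_\alpha(P_XV\|P_X\otimes\sigma_Y)=\sum_xP_X(x)D^\flat_\alpha(V_x\|\sigma_Y)$, which is false in general: the left side is a Sibson-type (Gallager-averaged) quantity, $\tfrac{1}{\alpha-1}\log\sum_xP_X(x)Q^\flat_\alpha(V_x\|\sigma_Y)$, while the right side is the Augustin-type average. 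The correct route (the one the paper takes) is to start from the Augustin form of $I^\flat_\alpha$ and apply the variational formula for the log-Euclidean divergence \emph{blockwise} to each non-commuting pair $(T_s,\sigma_Z)$, $(W_x,\sigma_Y)$ — that formula needs no commutativity. With that fix your Hölder-dual bookkeeping ($\tfrac{\alpha_T}{\alpha_T-1}=-\tfrac{p}{2-p}$, $\tfrac{\alpha_W}{\alpha_W-1}=\tfrac{p}{2-p}$) and the fixed-$p$ identity are fine.

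The second gap is the step you yourself flag as the main obstacle: exchanging $\sup_p$ (equivalently $\sup_{\delta\in(0,1)}$ with $\delta=\tfrac{2-p}{p}$) with $\inf_{P_X}\sup_{P_S}\inf_{\widetilde T,\widetilde W}$. The easy direction is indeed weak duality, but your sketch of the hard direction — a case split on the sign of $rI(P_S,\widetilde T)-I(P_X,\widetilde W)$ ``at the optimal interior minimizers'' and choosing $p$ near an endpoint — does not work as stated, because in $\sup_\delta\inf_{P_X}\sup_{P_S}\inf_{\widetilde T,\widetilde W}$ the parameter $\delta$ must be fixed \emph{before} the adversarial inner infima over $\{\widetilde T_s\},\{\widetilde W_x\}$; this quantifier order is exactly what prevents a pointwise choice of $\delta$, and it is also not what Appendix~\ref{app:variational} does (the case split you recall appears in the operational achievability proofs, not in the variational identity). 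The paper resolves it with Sion's minimax theorem applied with $\delta$ against the inf-variables, after verifying the specific convexity facts that make Sion applicable — convexity of $r\,\mathbb{E}_{s}D(\widetilde T_s\|T_s)+\delta r\,I(P_S,\widetilde T_{S\to Z})$ in $\{\widetilde T_s\}$ and of $\mathbb{E}_{x}D(\widetilde W_x\|W_x)-\delta I(P_X,\widetilde W_{X\to Y})=\sup_{\sigma_Y}\{\mathbb{E}_xD(\widetilde W_x\|W_x)-\delta\,\mathbb{E}_xD(\widetilde W_x\|\sigma_Y)\}$ in $(P_X,\{\widetilde W_x\})$ — followed by a final minimax swap of $\sup_{P_S}$ and $\inf_{P_X}$. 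Without these concrete convexity/concavity verifications (or an equivalent argument), your proposal leaves the crux of the proposition unproven.
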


\begin{proof}
By direct calculation, we have
\begin{equation}
\begin{split}
&\sup_{1<p<2} \frac{2-p}{p}\big(rI^{\flat}_{\frac{p}{2}}(T_{S\rightarrow Z})-I_{\frac{p}{2(p-1)}}^{\flat}(W_{X\rightarrow Y})   \big) \\
=&\sup_{1<p<2} \frac{2-p}{p}\left(r\sup_{P_S \in \mathcal{P}(\mathcal{S})}\inf_{\sigma_Z\in \mathcal{S}(Z)}\sum_{s \in \mathcal{S}}P_S(s)D_{\frac{p}{2}}^{\flat}(T_s\|\sigma_Z)-
\sup_{P_X \in \mathcal{P}(X)}\inf_{\sigma_Y \in \mathcal{S}(Y)} \sum_{x \in \mathcal{X}} P_X(x)D_{\frac{p}{2(p-1)}}^\flat(W_x\|\sigma_Y) \right) \\
\overset{(a)}=&\sup_{1<p<2}\sup_{P_S \in \mathcal{P}(\mathcal{S})}\inf_{\sigma_Z\in \mathcal{S}(Z)}\inf_{P_X \in \mathcal{P}(X)}\sup_{\sigma_Y \in \mathcal{S}(Y)}\inf_{\{\widetilde{T}_s\}\in \mathcal{F}_T}
\inf_{\{\widetilde{W}_x\}\in \mathcal{F}_W}\frac{2-p}{p} \bigg( r\sum_{s \in \mathcal{S}}P_S(s)D(\widetilde{T}_s\|\sigma_Z) \\
&+r\frac{p}{2-p}\sum_{s\in \mathcal{S}}P_S(s)D(\widetilde{T}_s\|T_s)- \sum_{x \in \mathcal{X}}P_X(x)D(\widetilde{W}_x\|\sigma_Y)+\frac{p}{2-p}\sum_{x \in \mathcal{X}}P_X(x)D(\widetilde{W}_x\|W_x)\bigg)\\
\overset{(b)}=&\sup_{1<p<2}\sup_{P_S \in \mathcal{P}(\mathcal{S})}\inf_{P_X \in \mathcal{P}(\mathcal{X})}\inf_{\{\widetilde{T}_s\}\in \mathcal{F}_T}\inf_{\{\widetilde{W}_x\}\in \mathcal{F}_W} \frac{2-p}{p}\bigg(rI(S:Z)_{P_S\widetilde{T}  }+r\frac{p}{2-p} \sum_{s \in \mathcal{S}} P_S(s)D(\widetilde{T}_s\|T_s)\\
&-I(X:A)_{P_X\widetilde{W} }+\frac{p}{2-p} \sum_{x \in \mathcal{X}}P_X(x)D(\widetilde{W}_x\|W_x)  \bigg)\\
=&\sup_{1<p<2}\sup_{P_S \in \mathcal{P}(\mathcal{S})}\inf_{P_X \in \mathcal{P}(\mathcal{X})}\inf_{\{\widetilde{T}_s\}\in \mathcal{F}_T}\inf_{\{\widetilde{W}_x\}\in \mathcal{F}_W} \bigg(\mathbb{E}_{x\sim P_X}D(\widetilde{W}_x\|W_x)  +r\mathbb{E}_{s \sim P_S}D(\widetilde{T}_s\|T_s) \\
&+\frac{2-p}{p}(rI(P_{S},\widetilde{T}_{S \rightarrow Z})-I(P_{X},\widetilde{W}_{X \rightarrow Y})_{P_X\widetilde{W}})\bigg),
\end{split}
\end{equation}
where $(a)$ comes from Proposition~\ref{prop:mainpro}~(\romannumeral5), (b) is due to Sion's minimax theorem, the convexity and the concavity of the objective function in
$\{\widetilde{W}_x\}$, $\{\widetilde{T}_y\}$ and $\sigma_A$ are obvious.

Because $r\mathbb{E}_{s \sim P_S}D(\widetilde{T}_s\|T_s)+\delta r I(P_{S},\widetilde{T}_{S \rightarrow Z})$ is convex in $\{\widetilde{T}_s\}$ and $\mathbb{E}_{x\sim P_X} D(\widetilde{W}_x\|W_x)-\delta I(P_{X},\widetilde{W}_{X \rightarrow Y})=\sup_{\sigma_Y \in \mathcal{S}(Y)}\{\mathbb{E}_{x\sim P_X} D(\widetilde{W}_x\|W_x)-\delta \mathbb{E}_{x\sim P_X} D(\widetilde{W}_x\|\sigma_Y)  \}$ is convex
in $P_X$ and $\{\widetilde{W}_x\}$, we can further apply Sion's minimax theorem to obtain
\begin{equation}
\begin{split}
&\sup_{1<p<2} \frac{2-p}{p}\big(rI^{\flat}_{\frac{p}{2}}(T_{S\rightarrow Z})-I_{\frac{p}{2(p-1)}}^{\flat}(W_{X\rightarrow Y})   \big) \\
=&\sup_{0<\delta<1}\sup_{P_S \in \mathcal{P}(\mathcal{S})}\inf_{P_X \in \mathcal{P}(X)}\inf_{\{\widetilde{T}_s\}\in \mathcal{F}_T}\inf_{\{\widetilde{W}_x\}\in \mathcal{F}_W} \big(\mathbb{E}_{x\sim P_X}D(\widetilde{W}_x\|W_x)  +r\mathbb{E}_{s \sim P_S}D(\widetilde{T}_s\|T_s) \\
&+\delta(rI(P_{S},\widetilde{T}_{S \rightarrow Z})-I(P_{X},\widetilde{W}_{X \rightarrow Y}))\big) \\
=&\sup_{P_S \in \mathcal{P}(\mathcal{S})}\inf_{P_X \in \mathcal{P}(\mathcal{X})}\inf_{\{\widetilde{T}_s\}\in \mathcal{F}_T}\inf_{\{\widetilde{W}_x\}\in \mathcal{F}_W}\sup_{0<\delta<1} \big(\mathbb{E}_{x\sim P_X}D(\widetilde{W}_x\|W_x)  +r\mathbb{E}_{s \sim P_S}D(\widetilde{T}_s\|T_s) \\
&+\delta(rI(P_{S},\widetilde{T}_{S \rightarrow Z})-I(P_{X},\widetilde{W}_{X \rightarrow Y}))\big) \\
=&\sup_{P_S \in \mathcal{P}(\mathcal{S})}\inf_{P_X \in \mathcal{P}(\mathcal{X})}\inf_{\{\widetilde{T}_s\}\in \mathcal{F}_T}\inf_{\{\widetilde{W}_x\}\in \mathcal{F}_W} \big(\mathbb{E}_{x\sim P_X}D(\widetilde{W}_x\|W_x)  +r\mathbb{E}_{s \sim P_S}D(\widetilde{T}_s\|T_s) \\
&+|rI(P_{S},\widetilde{T}_{S \rightarrow Z})-I(P_{X},\widetilde{W}_{X \rightarrow Y})|_+\big) \\
=&\inf_{P_X \in \mathcal{P}(\mathcal{X})}\sup_{P_Y \in \mathcal{P}(\mathcal{Y})}\inf_{\{\widetilde{T}_s\}\in \mathcal{F}_T}\inf_{\{\widetilde{W}_x\}\in \mathcal{F}_W} \big(\mathbb{E}_{x\sim P_X}D(\widetilde{W}_x\|W_x)  +r\mathbb{E}_{s \sim P_S}D(\widetilde{T}_s\|T_s) \\
&+|rI(P_{S},\widetilde{T}_{S \rightarrow Z})-I(P_{X},\widetilde{W}_{X \rightarrow Y})|_+\big). 
\end{split}
\end{equation}
\end{proof}

In the following, we establish an improved first order achievability for classical-quantum channel interconversion with an exponentially vanishing conversion error which serves as a tool to generalize the arguments in classical setting into classical-quantum setting.

\begin{proposition}
\label{prop:cqfir}
Let  $W_{X \rightarrow Y}$ and  $T_{S \rightarrow Z}$ be  two classical-quantum channels.  Let $n \in \mathbb{N}$, $P_S$ be a type on $\mathcal{S}^n$ and $P_X \in \mathcal{P}(\mathcal{X})$. Let
$\delta_{n} = 3\frac{\sqrt{\log(n) \var( {T}_{S \rightarrow Z} )} }{n^{1/4}}$,   $\gamma_{n} =2\frac{\sqrt{\log(n) \var( W_{X \rightarrow Y})} }{n^{1/4}}$ and $r \geq 0$ such that
\begin{equation}
rI(P_{S},T_{S \rightarrow Z})+\delta_n \leq I(P_{X},\widetilde{W}_{X \rightarrow Y})-\gamma_n.
\end{equation}
Then, there is an entanglement-assisted strategy converting $W^{\otimes n}$ to $T^{\otimes \ceil{rn}}$ under any input of type close to $P_S$ and achieving a conversion $Pur$-error at most $\exp(-\sqrt{n}\log(n))$\footnote{if the channel $T$ is classical, shared randomness assistance is sufficient to achieve the same result.}.
\end{proposition}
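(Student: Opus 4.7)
The plan is to concatenate two refined sub-protocols: a tight finite-blocklength entanglement-assisted simulation of $T^{\otimes \ceil{rn}}$ from noiseless classical bits, and a classical-quantum random coding scheme that transmits the required bits over $W^{\otimes n}$. First, building on the classical tight simulation of~\cite{Oufkir2024OctMeta} together with pinching (Proposition~\ref{prop:mainpro}(iii)) and the symmetric universal state from Lemma~\ref{lem:sym}, I would construct an entanglement-assisted simulation of $T^{\otimes \ceil{rn}}$ that works uniformly over all input strings whose empirical type lies within $O(1/n)$ total-variation distance of $P_S$, uses at most $M$ noiseless classical bits with $\tfrac{1}{n}\log M \le r\,I(P_S,T) + \tfrac{2}{3}\delta_n$, and achieves purified-distance error at most $\exp(-\sqrt{n}\log n)$. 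The $\delta_n = \Theta(n^{-1/4}\sqrt{\log n})$ slack is dictated by a Taylor expansion of the sandwiched R\'enyi mutual information around $\alpha = 1$~\cite{hayashi2016correlation}: simulating at a rate exceeding the first-order target by $\Theta(\sqrt{(\log n)/n})$ per use translates, via $\var(P_S, T)$, into a total exponent of order $\sqrt{n}\log n$. When $T$ is classical, shared randomness suffices in this step.

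Next, I would apply the random coding error exponent for classical-quantum channels~\cite{Fano1961Nov,MosonyiOgawa2017strong} at input distribution $P_X^{\otimes n}$ to reliably transmit the $\log M$ classical bits through $W^{\otimes n}$. The hypothesis $r\,I(P_S,T) + \delta_n \le I(P_X,W) - \gamma_n$ places the coding rate strictly below $I(P_X,W)$ with a slack of $\gamma_n = \Theta(n^{-1/4}\sqrt{\log n})$, so a Taylor expansion of the Gallager-type exponent around capacity (whose second-order behaviour is governed by $\var(P_X,W)$) yields a coding error at most $\exp(-\Omega(n\gamma_n^2)) = \exp(-\Omega(\sqrt{n}\log n))$. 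The two sub-protocols are then composed by feeding the channel decoder's output as the message into the simulation: conditional on correct decoding (which occurs with probability $1 - \exp(-\Omega(\sqrt{n}\log n))$) the simulation produces a state within purified distance $\exp(-\sqrt{n}\log n)$ of $T^{\otimes \ceil{rn}}$ on the given input, so that the triangle inequality for the purified distance yields the claimed overall bound. Robustness to $O(1/n)$ perturbations of the input type follows from continuity of the mutual information and of the error exponent (e.g.~\cite[Corollary III.5]{Dupuis2019Jul}), since both $\gamma_n$ and $\delta_n$ dominate such perturbations.

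\textbf{Main obstacle.} The hardest step is the simulation: producing a finite-blocklength entanglement-assisted protocol that simultaneously has a rate cost only $o(n)$ above the first-order capacity $\ceil{rn}\,I(P_S,T)$, achieves purified-distance error of the form $\exp(-\sqrt{n}\log n)$, and is robust across the entire $O(1/n)$-neighborhood of $P_S$. Securing all three properties at once requires lifting the classical tight rounding/meta-converse of~\cite{Oufkir2024OctMeta} to the classical-quantum setting and carefully controlling the rejection-sampling tail via pinching, at the cost of only a polynomial blow-up courtesy of Lemma~\ref{lem:sym}. Once such a tight simulation is available, the coding step and the composition are comparatively standard.
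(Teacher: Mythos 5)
Your plan matches the paper's proof in all essentials: the same concatenation $W^{\otimes n}\rightarrow \id_2^{\otimes \ceil{Rn}}\rightarrow T^{\otimes \ceil{rn}}$ with $R$ placed in the gap created by $\delta_n$ and $\gamma_n$, the cq random-coding error exponent with a Taylor expansion around order one controlled by $\var(P_X,W)$, a pinching-based smoothing of $T^{\otimes\ceil{rn}}$ that is dominated by $\tfrac{1}{n}e^{\ceil{Rn}}\sigma^{\otimes\ceil{rn}}$ and analyzed via $\widetilde{D}_{1+\alpha}$ with the $\var(P_S,T)$ Taylor expansion, followed by the NS-to-EA rounding of \cite{Oufkir2024OctMeta} and a triangle inequality in purified distance. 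The only cosmetic deviation is that the paper pinches with respect to the i.i.d.\ average output state $\sigma^{\otimes\ceil{rn}}$ (whose pinching constant is already polynomial) rather than invoking the universal symmetric state of Lemma~\ref{lem:sym}, which is used elsewhere in the achievability of Theorem~\ref{thm:maincq}.
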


The proof of this Proposition is stated in Appendix~\ref{app:1order-improved-cq}. Because Eq.~(\ref{equ:maincq}) has the same variational expression as in the classical setting, by applying Proposition~\ref{prop:cqfir} and leveraging the same arguments in the classical setting~(the arguments in Proposition~\ref{prop:ach-var-pur}), we can prove that there exists a sequence of entanglement-assisted schemes whose performances decay to $0$ with an exponential decaying rate upper bounded by Eq.~(\ref{equ:maincq}). Hence, we can get the following proposition.

\begin{proposition}
\label{prop:cqinter}
For any classical-quantum channels $W_{X \rightarrow Y}$, $T_{S \rightarrow Z}$ and $r \geq 0$, we have
\begin{equation}
\begin{split}
\lim_{n \rightarrow \infty} -\frac{1}{n} \log \mathrm{Succ}_{\mathrm{Pur}}^{\mathrm{NS}}(W^{\otimes n}\rightarrow T^{\otimes  \ceil{rn}}) \leq &
\lim_{n \rightarrow \infty} -\frac{1}{n} \log \mathrm{Succ}_{\mathrm{Pur}}^{\mathrm{EA}}(W^{\otimes n}\rightarrow T^{\otimes  \ceil{rn}}) \\
\leq &\sup_{1<p<2} \frac{2-p}{p}\big(rI^{\flat}_{\frac{p}{2}}(T_{S\rightarrow Z})-I_{\frac{p}{2(p-1)}}^{\flat}(W_{X\rightarrow Y})   \big).
\end{split}
\end{equation}
\end{proposition}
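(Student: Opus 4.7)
The first inequality is immediate from $\mathrm{EA}(W\to T)\subseteq\mathrm{NS}(W\to T)$. For the second, the plan mirrors the proof of Proposition~\ref{prop:ach-var-pur} in the classical case, using Proposition~\ref{prop:var} (the variational expression) together with Proposition~\ref{prop:cqfir} (improved first-order EA achievability with exponentially small purified-distance error, robust under input-type perturbations of order $1/n$, and tolerating a sublinear rate slack $\gamma_n,\delta_n=\widetilde{\mathcal{O}}(n^{-1/4})$).

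\textbf{Setup.} Using $1-\mathrm{Pur}(\rho,\sigma)\ge \tfrac{1}{2}F(\rho,\sigma)$, Sion's minimax exchange and restriction to permutation-invariant inputs, it suffices to fix a type $t\in\mathcal{T}_{\ceil{rn}}(\mathcal{S})$ with empirical distribution $P_S$ and lower bound $\sup_{\Pi}F(\Pi\circ W^{\otimes n}(\bm{s}^{\ceil{rn}}(t)),T^{\otimes\ceil{rn}}(\bm{s}^{\ceil{rn}}(t)))$, losing only a polynomial factor in $n$. Fix $P_X\in\mathcal{P}(\mathcal{X})$ and auxiliary families $\{\widetilde{W}_x\}\in\mathcal{F}_W$, $\{\widetilde{T}_s\}\in\mathcal{F}_T$. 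Throughout, the key tool will be the quantum analogue of Lemma~\ref{Lem:fid-D}, namely $-\log F(\rho,\sigma)\le D(\tau\|\rho)+D(\tau\|\sigma)+\log 2$ proved in \cite{li2024operational}, applied with $\tau=\Pi_n\circ\widetilde{W}^{\otimes n}(\bm{s}^{\ceil{rn}}(t))$.

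\textbf{Case analysis.} We split according to whether $rI(P_S,\widetilde{T})+\delta_n\le I(P_X,\widetilde{W})-\gamma_n$ or not. In Case 1, Proposition~\ref{prop:cqfir} produces an EA scheme $\Pi_n$ (of marginal $P_X^{\times n}$) converting $\widetilde{W}^{\otimes n}$ to $\widetilde{T}^{\otimes\ceil{rn}}$ with $\mathrm{Pur}$-error $\exp(-\sqrt{n}\log n)$. The first divergence $D(\Pi_n\circ\widetilde{W}^{\otimes n}\|\Pi_n\circ W^{\otimes n})$ is controlled by data processing and tensorization, giving $n\,\mathbb{E}_{x\sim P_X}D(\widetilde{W}_x\|W_x)$. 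The second divergence $D(\Pi_n\circ\widetilde{W}^{\otimes n}\|T^{\otimes\ceil{rn}})$ is estimated by a quantum continuity bound of relative entropy (e.g.~\cite{Bluhm2023May}) between the near-equal $\Pi_n\circ\widetilde{W}^{\otimes n}$ and $\widetilde{T}^{\otimes\ceil{rn}}$, plus $\ceil{rn}\,\mathbb{E}_{s\sim P_S}D(\widetilde{T}_s\|T_s)$. In Case 2, pick $r'\le r$ saturating $r'I(P_S,\widetilde{T})+\delta_n=I(P_X,\widetilde{W})-\gamma_n$, run Proposition~\ref{prop:cqfir} on a sub-string $\bm{s}^{\ceil{r'n}}$ of type $P_S'\approx P_S$, and append a replacer channel $\mathcal{R}^{\sigma_Z^{\otimes\ceil{rn}-\ceil{r'n}}}$ on the leftover coordinates, where $\sigma_Z$ attains $\inf_{\sigma_Z}\sum_s P_S(s)D(\widetilde{T}_s\|\sigma_Z)=I(P_S,\widetilde{T})$. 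The same fidelity-via-divergences bound now picks up an extra term $(\ceil{rn}-\ceil{r'n})I(P_S,\widetilde{T})$, which after substituting the definition of $r'$ and absorbing $n(\gamma_n+\delta_n)=\mathcal{O}(n^{3/4}\sqrt{\log n})$, yields the $|\cdot|_+$ part of the variational expression.

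\textbf{Assembly.} Combining both cases, taking $n\to\infty$, and optimizing over $P_X$, $\{\widetilde{W}_x\}$, $\{\widetilde{T}_s\}$ and over $P_S$ (via the Sion swap justified exactly as in Proposition~\ref{prop:var}) yields the upper bound by the right-hand side of \eqref{equ:maincq}, which by Proposition~\ref{prop:var} equals $\sup_{1<p<2}\frac{2-p}{p}\big(rI^\flat_{p/2}(T)-I^\flat_{p/(2(p-1))}(W)\big)$. (The subsequent upgrade from log-Euclidean to sandwiched in Theorem~\ref{thm:maincq} is handled separately via pinching with the universal symmetric state of Lemma~\ref{lem:sym}, using Proposition~\ref{prop:mainpro}(iii); this costs only a polylogarithmic factor vanishing in the rate, but is not needed for Proposition~\ref{prop:cqinter} itself.)

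\textbf{Main obstacle.} The delicate step is the Case 2 splicing in the CQ setting: one must concatenate a non-trivial EA conversion on the first $\ceil{r'n}$ output coordinates with a fixed-state replacer on the remaining $\ceil{rn}-\ceil{r'n}$ coordinates, while preserving the EA/CP structure and respecting the fact that both sub-strings are only $\mathcal{O}(1/n)$-close in TV to $P_S$ rather than equal. Quantitatively tracking this perturbation requires both the robustness of Proposition~\ref{prop:cqfir} under input-type variations and the quantum continuity bounds for relative entropy applied uniformly across all auxiliary channel neighborhoods, which is where the most care is needed.
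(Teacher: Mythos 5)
Your proposal is correct and takes essentially the same route as the paper: the paper establishes Proposition~\ref{prop:cqinter} precisely by observing that the variational expression of Proposition~\ref{prop:var} coincides with the classical one and then repeating the argument of Proposition~\ref{prop:ach-var-pur} with Proposition~\ref{prop:cqfir} replacing Proposition~\ref{prop:1order-improved} (reduction to types via minimax, the two-case analysis using Lemma~\ref{Lem:fid-D}, data processing, continuity bounds, and the replacer splicing in the second case). Your outline reproduces exactly this, including correctly deferring the pinching upgrade from log-Euclidean to sandwiched quantities to the proof of Theorem~\ref{thm:maincq}.
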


\begin{remark}
    In the classical setting, we established the achievability part through the variational expression of the strong converse exponent. The analog quantum R\'enyi divergence with the variational expression is the log-Euclidean R\'enyi divergence. Hence, we use the log-Euclidean R\'enyi divergence to construct this intermediate quantity and derive its variational expression which is similar to the classical one in Proposition~\ref{prop:var}. By following the procedures in Proposition~\ref{prop:ach-var-pur}, we get such an upper bound in terms of log-Euclidean R\'enyi divergence.
\end{remark}

Next, we improve the upper bound in Proposition~\ref{prop:cqinter} to the correct one to accomplish the proof of the achievability part.

\begin{proof}
We first fix an integer $m$ and denote $\mathcal{P}_{\sigma_{Y^m}^u}\circ W_{X \rightarrow Y}^{\ox m}$ and $\mathcal{P}_{\sigma_{Z^m}^u}\circ T_{S \rightarrow Z}^{\ox m}$ as $W_{X^m \rightarrow Y^m}$ and $T_{S^m \rightarrow Z^m}$
respectively. By Proposition~\ref{prop:cqinter}, we can find a sequence of entanglement-assisted conversion schemes $$\{\Pi_k=\{\phi_{\tilde{S}_k\tilde{Y}_k}, \mathcal{E}_{\tilde{S}_k S^{m\ceil{kr}}\rightarrow X^{mk}},\mathcal{D}_{\tilde{Y}_kY^{mk} \rightarrow Z^{m\ceil{kr}}}\}  \}_{k \in \mathbb{N}}$$
such that
\begin{equation}
\label{equ:int}
\begin{split}
&-\lim_{k \rightarrow \infty} \frac{1}{k} \log \inf_{\rho_{R_kS^{m\ceil{kr}}}} F(\mathcal{D}\circ W^{\otimes k}\circ \mathcal{E}(\phi_{\tilde{S}_k\tilde{Y}_k} \otimes \rho_{R_kS^{m\ceil{kr}}}), T^{\otimes \ceil{kr}}_{S^m \rightarrow Z^m}(\rho_{R_kS^{m\ceil{kr}}})) \\
\leq & \sup_{1<p<2} \frac{2-p}{p}\big(rI_{\frac{p}{2}}^{\flat}(T_{S^m \rightarrow Z^m})-I_{\frac{p}{2(p-1)}}^{\flat}(W_{X^m \rightarrow Y^m}) \big).
\end{split}
\end{equation}
For integer $n=mk$, we can construct an entanglement-assisted conversion scheme $\tilde{\Pi}_n$ based on $\Pi_k$ as 
$$\tilde{\Pi}_n=\{ \phi_{\tilde{S}_k\tilde{Y}_k}, \mathcal{E}_{\tilde{S}_k S^{m\ceil{kr}}\rightarrow X^{mk}},\mathcal{P}^{\otimes \ceil{kr}}_{\sigma_{Z^m}^u}\circ \mathcal{D}_{\tilde{Y}_kY^{mk} \rightarrow Z^{m\ceil{kr}}}\circ \mathcal{P}^{\otimes k}_{\sigma_{Y^m}^u} \}$$
 and the performance of this protocol can be evaluated as
\begin{equation}
\label{equ:reint}
\begin{split}
&\inf_{\rho_{R_kS^{m\ceil{kr}}}} F\left(\mathcal{P}^{\otimes \ceil{kr}}_{\sigma_{Z^m}^u}\circ \mathcal{D} \circ W^{\otimes k}_{X^m \rightarrow Y^m}\circ \mathcal{E}(\phi_{\tilde{S}_k\tilde{Y}_k} \otimes \rho_{R_kS^{m\ceil{kr}}}), T^{\otimes m\ceil{kr}}_{S \rightarrow Z}(\rho_{R_kS^{m\ceil{kr}}})\right) \\
\overset{(a)}\geq &\frac{1}{v_{m,|B|}^{\ceil{kr}}}\inf_{\rho_{R_kS^{m\ceil{kr}}}} F\left(\mathcal{P}^{\otimes \ceil{kr}}_{\sigma_{Z^m}^u}\circ \mathcal{D} \circ W^{\otimes k}_{X^m \rightarrow Y^m}\circ \mathcal{E}(\phi_{\tilde{Y}_k\tilde{A}_k} \otimes \rho_{R_kS^{m\ceil{kr}}}), T^{\otimes\ceil{kr}}_{S^m \rightarrow Z^m}(\rho_{R_kS^{m\ceil{kr}}})\right) \\
\overset{(b)}\geq &\frac{1}{v_{m,|B|}^{\ceil{kr}}}\inf_{\rho_{R_kS^{m\ceil{kr}}}} F\left(\mathcal{D} \circ W^{\otimes k}_{X^m \rightarrow Y^m}\circ \mathcal{E}(\phi_{\tilde{Y}_k\tilde{A}_k} \otimes \rho_{R_kS^{m\ceil{kr}}}), T^{\otimes \ceil{kr}}_{S^m \rightarrow Z^m}(\rho_{R_kS^{m\ceil{kr}}})\right),
\end{split}
\end{equation}
where (a) is due to Lemma~\ref{lem:appen1} and (b) is from the data processing inequality of the fidelity function.

Eq.~(\ref{equ:int}),  Eq.~(\ref{equ:reint}) and Lemma \ref{lem:sym} imply that
\begin{equation}
\label{equ:fia}
\begin{split}
&\lim_{n \rightarrow \infty} -\frac{1}{n} \log \mathrm{Succ}_{\mathrm{Pur}}^{\mathrm{EA}}(W^{\otimes n}\rightarrow T^{\otimes m\ceil{kr}}) - \frac{\log v_{m, |Z|}}{m} \\
\leq &\sup_{1<p<2} \frac{2-p}{p m}\big(rI_{\frac{p}{2}}^{\flat}(T_{S^m \rightarrow Z^m})-I_{\frac{p}{2(p-1)}}^{\flat}(W_{X^m \rightarrow Y^m}) \big)  \\
\leq&\sup_{1<p<2} \frac{2-p}{p m}\big(r \sup_{P_{S^m} \in \mathcal{P}(\mathcal{S}^{m})} D_{\frac{p}{2}}^{\flat}\big(\sum_{s^m}P_{S^m}(s^m)\ket{s^m}\bra{s^m} \otimes
\mathcal{P}_{\sigma^u_{Z^m}}(T^{\otimes m}_{s^m})\|\sum_{s^m}P_{S^m}(s^m)\ket{s^m}\bra{s^m} \otimes \sigma^u_{Z^m} \big)  \\
&-\sup_{P_{X} \in \mathcal{P}(\mathcal{X})}\inf_{\sigma_{Y^m} \in \mathcal{S}(Y^m)}
D_{\frac{p}{2(p-1)}}^{\flat}\big(\sum_{x^m}P_{X^m}(x^m)\ket{x^m}\bra{x^m} \otimes
\mathcal{P}_{\sigma^u_{Y^m}}(W^{\otimes m}_{x^m})\|\sum_{x^m}P_{X^m}(x^m)\ket{x^m}\bra{x^m} \otimes \sigma_{Y^m} \big) \big)\\
\overset{(a)}=&\sup_{1<p<2} \frac{2-p}{p m}\big(r \sup_{P_{S^m} \in \mathcal{P}_{\mathrm{sym}}(\mathcal{S}^{m})} D_{\frac{p}{2}}^{\flat}\big(\sum_{s^m}P_{S^m}(s^m)\ket{s^m}\bra{s^m} \otimes
\mathcal{P}_{\sigma^u_{Z^m}}(T^{\otimes m}_{s^m})\|\sum_{s^m}P_{S^m}(s^m)\ket{s^m}\bra{s^m} \otimes \sigma^u_{Z^m} \big)  \\
&-\sup_{P_{X} \in \mathcal{P}(\mathcal{X})}\inf_{\sigma_{Y^m} \in \mathcal{S}(Y^m)}
D_{\frac{p}{2(p-1)}}^{\flat}\big(\sum_{x^m}P_{X^m}(x^m)\ket{x^m}\bra{x^m} \otimes
\mathcal{P}_{\sigma^u_{Y^m}}(W^{\otimes m}_{x^m})\|\sum_{x^m}P_{X^m}(x^m)\ket{x^m}\bra{x^m} \otimes \sigma_{Y^m} \big) \big),
\end{split}
\end{equation}
where $(a)$ is because that $\mathcal{P}_{\sigma_{Z^m}^u} \circ T^{\otimes m}$ and $\sigma_{Z^n}^u$ are permutation invariant, the supremum can be restricted to over all permutation invariant distributions. Then we can further upper bound 
Eq.~(\ref{equ:fia}) as
\begin{equation}
\begin{split}
&\lim_{n \rightarrow \infty} -\frac{1}{n} \log \mathrm{Succ}_{\mathrm{Pur}}^{\mathrm{EA}}(W^{\otimes n}\rightarrow T^{\otimes  m\ceil{kr}}) -\frac{\log v_{m, |Z|}}{m}  \\
\overset{(a)}\leq&\sup_{1<p<2} \frac{2-p}{p m}\big(r \sup_{P_{S^m} \in \mathcal{P}_{sym}(\mathcal{S}^{ m})}
 D_{\frac{p}{2}}^{\flat}(\sum_{s^m}P_{S^m}(x^m)\ket{s^m}\bra{s^m} \otimes
\mathcal{P}_{\sigma^u_{Z^m}}(T^{\otimes m}_{s^m})\|\sum_{s^m}P_{S^m}(s^m)\ket{s^m}\bra{s^m} \otimes \sigma^u_{Z^m} )  \\
&-\sup_{P_{X} \in \mathcal{P}(\mathcal{X})}\inf_{\sigma_{Y^m} \in \mathcal{S}_{sym}(Y^m)}
D_{\frac{p}{2(p-1)}}^{\flat}(\sum_{x^m}P^{\otimes m}_{X}(x^m)\ket{x^m}\bra{x^m} \otimes
\mathcal{P}_{\sigma^u_{Y^m}}(W^{\otimes m}_{x^m})\|\sum_{x^m}P^{\otimes m}_{X}(x^m)\ket{x^m}\bra{x^m} \otimes \sigma_{Y^m} ) \big)  \\
\leq &\sup_{1<p<2} \frac{2-p}{p m}\big(r \sup_{P_{Y^m} \in \mathcal{P}_{sym}(\mathcal{Y}^{ m})}
 D_{\frac{p}{2}}^{\flat}(\sum_{s^m}P_{S^m}(s^m)\ket{s^m}\bra{s^m} \otimes
\mathcal{P}_{\sigma^u_{Z^m}}(T^{\otimes m}_{s^m})\|\sum_{s^m}P_{S^m}(s^m)\ket{s^m}\bra{s^m} \otimes \sigma^u_{Z^m} )  \\
&-\sup_{P_{X} \in \mathcal{P}(\mathcal{X})}
D_{\frac{p}{2(p-1)}}^{\flat}(\sum_{x^m}P^{\otimes m}_{X}(x^m)\ket{x^m}\bra{x^m} \otimes
\mathcal{P}_{\sigma^u_{Y^m}}(W^{\otimes m}_{x^m})\|\sum_{x^m}P^{\otimes m}_{X}(x^m)\ket{x^m}\bra{x^m} \otimes v_{m,|Y|}\sigma^u_{Y^m} ) \big)  \\
\overset{(c)}\leq &\sup_{1<p<2} \frac{2-p}{p m}\big(r \sup_{P_{Y^m} \in \mathcal{P}_{sym}(\mathcal{Y}^{ m})}
 \widetilde{D}_{\frac{p}{2}}(\sum_{s^m}P_{S^m}(s^m)\ket{s^m}\bra{s^m} \otimes
\mathcal{P}_{\sigma^u_{Z^m}}(T^{\otimes m}_{s^m})\|\sum_{s^m}P_{S^m}(s^m)\ket{s^m}\bra{s^m} \otimes \sigma^u_{Z^m} )  \\
&-\sup_{P_{X} \in \mathcal{P}(\mathcal{X})}
\widetilde{D}_{\frac{p}{2(p-1)}}(\sum_{x^m}P^{\otimes m}_{X}(x^m)\ket{x^m}\bra{x^m} \otimes
\mathcal{P}_{\sigma^u_{Y^m}}(W^{\otimes m}_{x^m})\|\sum_{x^m}P^{\otimes m}_{X}(x^m)\ket{x^m}\bra{x^m} \otimes \sigma^u_{Y^m} ) \big) +\frac{\log v_{m, |Y|}}{m}, 
\end{split}
\end{equation}
where (a) is because that $\sum_{x^m}P^{\otimes m}_{X}(x^m)\ket{x^m}\bra{x^m} \otimes
\mathcal{P}_{\sigma^u_{Y^m}}(W^{\otimes m}_{x^m})$ is a permutation invariant state, the infimum can be restricted to over all symmetric states and (c) is due to $\sum_{x^m}P^{\otimes m}_{X}(x^m)\ket{x^m}\bra{x^m} \otimes
\mathcal{P}_{\sigma^u_{Y^m}}(W^{\otimes m}_{x^m})$ commutes with $\sum_{x^m}P^{\otimes m}_{X}(x^m)\ket{x^m}\bra{x^m} \otimes \sigma^u_{Y^m} $ and $\sum_{s^m}P_{S^m}(s^m)\ket{s^m}\bra{s^m} \otimes
\mathcal{P}_{\sigma^u_{Z^m}}(T^{\otimes m}_{s^m})$ commutes with $\sum_{s^m}P_{S^m}(s^m)\ket{s^m}\bra{s^m} \otimes \sigma^u_{Z^m}$. From the data processing inequality of the sandwiched R\'enyi  divergence and Proposition~\ref{prop:mainpro}~(\romannumeral3), we have
\begin{equation}
\label{equ:fian}
\begin{split}
&\lim_{n \rightarrow \infty} -\frac{1}{n} \log \mathrm{Succ}_{\mathrm{Pur}}^{\mathrm{EA}}(W^{\otimes n}\rightarrow T^{\otimes  m\ceil{kr}}) - \frac{3\log v_{m,|A|}}{m}-\frac{2\log v_{m, |B|}}{m} \\
\leq &\sup_{1<p<2} \frac{2-p}{p m}\big(\sup_{P_{S^m} \in \mathcal{P}_{\mathrm{sym}}(\mathcal{S}^{ m})}
\widetilde{D}_{\frac{p}{2}}(\sum_{s^m}P_{S^m}(s^m)\ket{s^m}\bra{s^m} \otimes
T^{\otimes m}_{s^m}\|\sum_{s^m}P_{S^m}(s^m)\ket{s^m}\bra{s^m} \otimes \sigma^u_{Z^m} )  \\
&-\sup_{P_{X} \in \mathcal{P}(\mathcal{X})}
\widetilde{D}_{\frac{p}{2(p-1)}}(\sum_{x^m}P^{\otimes m}_{X}(x^m)\ket{x^m}\bra{x^m} \otimes
W^{\otimes m}_{x^m}\|\sum_{x^m}P^{\otimes m}_{X}(x^m)\ket{x^m}\bra{x^m} \otimes \sigma^u_{Y^m} ) \big) -\frac{\log v_{m, |Z|}}{m}  \\
\overset{(a)}\leq &\sup_{1<p<2} \frac{2-p}{p m}\big(r \sup_{P_{S^m} \in \mathcal{P}_{\mathrm{sym}}(\mathcal{S}^{ m})} \inf_{\sigma_{Z^m}\in \mathcal{S}_{sym}(Z^m)}
\widetilde{D}_{\frac{p}{2}}(\sum_{s^m}P_{S^m}(s^m)\ket{s^m}\bra{s^m} \otimes
T^{\otimes m}_{s^m}\|\sum_{s^m}P_{S^m}(s^m)\ket{s^m}\bra{s^m} \otimes \sigma_{Z^m} )  \\
&-\sup_{P_{X} \in \mathcal{P}(\mathcal{X})}\inf_{\sigma_{Y^m}\in \mathcal{S}(Y^m)}
\widetilde{D}_{\frac{p}{2(p-1)}}(\sum_{x^m}P^{\otimes m}_{X}(x^m)\ket{x^m}\bra{x^m} \otimes
W^{\otimes m}_{x^m}\|\sum_{x^m}P^{\otimes m}_{X}(x^m)\ket{x^m}\bra{x^m} \otimes \sigma_{Y^m} ) \big) \\
= & \sup_{1<p<2} \frac{2-p}{p m}\big(r \sup_{P_{S^m} \in \mathcal{P}_{\mathrm{sym}}(\mathcal{S}^{ m})} \widetilde{I}_{\frac{p}{2}}(P_{S^m}:T^{\otimes m})- 
\sup_{P_{X} \in \mathcal{P}(\mathcal{X})}\widetilde{I}_{\frac{p}{2(p-1)}}(P_X^{\otimes m}:W^{\otimes m} )
\big)   \\
\overset{(b)}\leq &\sup_{1<p<2} \frac{2-p}{p m}\big(r \widetilde{I}_{\frac{p}{2}}(T^{\otimes m})- 
m\widetilde{I}_{\frac{p}{2(p-1)}}(W)
\big) +\frac{3\log v_{m,|A|}}{m}+\frac{2\log v_{m, |B|}}{m} \\
\overset{(c)}= &\sup_{1<p<2} \frac{2-p}{p }\big(r \widetilde{I}_{\frac{p}{2}}(T)- 
\widetilde{I}_{\frac{p}{2(p-1)}}(W)
\big) ,
\end{split}
\end{equation}
where (a) is from Lemma~\ref{lem:sym} and Proposition~\ref{prop:mainpro}~(\romannumeral1), (b) comes from Proposition~\ref{prop:mainpro}~(\romannumeral4) and (c) is from Proposition~\ref{prop:mainpro}~(\romannumeral6).

Noticing that Eq.~(\ref{equ:fian}) holds for any integer $m$, by letting $m \rightarrow \infty$, we have
\begin{equation}
\lim_{n \rightarrow \infty} -\frac{1}{n} \log \mathrm{Succ}_{\mathrm{Pur}}^{\mathrm{EA}}(W^{\otimes n}\rightarrow T^{\otimes  m\ceil{kr}}) 
\leq \sup_{1<p<2} \frac{2-p}{p}\big(r\widetilde{I}_{\frac{p}{2}}(T_{Y\rightarrow B})-\widetilde{I}_{\frac{p}{2(p-1)}}(W_{X\rightarrow A})   \big).
\end{equation}
\end{proof}


\section{Conclusion}\label{sec:conclusion}

In this work, we initiated the study of high-order refinements in the context of the channel interconversion problem. In particular, we determined the exact strong converse exponent under the purified distance for the conversion of classical-quantum channels, with natural extensions to metrics based on R\'enyi divergences. We further derived the exact strong converse exponent under the total variation distance in the fully classical setting. Notably, our results apply to all rates below and above the capacities of the involved channels.

Our analysis is starting from the strong converse exponent methods in \cite{Dueck1979Jan, csiszar2011information, MosonyiOgawa2017strong, Li2023Nov, Berta2024Oct}, combined with an improved first-order achievability scheme. However, several open questions remain. In particular, understanding the small deviation regime and the precise characterization of error exponents appear to require new techniques. We leave these as interesting directions for future research.


\section*{Acknowledgment}

AO, YY, and MB acknowledge funding by the European Research Council (ERC Grant Agreement No.~948139) and MB acknowledges support from the Excellence Cluster - Matter and Light for Quantum Computing (ML4Q).

\printbibliography


\appendix

\section{Improved first order achievability for classical channel interconversion}\label{app:1order-improved}

In this section, we prove Proposition~\ref{prop:1order-improved} which provides a first order achievability for channel interconversion with a vanishing conversion error. 

\begin{proposition}[Restatement of Proposition~\ref{prop:1order-improved}]
Let $T_{Z|S}$ and $W_{Y|X}$ be two channels over finite alphabets. Let $n\in \mathbb{N}$, $r\ge 0$,    $P_{S}$ be a type on $\cS^{\ceil{rn}}$ and $P_{X}\in \cP(\cX)$.  Let $\delta_{n} = 2\frac{\sqrt{\log(n) r \var( {T}_{Z|S} )} }{n^{1/4}}$ and  $\gamma_{n} =2\frac{\sqrt{\log(n) \var( W_{Y|X})} }{n^{1/4}}$    such that:
\begin{align}
r\cdot I(P_{S}, {T}_{Z|S} ) +\delta_{n}\le I(P_{X}, {W}_{Y|X}) - \gamma_{n}.
\end{align}
Then, there is a shared-randomness strategy converting $W^{\times n}$ to $T^{\times \ceil{rn}}$ under any input of type close to $P_{S}$ and achieving a conversion TV-error at most $\exp(-\sqrt{n}\log(n))$. That is, for all $A\ge 0$, there is  $n_{0}(A)$ such that for all $n\ge n_{0}$:
\begin{align}
\inf_{N\in \mathrm{SR}} \sup_{\substack{\bm{s}^{\ceil{rn}}\sim P_{S}' \\ \mathrm{TV}(P_{S}, P_{S}')\le \frac{A}{n}}} \mathrm{TV}\left(({N}\circ W^{\times n})_{\bm{s}^{\ceil{rn}}} , T^{\times \ceil{rn}}_{\bm{s}^{\ceil{rn}}}\right) \le \exp(-\sqrt{n}\log(n)).
\end{align}
Furthermore, $N$ has marginal $P_{\bm{X}}^{\times n}$.
\end{proposition}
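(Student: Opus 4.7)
The plan is a two-stage concatenation. First, I would use a refined channel simulation to reproduce $T^{\times \ceil{rn}}$ from the output of $M$ bits of noiseless shared communication, with $\log M \approx \ceil{rn}\, I(P_S, T_{Z|S}) + \tfrac{n}{2}\delta_n$; second, I would encode the simulation index $m \in [M]$ into a codeword $\bm{x}^n$ drawn from $P_X^{\times n}$ and recover it after passing through $W^{\times n}$. The hypothesis $r \cdot I(P_S, T_{Z|S}) + \delta_n \le I(P_X, W_{Y|X}) - \gamma_n$ guarantees $\tfrac{1}{n}\log M \le I(P_X, W_{Y|X}) - \gamma_n$, leaving a $\gamma_n$ gap below the mutual information on the coding side. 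Because the codebook is drawn from $P_X^{\times n}$, the resulting SR strategy automatically has marginal $P_X^{\times n}$.

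For the simulation stage, I would invoke the tight finite-blocklength channel simulation bounds of \cite{Oufkir2024OctMeta}: with $\log M$ as above, the simulation TV-error is at most $\exp(-\sqrt{n}\log n)$ by a moderate-deviations soft-covering estimate — the exponent is $\Theta(n \delta_n^{\,2}/\var(T_{Z|S})) = \Theta(\sqrt{n}\log n)$ by the choice of $\delta_n$. Robustness under inputs $\bm{s}^{\ceil{rn}} \sim P_S'$ with $\mathrm{TV}(P_S, P_S') \le A/n$ follows from a Taylor expansion of $I(P_S', T_{Z|S})$ around $P_S$ in the spirit of \cite{hayashi2016correlation}, which perturbs the mutual information by $O(1/n)$; this perturbation is absorbed into the $\Theta(\delta_n)$ slack, so the same codebook handles all nearby input types uniformly.

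For the coding stage, I would apply a random coding argument with codewords drawn iid from $P_X^{\times n}$, decoded by maximum likelihood (or joint typicality). A moderate-deviations refinement of the random coding error exponent \cite{Fano1961Nov} yields decoding error at most $\exp(-n\, \Theta(\gamma_n^{\,2}/\var(W_{Y|X}))) = \exp(-\sqrt{n}\log n)$ whenever $\tfrac{1}{n}\log M \le I(P_X, W_{Y|X}) - \gamma_n$. Putting the two stages together, the overall synthesized channel differs from $T^{\times \ceil{rn}}$ on $\bm{s}^{\ceil{rn}}$ only when either the simulation fails or the decoder errs; by the triangle inequality for TV distance, the error is bounded by the sum, i.e.~$2 \exp(-\sqrt{n}\log n) \le \exp(-\sqrt{n}\log n + \log 2)$, which for $n$ sufficiently large is below $\exp(-\sqrt{n}\log n)$.

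The main obstacle I foresee is calibrating the two moderate-deviations arguments tightly and simultaneously, while preserving uniformity over all input types $P_S'$ within $A/n$ of $P_S$. In particular, the simulation protocol must achieve error $\exp(-\sqrt{n}\log n)$ on every such $P_S'$ — not just on $P_S$ — which forces the Taylor remainder for $I(\cdot, T_{Z|S})$ to be controlled at the same scale as $\delta_n$, and the variance term in the soft-covering bound to be handled with \cite[Corollary III.5]{Dupuis2019Jul}-type uniform bounds. Similarly, on the coding side, one must ensure that the codebook distribution $P_X^{\times n}$ is compatible with the simulation's input requirement (the simulated side uses type $P_S$, but the coding side carries indices along $P_X^{\times n}$, so the two distributions decouple cleanly). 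A standard type-class analysis, combined with the continuity of the mutual information variance, should close this remaining gap without altering the overall error scaling.
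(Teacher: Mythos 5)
Your proposal follows essentially the same route as the paper: concatenate random coding over $W^{\times n}$ (with codewords iid from $P_X^{\times n}$, giving a moderate-deviations exponent $\Theta(n\gamma_n^2/\var(W))=\Theta(\sqrt{n}\log n)$ from the $\gamma_n$ gap) with a tight channel simulation of $T^{\times\ceil{rn}}$ rounded to a shared-randomness strategy via \cite{Oufkir2024OctMeta}, using Taylor expansions of the R\'enyi mutual information and continuity in the input type, exactly as in Appendix~\ref{app:1order-improved}. The only blemish is the final bookkeeping step, since $2\exp(-\sqrt{n}\log n)\le\exp(-\sqrt{n}\log n)$ is false as written; the paper avoids this by targeting error $\tfrac13\exp(-\sqrt{n}\log n)$ per stage, which your moderate-deviations bounds comfortably allow, so this is a trivially fixable constant issue rather than a genuine gap.
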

\begin{proof}[Proof of Proposition~\ref{prop:1order-improved}]
 It suffices to consider a fixed type $P_{S}$ since by the continuity bound of \cite{Fannes1973Dec}, we have that:
\begin{align}
    \mathrm{TV}(P_{S}, P_{S}')\le \frac{A}{n} \Rightarrow I(P_{S}', {T}_{Z|S} ) \le I(P_{S}, {T}_{Z|S} ) +  \cO\left(\frac{1}{n}\right).
    \end{align}
Let $R$ such that 
\begin{align}\label{eq:R}
r\cdot I(P_{S}, {T}_{Z|S} ) +\delta_{n}\le R \le  I(P_{X}, {W}_{Y|X}) - \gamma_{n}.
\end{align}
Let $\id_2$ be the binary identity channel. In order to convert $T$ to $W$, we will use a concatenation between channel coding and channel simulation. More precisely, we use channel coding to simulate $\id_2^{\times \ceil{Rn}}$ using $W^{\times n}$.  Then, the obtained $\id_2^{\times \ceil{Rn}}$ is used to simulate $T^{\times \ceil{rn}}$. That is 
\begin{align}
 W^{\times n} \xrightarrow{\text{channel coding}} \id_2^{\times \ceil{Rn}}\xrightarrow{\text{channel simulation}}T^{\times \ceil{rn}}.
\end{align}
It is not difficult to see that the global conversion error can be bounded by the sum of the errors of channel coding and simulation. The inequalities that $R$ satisfies in \eqref{eq:R} ensure that both tasks have small error. \\
\textbf{Channel coding.} Let $M=2^{\ceil{Rn}}$. Using the random coding achievability error exponent \cite{Fano1961Nov}, we have that for all $\alpha \in (0, \frac{1}{2})$
\begin{align}
\epsilon_{\text{coding}}^{\text{SR}}\le \exp\left(-n\frac{\alpha}{1-\alpha}\left( I_{1-\alpha}(P_{X}, W_{Y|X}) -R\right)\right).
\end{align}
 We  choose $\alpha = \frac{\gamma_{n}}{\var(W_{Y|X})  }$. By Taylor expansion \cite[Proposition 11]{hayashi2016correlation}, we have that 
\begin{align}
I_{1-\alpha}(P_{X}, W_{Y|X})&= I(P_{X}, W_{Y|X}) -\frac{1}{2}\alpha \var(P_{X}, W_{Y|X}) +o(\alpha)
\ge  I(P_{X}, W_{Y|X}) - \frac{\gamma_{n}}{2} + o(\gamma_{n}).
\end{align}
Since $R\le I(P_{X}, W_{Y|X})  -\gamma_{n}$, we deduce that 
\begin{align}
\epsilon_{\text{coding}}^{\text{SR}} &\le \exp\left(-n\frac{\alpha}{1-\alpha}\left( I_{1-\alpha}(P_{X}, W_{Y|X}) -R\right)\right)
\\&\le \exp\left(-n \frac{\gamma_{n}}{\var( W_{Y|X})  }\left(  I(P_{X}, W_{Y|X}) - \frac{\gamma_{n}}{2} + o(\gamma_{n}) -R\right)\right)
\\&\le \exp\left(-n \frac{\gamma_{n}}{\var( W_{Y|X})  }\left(   \frac{\gamma_{n}}{2} + o(\gamma_{n}) \right)\right)
\\&\le \frac{1}{3}\exp\left( -\sqrt{n}\log(n) \right)
\end{align}
where we use in the last inequality $\gamma_{n} = 2\frac{\sqrt{\log(n) \var(W_{Y|X})} }{n^{1/4}}$ and $n$ is sufficiently large.

\textbf{Channel simulation.} Since we consider inputs $\bm{s}^{\ceil{rn}}$ of fixed type $P_{S}$, it is not clear how to apply the worst-case channel simulation achievability of \cite{cao2024channel}. Let $P_Z = \sum_{s\in \cS} P_{S}(s) T_{Z|S}(\cdot|s)$. 

Let $\cG = \left\{ \bm{z}^{\ceil{rn}} : T^{\times \ceil{rn}}(\bm{z}^{\ceil{rn}}|\bm{s}^{\ceil{rn}}) \le \frac{1}{n}e^{\ceil{Rn}}\cdot P_{\bm{Z}}^{\times \ceil{rn}}(\bm{z}^{\ceil{rn}}) \right\}$, $\alpha = \frac{\frac{1}{n}e^{\ceil{Rn}}-1}{\frac{1}{n}e^{\ceil{Rn}}\cdot P^{\times \ceil{rn}}(\cG) -T^{\times \ceil{rn}}(\cG|\bm{s}^{\ceil{rn}})}$, and construct:
\begin{align}
\widetilde{T}^{\rm{NS}}(\bm{z}^{\ceil{rn}}|\bm{s}^{\ceil{rn}}) =
\begin{cases}
  \alpha T^{\times \ceil{rn}}(\bm{z}^{\ceil{rn}}|\bm{s}^{\ceil{rn}}) +(1-\alpha) \frac{1}{n}e^{\ceil{Rn}}\cdot P_{\bm{Z}}^{\times \ceil{rn}}(\bm{z}^{\ceil{rn}}) & \text{if } \bm{z}^{\ceil{rn}}\in \cG, \\
\frac{1}{n}e^{\ceil{Rn}}\cdot P_{\bm{Z}}^{\times \ceil{rn}}(\bm{z}^{\ceil{rn}}) & \text{if } \bm{z}^{\ceil{rn}}\notin \cG.
\end{cases}
\end{align}
We have $\alpha \in [0,1]$ and thus $\widetilde{T}^{\rm{NS}}(\bm{z}^{\ceil{rn}}|\bm{s}^{\ceil{rn}}) \le \frac{1}{n}e^{\ceil{Rn}}\cdot P_{\bm{Z}}^{\times \ceil{rn}}(\bm{z}^{\ceil{rn}}) $ for all $\bm{z}^{\ceil{rn}}$.  Hence, the channel $\widetilde{T}^{\rm{NS}}$ can be seen as a non-signaling synthesized channel for simulating $W^{\ceil{rn}}$ and achieves a simulation error satisfying for any $\beta\ge 0$:
\begin{align}
\mathrm{TV}\left(\widetilde{T}^{\rm{NS}}(\cdot|\bm{s}^{\ceil{rn}}) ,  T^{\times \ceil{rn}}(\cdot|\bm{s}^{\ceil{rn}})\right) &= \frac{1}{2}\sum_{\bm{z}^{\ceil{rn}} \in \cG} (1-\alpha) \left( \frac{1}{n}e^{\ceil{Rn}} P_{\bm{Z}}^{\times \ceil{rn}}(\bm{z}^{\ceil{rn}})-T^{\times \ceil{rn}}(\bm{z}^{\ceil{rn}}|\bm{s}^{\ceil{rn}})\right)
\\&\quad +  \frac{1}{2}\sum_{\bm{z}^{\ceil{rn}} \notin \cG}  \left( T^{\times \ceil{rn}}(\bm{z}^{\ceil{rn}}|\bm{s}^{\ceil{rn}})-\frac{1}{n}e^{\ceil{Rn}} P_{\bm{Z}}^{\times \ceil{rn}}(\bm{z}^{\ceil{rn}})\right)
\\&= \sum_{\bm{z}^{\ceil{rn}} \notin \cG}  \left( T^{\times \ceil{rn}}(\bm{z}^{\ceil{rn}}|\bm{s}^{\ceil{rn}})-\frac{1}{n}e^{\ceil{Rn}} P_{\bm{Z}}^{\times \ceil{rn}}(\bm{z}^{\ceil{rn}})\right)
\\&\le \sum_{\bm{z}^{\ceil{rn}} \notin \cG}   \left(T^{\times \ceil{rn}}(\bm{z}^{\ceil{rn}}|\bm{s}^{\ceil{rn}})\right)^{1+\beta}\left(\frac{1}{n}e^{\ceil{Rn}} P_{\bm{Z}}^{\times \ceil{rn}}(\bm{z}^{\ceil{rn}})\right)^{-\beta}
\\&\le \exp\left(	-\beta n\left(R-  r\mathbb{E}_{s\sim P_{S}} D_{1+\beta}(T_{s}\| P_Z) \right)	+\beta \log(n)	\right)
\\&\le \exp\left(	-\beta n\left(R-   rD_{1+\beta}(P_{S}T_{Z|S}\|P_{S} P_Z) \right)	+\beta \log(n)	\right)
\end{align}
where we used Jensen's inequality in the last inequality.  We shall choose $\beta = \frac{\delta_{n}}{r\var( {T}_{Z|S} )}$. By Taylor expansion \cite[Proposition 11]{hayashi2016correlation}, we have that 
\begin{align}
D_{1+\beta}(P_{S}T_{Z|S}\| P_{S}P_Z) &=  I(P_{S},T_{Z|S}) +\frac{1}{2}\beta \var(P_{S}, {T}_{Z|S} ) +o(\beta)
\le   I(P_{S},T_{Z|S}) + \frac{\delta_{n}}{2r} + o(\delta_{n}).
\end{align}
Since $R\ge r\cdot I(P_{S}, {T}_{Z|S} ) +\delta_{n}$, we deduce that 
\begin{align}
\mathrm{TV}\left(\widetilde{T}^{\rm{NS}}(\cdot|\bm{s}^{\ceil{rn}}) ,  T^{\times \ceil{rn}}(\cdot|\bm{s}^{\ceil{rn}})\right) &\le  \exp\left(	-\beta n\left(R-   rD_{1+\beta}(P_{S}T_{Z|S}\|P_{S} P_Z) \right)	+\beta \log(n)	\right)
\\&\le \exp\left(	-n  \frac{\delta_{n}}{r\var({T}_{Z|S} ) }\left( \delta_{n} -\frac{\delta_{n}}{2} -o(\delta_{n}) \right)	+ \frac{\delta_{n}\log(n)}{\var({T}_{Z|S} )} \right)
\\&= \exp\left( - \frac{n\delta_{n}^2}{ 2r\var({T}_{Z|S} ) } +\frac{\delta_{n}\log(n)}{r\var({T}_{Z|S} )} +o\left({n\delta_{n}^2}\right)\right)
\\&\le \frac{1}{3}\exp\left( -\sqrt{n}\log(n) \right),
\end{align}
where we use in the last inequality $\delta_{n} = 2\frac{\sqrt{\log(n) r \var({T}_{Z|S} )} }{n^{1/4}}$ and $n$ is sufficiently large.

This NS strategy can be rounded to a SR strategy $\widetilde{T}^{\rm{SR}}$ of size $e^{\ceil{Rn}}$ satisfying~\cite{Oufkir2024OctMeta}:
\begin{align}
\widetilde{T}^{\rm{SR}}(\bm{z}^{\ceil{rn}}|\bm{s}^{\ceil{rn}}) \ge (1-e^{-n})\widetilde{T}^{\rm{NS}}(\bm{z}^{\ceil{rn}}|\bm{s}^{\ceil{rn}}), \quad \forall \bm{z}^{\ceil{rn}}\in \cZ^{\ceil{rn}}. 
\end{align}
Therefore, there exists a SR conversion strategy to simulate $T^{\times \ceil{rn}}$ using $\id_2^{\times \ceil{Rn}}$ with an error:
\begin{align}
\epsilon_{\text{simulation}}^{\text{SR}}&\le \mathrm{TV}\left(\widetilde{T}^{\rm{SR}}(\cdot|\bm{s}^{\ceil{rn}}) ,  T^{\times \ceil{rn}}(\cdot|\bm{s}^{\ceil{rn}})\right)
\\&= \sum_{\bm{z}^{\ceil{rn}}} \left( T^{\times \ceil{rn}}(\bm{z}^{\ceil{rn}}|\bm{s}^{\ceil{rn}}) - \widetilde{T}^{\rm{SR}}(\bm{z}^{\ceil{rn}}|\bm{s}^{\ceil{rn}}) \right)_{+}
\\&\le \sum_{\bm{z}^{\ceil{rn}}} \left( T^{\times \ceil{rn}}(\bm{z}^{\ceil{rn}}|\bm{s}^{\ceil{rn}}) - (1-e^{-n})\widetilde{T}^{\rm{NS}}(\bm{z}^{\ceil{rn}}|\bm{s}^{\ceil{rn}}) \right)_{+}
\\&\le  \sum_{\bm{z}^{\ceil{rn}}} \left( T^{\times \ceil{rn}}(\bm{z}^{\ceil{rn}}|\bm{s}^{\ceil{rn}}) - \widetilde{T}^{\rm{NS}}(\bm{z}^{\ceil{rn}}|\bm{s}^{\ceil{rn}}) \right)_{+} + e^{-n}\widetilde{T}^{\rm{NS}}(\bm{z}^{\ceil{rn}}|\bm{s}^{\ceil{rn}})
\\&\le \frac{1}{3}\exp\left( -\sqrt{n}\log(n)  \right)+ e^{-n}. 
\end{align}
The concatenation strategy has then a total conversion error:
\begin{align}
\epsilon^{\text{SR}}&\le \epsilon_{\text{coding}}^{\text{SR}}+\epsilon_{\text{simulation}}^{\text{SR}}
\le \frac{1}{3}\exp\left( -\sqrt{n}\log(n)  \right)+ \frac{1}{3}\exp\left( -\sqrt{n}\log(n)  \right)+e^{-n}\le  \exp\left( -\sqrt{n}\log(n)  \right)
\end{align}
for sufficiently large $n$. 

To prove that  $N$ has marginal $P_{\bm{X}}^{\times n}$, we remark that the random coding uses codewords sampled i.i.d.\ according to $P_{\bm{X}}^{\times n}$ and the rounding strategy of \cite{Oufkir2024OctMeta} is based on standard rejection sampling technique. 
\end{proof}


\section{Improved first order achievability for classical-quantum channel interconversion}\label{app:1order-improved-cq}

In this section, we prove Proposition~\ref{prop:cqfir} that we restate. 
\begin{proposition}[Restatement of Proposition~\ref{prop:cqfir}]
Let  $W_{X \rightarrow Y}$ and  $T_{S \rightarrow Z}$ be  two classical-quantum channels.  Let $n \in \mathbb{N}$, $P_S$ be a type on $\mathcal{S}^n$ and $P_X \in \mathcal{P}(\mathcal{X})$. Let
$\delta_{n} = 3\frac{\sqrt{\log(n) \var( {T}_{S \rightarrow Z} )} }{n^{1/4}}$,   $\gamma_{n} =2\frac{\sqrt{\log(n) \var( W_{X \rightarrow Y})} }{n^{1/4}}$ and $r \geq 0$ such that
\begin{equation}
rI(P_S:T_{S \rightarrow Z})+\delta_n \leq I(P_X:W_{X \rightarrow Y})-\gamma_n.
\end{equation}
Then, there is an entanglement-assisted strategy converting $W^{\otimes n}$ to $T^{\otimes \ceil{rn}}$ under any input of type close to $P_S$ and achieving a conversion $Pur$-error at most $\exp(-\sqrt{n}\log(n))$.
\end{proposition}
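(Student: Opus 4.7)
The plan is to mirror the proof of Proposition~\ref{prop:1order-improved} by building the classical-quantum interconversion protocol as a concatenation of a refined c-q channel coding step with a refined c-q channel simulation step through $\ceil{Rn}$ noiseless classical bits, where
\[
rI(P_S, T_{S\to Z}) + \delta_n \le R \le I(P_X, W_{X\to Y}) - \gamma_n.
\]
The total purified-distance error of the concatenation is bounded by the sum of the two individual errors via the triangle inequality for the purified distance together with the data-processing inequality. I would drive each of these two errors below $\tfrac{1}{3}\exp(-\sqrt n\log n)$, leaving a small additive $e^{-n}$ overhead from rounding a non-signaling simulator to an entanglement-assisted one, so that the total stays below $\exp(-\sqrt n\log n)$ for $n$ large enough. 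Entanglement is only consumed in the simulation half, so if $T$ is classical the whole protocol needs only shared randomness, matching the footnote of the statement.

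For the coding step I would use random coding with codewords drawn i.i.d.\ from $P_X^{\otimes n}$ and invoke the Hayashi--Nagaoka/Gallager-type error-exponent bound for c-q channels: the expected error probability is at most $\exp\bigl(-n\tfrac{\alpha}{1-\alpha}(I_{1-\alpha}(P_X, W_{X\to Y})-R)\bigr)$ for $\alpha\in(0,\tfrac12)$, where $I_{1-\alpha}$ is the Sibson R\'enyi mutual information. Taylor expanding around $\alpha=1$ gives $I_{1-\alpha}(P_X, W_{X\to Y})\ge I(P_X, W_{X\to Y})-\tfrac{\alpha}{2}\var(P_X, W_{X\to Y})+o(\alpha)$, and the choice $\alpha=\gamma_n/\var(W_{X\to Y})$ together with $R\le I(P_X, W_{X\to Y})-\gamma_n$ yields the desired $\tfrac{1}{3}\exp(-\sqrt n\log n)$ bound, exactly as in the classical proof. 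For the simulation step I would apply the tight c-q channel simulation construction of \cite{Oufkir2024OctMeta, Oufkir2024OctPur}: after pinching with respect to the universal symmetric state $\sigma^u$ from Lemma~\ref{lem:sym} to reduce to a commutative problem, one builds a non-signaling simulator whose purified-distance error is controlled by the sandwiched R\'enyi divergence $\widetilde{D}_{1+\beta}(P_S T_{S\to Z}\,\|\,P_S\otimes\sigma_Z)$ for small $\beta>0$. A Taylor expansion around $\beta=0$ and the choice $\beta=\delta_n/(r\var(T_{S\to Z}))$, combined with $R\ge rI(P_S, T_{S\to Z})+\delta_n$, give a simulation error below $\tfrac{1}{3}\exp(-\sqrt n\log n)$. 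The non-signaling simulator is then converted to an entanglement-assisted one via the rounding procedure of \cite{Oufkir2024OctMeta}, losing at most $e^{-n}$. Robustness under type perturbations $\mathrm{TV}(P_S, P_S')\le A/n$ follows from the Fannes continuity of $I(P_S, T_{S\to Z})$, absorbed into the slack in $\delta_n$ and $\gamma_n$.

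The main obstacle is the simulation step: the classical rejection-sampling argument, which cleanly partitions the output strings into a ``good set'' $\mathcal{G}$ based on the ratio of $T^{\times \ceil{rn}}_{\bm s}$ to a product reference, has no direct analogue in the quantum case because $T^{\otimes \ceil{rn}}_{\bm s}$ need not commute with any fixed product reference state. The standard workaround is to pinch onto $\sigma^u_{Z^{\ceil{rn}}}$ before applying the classical-style rejection bound, which introduces a polynomial overhead $v_{\ceil{rn}, |Z|}$ in the R\'enyi estimates. This overhead contributes an additional $n^{-1/4}\sqrt{\log n}$ slack that accounts for the factor $3$ in $\delta_n$ (versus the factor $2$ in the classical statement of Proposition~\ref{prop:1order-improved}), and its compatibility with the purified-distance reliability guarantee must be verified carefully on every input state of type close to $P_S$. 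Once this pinching bookkeeping is handled, the remaining pieces---rate allocation, Taylor expansions of the R\'enyi mutual informations, rounding from non-signaling to entanglement-assisted, and the triangle inequality for the purified distance---transfer essentially verbatim from the classical proof.
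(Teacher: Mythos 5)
Your proposal is correct and follows essentially the same route as the paper: a concatenation through $\ceil{Rn}$ noiseless bits, the random-coding error exponent with the Taylor expansion and $\alpha=\gamma_n/\var(W)$ for the coding half, a pinching-based non-signaling simulator whose error is bounded by $\widetilde{D}_{1+\beta}$ with $\beta=\delta_n/(r\var(T))$ (the polynomial pinching overhead indeed accounts for the factor $3$ in $\delta_n$), and the NS-to-EA rounding of \cite{Oufkir2024OctMeta} losing $e^{-n}$. The only cosmetic deviation is that the paper pinches onto the i.i.d.\ reference $\sigma_Z^{\otimes\ceil{rn}}$ (which commutes with the cutoff projector and makes the domination $\widetilde{T}^{\mathrm{NS}}_{s}\mle \tfrac{1}{n}e^{\ceil{Rn}}\sigma_Z^{\otimes\ceil{rn}}$ immediate) rather than onto the universal symmetric state $\sigma^u$, which is used only in the later capacity-level argument.
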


\begin{proof}
The proof of this proposition is similar to Proposition~\ref{prop:1order-improved}. It can be divided into a concatenation between channel coding and channel simulation. The proof of the channel coding part is almost the same, relying on the error exponent result for classical-quantum channels \cite{Cheng2025Jul}. 

Hence, we focus on the  channel simulation part in the following.
Let $R$ such that 
\begin{align}\label{eq:R-CQ}
rI(P_S:T_{S \rightarrow Z})+\delta_n \le R \le I(P_X:W_{X \rightarrow Y})-\gamma_n.
\end{align}
\textbf{Channel simulation.}  Let $\sigma_Z=\sum_{s \in \mathcal{S}}P_S(s)T_s$, $\Pi_{s^{\ceil{rn}}}=\left\{ \cE_{\sigma^{\otimes \ceil{rn}}}(T^{\otimes \ceil{rn}}_{s^{\ceil{rn}}}) \mge \tfrac{1}{\nu_{nr}} (\frac{1}{n}e^{\ceil{Rn}}-1)\sigma^{\otimes\ceil{rn}}\right\}$  and construct
\begin{align}\label{eq:construction-smoothing-channel}
    \widetilde{T}^{\mathrm{NS}}_{s^{\ceil{rn}}} = \Pi_{s^{\ceil{rn}}}^c T_{s^{\ceil{rn}}}\Pi_{s^{\ceil{rn}}}^c + \tr{ T_{s^{\ceil{rn}}}\Pi_{s^{\ceil{rn}}} }\sigma^{\otimes \ceil{rn}}.
\end{align}
Clearly $\widetilde{T}^{\mathrm{NS}}_{s^{\ceil{rn}}}$ is a quantum state. Moreover, by pinching inequality~\cite{hayashi2002optimal}, we have that  $ T_{s^{\ceil{rn}}}\mle \nu_{nr} \cE_{\sigma^{\otimes \ceil{rn}}}(T_{s^{\ceil{rn}}})$  so 
\begin{equation}
\begin{split}
    \widetilde{T}^{\mathrm{NS}}_{s^{\ceil{rn}}} &\mle \Pi_{s^{\ceil{rn}}}^c \nu_{nr} \cE_{\sigma^{\otimes \ceil{rn}}}(T^{\otimes \ceil{rn}}_{s^{\ceil{rn}}}) \Pi_{s^{\ceil{rn}}}^c + \tr {T^{\otimes \ceil{rn}}_{s^{\ceil{rn}}} \Pi_{s^{\ceil{rn}}}} \sigma^{\otimes \ceil{rn}} 
    \\
    &\mle \Pi_{s^{\ceil{rn}}}^c (\frac{1}{n}e^{\ceil{Rn}}-1)\sigma^{\otimes \ceil{rn}}   \Pi_{s^{\ceil{rn}}}^c+ \Pi_{s^{ \ceil{rn}}} (\frac{1}{n}e^{\ceil{Rn}}-1)\sigma^{\otimes \ceil{rn}}  \Pi_{s^{\ceil{rn}}}+\sigma^{\otimes \ceil{rn}}
    = \frac{1}{n}e^{\ceil{Rn}}\sigma^{\otimes \ceil{rn}}.
\end{split}
\end{equation}
Hence, $\widetilde{T}^{\mathrm{NS}}$ can be seen as a nonsignaling synthesized channel $T^{\otimes\ceil{rn}}$ and its performance can be evaluated as 
\begin{align}
    F\left(T_{s^{\ceil{rn}}}^{\otimes \ceil{rn}}, \widetilde{T}^{\mathrm{NS}}_{s^{\ceil{rn}}}\right)&= \left(\tr{\sqrt{\sqrt{T_{s^{\ceil{rn}}}^{\otimes \ceil{rn}}} \widetilde{T}^{\mathrm{NS}}_{s^{\ceil{rn}}} \sqrt{T_{s^{\ceil{rn}}}^{\otimes \ceil{rn}}}}}\right)^2
    \\&\ge  \left(\tr{\sqrt{\sqrt{T_{s^{\ceil{rn}}}^{\otimes \ceil{rn}}} \Pi_{s^{\ceil{rn}}}^c T_{s^{\ceil{rn}}}^{\otimes \ceil{rn}} \Pi_{s^{\ceil{rn}}}^c \sqrt{T_{s^{\ceil{rn}}}^{\otimes \ceil{rn}}}}}\right)^2
    \\&= \left(\tr{\sqrt{T_{s^{\ceil{rn}}}^{\otimes \ceil{rn}}} \Pi_{s^{\ceil{rn}}}^c \sqrt{T_{s^{\ceil{rn}}}^{\otimes \ceil{rn}}} }\right)^2
     \\&= \tr{T_{s^{\ceil{rn}}}^{\otimes \ceil{rn}} \Pi_{s^{\ceil{rn}}}^c  }^2. 
\end{align}
Therefore 
\begin{align}
 T_{s^{\ceil{rn}}}^{\otimes \ceil{rn}} &\mle \frac{1}{n}e^{\ceil{Rn}}\sigma^{\otimes \ceil{rn}}
  \; \text{ and }\; 1-F\left(T_{s^{\ceil{rn}}}^{\otimes \ceil{rn}}, \widetilde{T}^{\mathrm{NS}}_{s^{\ceil{rn}}}\right)\le 1-\tr{T_{s^{\ceil{rn}}}^{\otimes \ceil{rn}} \Pi_{s^{\ceil{rn}}}^c  }^2\le2\,\tr{T_{s^{\ceil{rn}}}^{\otimes \ceil{rn}} \Pi_{s^{\ceil{rn}}}}. 
\end{align}
\sloppy Using the definition of the projector $\Pi_{s^{\ceil{rn}}}$, we have that 
\begin{equation}
    \tr{T_{s^{\ceil{rn}}}^{\otimes \ceil{rn}} \Pi_{s^{\ceil{rn}}}}= \tr{\cE_{\sigma^{\otimes \ceil{rn}}}(T_{s^{\ceil{rn}}}^{\otimes \ceil{rn}}) \Pi_{s^{\ceil{rn}}}} \ge \frac{1}{\nu_{nr}}(\frac{1}{n}e^{\ceil{Rn}}-1)\tr{\sigma^{\otimes \ceil{rn}}\Pi_{s^{\ceil{rn}}}},
\end{equation}
thus by the data processing inequality of the sandwiched R\'enyi divergence, we have that for all $\alpha>0$
\begin{align}
    \widetilde{D}_{1+\alpha}(T_{s^{\ceil{rn}}}^{\otimes \ceil{rn}} \|\sigma^{\otimes \ceil{rn}}) 
    \ge &\frac{1}{\alpha}\log \tr{T_{s^{\ceil{rn}}}^{\otimes \ceil{rn}}\Pi_{s^{\ceil{rn}}}}^{1+\alpha}\tr{\sigma^{\otimes\ceil{rn}} \Pi_{s^{\ceil{rn}}}}^{-\alpha}
    \\ \ge  &\frac{1}{\alpha}\log \tr{T_{s^{\ceil{rn}}}^{\otimes\ceil{rn}}\Pi_{s^{\ceil{rn}}}}^{1+\alpha}\left(\tfrac{\nu_{nr}}{\frac{1}{n}e^{\ceil{Rn}}-1}\tr{T_{s^{\ceil{rn}}}^{\otimes \ceil{rn}} \Pi_{s^{\ceil{rn}}}}\right)^{-\alpha}
    \\ = &\frac{1}{\alpha}\log \tr{T_{s^{\ceil{rn}}}^{\otimes \ceil{rn}}\Pi_{s^{\ceil{rn}}}} + \log\left(\tfrac{\frac{1}{n}e^{\ceil{Rn}}-1}{\nu_{nr}}\right).
\end{align}
Therefore for all $\alpha>0$
\begin{equation}
\begin{split}
     1-F\left(T_{s^{\ceil{rn}}}^{\otimes \ceil{rn}}, \widetilde{T}^{\mathrm{NS}}_{s^{\ceil{rn}}}\right)
&\le 2\exp\left(\alpha\big(\widetilde{D}_{1+\alpha}(T_{s^{\ceil{rn}}}^{\otimes \ceil{rn}}\|\sigma^{\otimes \ceil{rn}})-\log(\tfrac{\frac{1}{n}e^{\ceil{Rn}}-1}{\nu_{nr}})\big)\right)\\
&\leq 2\exp\left(\alpha\big(nr \mathbb{E}_{s \sim P_S}\widetilde{D}_{1+\alpha}(T_s\|\sigma)-nR+\log 2n(n+1)^{|Z|}\big)\right) \\
&=2\exp\left(-\alpha n\big(R-r \mathbb{E}_{s \sim P_S}\widetilde{D}_{1+\alpha}(T_s\|\sigma)-\frac{\log 2n(n+1)^{|Z|}}{n}\big)\right) 
\end{split}
\end{equation}
We shall choose $\alpha = \frac{\delta_{n}}{r\var( {T}_{S \rightarrow Z} )}$. By Taylor expansion \cite[Proposition 11]{hayashi2016correlation}, we have that 
\begin{align}
\widetilde{D}_{1+\alpha}(P_{S}{T}_{S \rightarrow Z}\| P_{S} \otimes \sigma) &=  I(P_S:{T}_{S \rightarrow Z}) +\frac{1}{2}\alpha \var(P_{S}, {T}_{S\rightarrow Z} ) +o(\alpha)
\\&\le  I(P_S:{T}_{S \rightarrow Z})  + \frac{\delta_{n}}{2r} + o(\delta_{n}).
\end{align}
Since $R\ge r\cdot I(S:Z)_{P_{S}{T}_{S \rightarrow Z}}  +\delta_{n}$, we deduce that 
\begin{equation}
\begin{split}
1-F\left(T_{s^{\ceil{rn}}}^{\otimes \ceil{rn}}, \widetilde{T}^{\mathrm{NS}}_{s^{\ceil{rn}}}\right)
&\le \exp\left(	-n  \frac{\delta_{n}}{r\var({T}_{S \rightarrow Z} ) }\left( \delta_{n} -\frac{\delta_{n}}{2} -o(\delta_{n}) 	-\frac{\log 2n(n+1)^{|Z|}}{n} \right)\right)
\\&= \exp\left( - \frac{n\delta_{n}^2}{ 2r\var({T}_{S \rightarrow Z} ) } +\frac{\delta_{n}\log(2n(n+1)^{|Z|})}{r\var({T}_{S \rightarrow Z} )} +o\left({n\delta_{n}^2}\right)\right)
\\&\le \frac{1}{9}\exp\left( -2\sqrt{n}\log(n) \right),
\end{split}
\end{equation}
where we use in the last inequality $\delta_{n} = 3\frac{\sqrt{\log(n) r \var({T}_{S \rightarrow Z} )} }{n^{1/4}}$ and $n$ is sufficiently large.

By \cite[Prop 4.1]{Oufkir2024OctMeta}, this NS strategy $\widetilde{T}^{\mathrm{NS}}$ can be rounded to an EA strategy $\widetilde{T}^{\mathrm{EA}}$ of size $e^{\ceil{Rn}}$ satisfying
\begin{equation}
    \widetilde{T}^{\mathrm{EA}}_{s^{\ceil{rn}}} \mge (1-e^{-n}) \widetilde{T}^{\mathrm{NS}}_{s^{\ceil{rn}}}.
\end{equation}
Hence, there exists an EA conversion strategy to simulate $T^{\otimes \ceil{rn}}$ using $\mathrm{id}_2^{\otimes \ceil{Rn}}$ with an error:
\begin{align}
    \begin{split}
    \epsilon^{\mathrm{EA}}_{\mathrm{simulation}}  &\leq P(\widetilde{T}^{\mathrm{EA}}_{s^{\ceil{rn}}},T^{\otimes \ceil{rn}}_{s^{\ceil{rn}}}) \\
  &\leq P(\widetilde{T}^{\mathrm{EA}}_{s^{\ceil{rn}}}, \widetilde{T}^{\mathrm{NS}}_{s^{\ceil{rn}}})+P(\widetilde{T}^{\mathrm{NS}}_{s^{\ceil{rn}}},T^{\otimes \ceil{rn}}_{s^{\ceil{rn}}})  \\
  &\leq \sqrt{e^{-n}}+\sqrt{\frac{1}{9}\exp\left( -2\sqrt{n}\log(n) \right)} 
  \\&\leq \frac{1}{2}\exp\left( -\sqrt{n}\log(n) \right).
    \end{split}
\end{align}
\end{proof}


\section{Method of types}
\label{sec-types}

Let $n$ be an integer and  $\cX$ be a finite alphabet of size $|\cX|$. Let $x^n = x_1\cdots x_n$ be an element of $\cX^n$. For $x\in\cX$ we define $n(x|x^n)$ to be the number of occurrences of $x$ in the sequence $x_1\cdots x_n$: 
\begin{equation}
   n(x|x^n) = \sum_{t=1}^n \bid\{x_t=x\}. 
\end{equation}
A type $T$ is a probability distribution on $\cX$ of the form 
\begin{equation}
    T=\Big\{\frac{n_x}{n}\Big\}_{x\in \cX} \text{ where } n_x\in \mathbb{N} \text{ and } \sum_{x\in \cX} n_x=n. \label{eq:type}
\end{equation}
The set of types of alphabet $\cX$ of length $n$ is denoted $\cT_n(\cX)$. It is a finite set of size satisfying 
\begin{equation}
\label{equ:numbertype}
   |\cT_n(\cX)|\le (n+1)^{|\cX|}. 
\end{equation}
This simple bound can be proved using the simple  observation that each $n_x$ in Eq.~\eqref{eq:type} satisfies $n_x \in \{0,1, \dots, n\}$ and thus it has at most $n+1$ possibilities.
\\We say that $x^n = x_1\cdots x_n$ has type $T$ and write $x^n \sim T$ if for all $x\in \cX$ we have $ {n(x| x^n)} = {n}T_x$.
\\A probability distribution $p$ on $\cX^n$ is permutation invariant if for all permutation $\sigma \in \fS_n$, for all $ x_1\cdots x_n \in \cX^n$ we have 
\begin{equation}
    p(x_1\cdots x_n) = p(x_{\sigma_1}\cdots x_{\sigma_n}).
\end{equation}
Let  $T\in \cT_n(\cX)$ be a type and  $p$ be a permutation invariant  probability distribution on $\cX^n$.
Clearly if two sequences $x_{1}^n$ and $y_{1}^n$ have the same type $T$ then $y_{1}^n$ can be obtained by permuting the elements of $x_{1}^n$ so
\begin{equation}
    p(x^n)= p(y_1^n) , \quad \forall x^n\sim T, \; \forall y_1^n\sim T.
\end{equation}
We denote this value by $\frac{\alpha_T}{|T|}$ where $|T|$ is the number of elements in $\cX^n$ of type $T$. So we can write 
\begin{equation}
    p = \sum_{T\in \cT_n(\cX)} \alpha_T \cU_T
\end{equation}
where $\cU_T$ is the uniform probability distribution supported on $T$:
\begin{equation}
    \cU_T(x^n) = \frac{1}{|T|}\bid\{x^n\sim T\}.
\end{equation}
Since $p$ and $\{\cU_T\}_{T\in \cT_n(\cX)}$ are all probability distributions on $\cX^n$, $(\alpha_T)_{T\in \cT_n(\cX)}$ is a probability distribution on $\cT_n(\cX)$. In particular, since we have $|\cT_n(\cX)|\le (n+1)^{|\cX|}$ there is a type $T^\star$ such that 
\begin{equation}
    \alpha_{T^\star} \ge \frac{1}{ (n+1)^{|\cX|}}
\end{equation}
 because otherwise $\sum_{T\in \cT_n(\cX)}\alpha_T< \sum_{T\in \cT_n(\cX)} \frac{1}{ (n+1)^{|\cX|}} = |\cT_n(\cX)|\cdot\frac{1}{ (n+1)^{|\cX|}} \le 1$ which contradicts the fact that $(\alpha_T)_{T\in \cT_n(\cX)}$ is a probability distribution on $\cT_n(\cX)$.


\section{Variational formulas for strong converse exponents}\label{app:variational}

In this section we prove Proposition \ref{prop:ach-var-u} which we restate:

\begin{proposition}[Restatement of Proposition \ref{prop:ach-var-u}]
Let $\mathbb{A} \subset [0,1]$ be a convex set and let $W_{Y|X}$ and $T_{Z|S}$ be two classical channels over finite alphabets. 
We have that:
\begin{align}
 &  \sup_{\alpha\in \mathbb{A}}\sup_{1<p<\frac{1}{1-\alpha}}	\frac{1-p+\alpha p}{p}\left(r\cdot I_{(1-\alpha)p}(T)-I_{\frac{\alpha p}{p-1}}(W)\right).
 \\&=\inf_{P_{X}} \sup_{P_{S}} \inf_{\widetilde{W}_{Y|X}} \inf_{\widetilde{T}_{Z|S} }\sup_{\alpha\in \mathbb{A}} \;\; \alpha D\left(P_{X}\widetilde{W}_{Y|X}\middle\|  P_{X} W_{Y|X}\right) +  r(1-\alpha)D\left( P_{S}\widetilde{T}_{Z|S}\middle\| P_{S} T_{Z|S}\right) \\&\qquad\qquad\qquad\qquad\qquad +\alpha\left|r \cdot  I(P_{S}, \widetilde{T}_{Z|S} )-  I(P_{X}, \widetilde{W}_{Y|X} ) \right|_{+}.
\end{align}
\end{proposition}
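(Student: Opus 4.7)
The plan is to establish the variational identity via Gibbs-type representations of R\'enyi divergences, the separable structure of the objective in the two pairs $(P_S,\widetilde{T})$ and $(P_X,\widetilde{W})$, and a single application of Sion's minimax theorem. The strategy parallels the approach used by Mosonyi--Ogawa and Berta et al.\ for related strong converse exponents.

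First I derive variational representations of the R\'enyi channel capacities. For $\beta \in (0,1)$, the Gibbs principle $(1-\beta) D_\beta(P\|Q) = \inf_R \bigl\{\beta D(R\|P) + (1-\beta) D(R\|Q)\bigr\}$ applied to the Augustin--Csisz\'ar mutual information yields
\begin{equation*}
(1-\beta) I_\beta(T) = \sup_{P_S} \inf_{\widetilde{T}} \bigl\{\beta D(P_S \widetilde{T} \| P_S T) + (1-\beta) I(P_S, \widetilde{T})\bigr\}.
\end{equation*}
The dual relation $(\gamma-1) D_\gamma(P\|Q) = \sup_R \bigl\{(\gamma-1) D(R\|Q) - \gamma D(R\|P)\bigr\}$ for $\gamma > 1$, combined with a preparatory Sion swap that uses the concavity in $\widetilde{W}$ of $(\gamma-1) D(\widetilde{W}\|Q_Y) - \gamma D(\widetilde{W}\|W)$, gives
\begin{equation*}
(\gamma-1) I_\gamma(W) = \sup_{P_X, \widetilde{W}} \bigl\{(\gamma-1) I(P_X, \widetilde{W}) - \gamma D(P_X \widetilde{W} \| P_X W)\bigr\}.
\end{equation*}

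Setting $\beta = (1-\alpha) p$, $\gamma = \alpha p/(p-1)$, and $\theta = (p-1)/p$, the algebraic identities $\beta/p = 1-\alpha$, $\theta \gamma = \alpha$, and $\theta(\gamma-1) = (1-\beta)/p =: \lambda$ allow the two formulas to be spliced into
\begin{equation*}
\text{LHS} = \sup_{\alpha, p, P_S} \inf_{P_X, \widetilde{W}, \widetilde{T}} \bigl\{\alpha D_W + r(1-\alpha) D_T + \lambda(r I_T - I_W)\bigr\},
\end{equation*}
where $D_W = D(P_X\widetilde{W}\|P_XW)$, $D_T = D(P_S\widetilde{T}\|P_ST)$, and $I_W, I_T$ are the mutual informations. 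Crucially, the objective separates into a $(P_S, \widetilde{T})$-part and a $(P_X, \widetilde{W})$-part, so the $\sup_{P_S}$ and $\inf_{P_X}$ commute trivially. Next I apply Sion's theorem to swap $\sup_{\alpha, p}$ with $\inf_{\widetilde{W}, \widetilde{T}}$: under the substitution $u = 1/p$ the objective is affine in $(\alpha, u)$, it is convex in $\widetilde{T}$ directly (both $D_T$ and $I_T$ are convex in $\widetilde{T}$ with nonnegative coefficients), and convex in $\widetilde{W}$ via the key identity $\alpha D_W - \lambda I_W = \theta[\gamma D_W - (\gamma-1) I_W]$ together with the representation $\gamma D_W - (\gamma-1) I_W = \sup_{Q_Y} \sum_x P_X\bigl[\gamma D(\widetilde{W}_x\|W_x) - (\gamma-1) D(\widetilde{W}_x\|Q_Y)\bigr]$. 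Finally the inner supremum is computed explicitly: for fixed $\alpha$, $\sup_{p \in (1, 1/(1-\alpha))} \lambda (r I_T - I_W) = \alpha (r I_T - I_W)_+$ because the coefficient $1/p - (1-\alpha)$ ranges over $(0, \alpha)$, and this recovers the $|\cdot|_+$ structure on the RHS.

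The main obstacle will be justifying the Sion minimax swap, particularly because mutual information $I(P_X, \widetilde{W})$ is concave in $P_X$ and convex in $\widetilde{W}$ but not jointly convex in the pair. The critical point is the coupling $\alpha : \lambda = \gamma : (\gamma-1)$ dictated by the parameterization: with this precise ratio the combination $\alpha D_W - \lambda I_W$ is convex in $\widetilde{W}$ (and separately in $P_X$), whereas with an arbitrary ratio it would not be; any proof that forgets this coupling is bound to fail. A secondary technicality is that the $(\alpha, p)$-domain is only relatively open (the endpoints $p = 1$ and $p = 1/(1-\alpha)$ lie on the boundary), so Sion's theorem must first be invoked on a compact subset with subsequent passage to the limit to recover the full supremum.
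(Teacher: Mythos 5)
Your route is the mirror image of the paper's: the paper goes from the variational side to the R\'enyi side by writing $|\cdot|_+=\sup_{0\le\delta\le\alpha}\delta(\cdot)$ and applying the van Erven--Harrem\"oes identities at the end, while you go from the R\'enyi side to the variational side by applying those identities (plus the Augustin--Sibson capacity equivalence) up front and splicing. Most ingredients match, including the convexification of the $\widetilde{W}$-term via the $\sup_{Q_Y}$ representation with the $\gamma:(\gamma-1)$ coupling, and the observation that $\lambda=\tfrac1p-(1-\alpha)$ ranges over $(0,\alpha)$ so that $\sup_p$ produces the $|\cdot|_+$. However, there is a genuine gap: you never justify interchanging $\sup_{\alpha,p}$ with $\inf_{P_X}$. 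Your separability argument only commutes $\sup_{P_S}$ with $\inf_{P_X}$ (the two act on disjoint summands for \emph{fixed} $(\alpha,p)$), and your Sion application only swaps $\sup_{\alpha,p}$ with $\inf_{\widetilde{W},\widetilde{T}}$ at fixed $(P_X,P_S)$. But the identity you must prove is
\begin{equation*}
\sup_{\alpha,p}\,\inf_{P_X}\Big(A(\alpha,p)+B(\alpha,p;P_X)\Big)\;=\;\inf_{P_X}\,\sup_{\alpha,p}\Big(A(\alpha,p)+B(\alpha,p;P_X)\Big),
\end{equation*}
where $A(\alpha,p)=\lambda r\,I_{(1-\alpha)p}(T)=\sup_{P_S}\inf_{\widetilde{T}}\{r(1-\alpha)D_T+\lambda r I_T\}$ and $B(\alpha,p;P_X)=\inf_{\widetilde{W}}\{\alpha D_W-\lambda I_W\}$; without this swap your chain only delivers the easy inequality ($\mathrm{LHS}\le\mathrm{RHS}$), and the hard direction is exactly what the achievability proof needs.

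The swap cannot be dispatched by the remark that the raw objective is affine in $(\alpha,u)$ with $u=1/p$: after the inner optimizations the relevant function is $A+B$, and while $B$ is concave in $(\alpha,u)$ (an infimum of affine functions), $A$ is a supremum over $P_S$ of such infima, which is not obviously concave; likewise convexity of $B$ in $P_X$ is not immediate (it is an infimum over $\widetilde{W}$ of functions that are only convex in $P_X$ after a further $Q_Y$-minimax rearrangement making the expression a supremum of $P_X$-linear functions). This is precisely the point where the paper's proof of Proposition~\ref{prop:ach-var-u} invokes its final Sion step (step $(e)$) together with Lemma~\ref{Lem:concave-alpha-delta}, whose entire purpose is the concavity in the exponent parameters of these partially optimized quantities, and the ``supremum of linear functions'' representation for convexity in $P_X$. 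You need an analogous argument (or a different proof of the inequality $\mathrm{RHS}\le\mathrm{LHS}$); your ``main obstacle'' paragraph correctly flags the $\widetilde{W}$-convexity coupling and the open-domain issue, but misses this $P_X$-versus-$(\alpha,p)$ interchange, which is the hardest swap in the proof.
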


\begin{proof}
Let $P_{Y}^\star = P_{X} \widetilde{W}_{Y|X}$. We have the following chain of equalities:
\begin{align}
&\inf_{P_{X}} \sup_{P_{S}}\inf_{\widetilde{W}_{Y|X}} \inf_{\widetilde{T}_{Z|S} }\sup_{\alpha\in \mathbb{A}} \alpha D\left(P_X\widetilde{W}_{Y|X}\middle\|  p_XW_{Y|X}\right) + (1-\alpha)  rD\left(P_S \widetilde{T}_{Z|S}\middle\| P_S T_{Z|S}\right) 
\\&\qquad \qquad \qquad \qquad \qquad +\alpha \left|r\inf_{P_Z}  D\left( P_S\widetilde{T}_{Z|S}\middle\| P_S P_Z\right) - \inf_{P_{Y}}D\left( P_X \widetilde{W}_{Y|X}\middle\|  P_X P_{Y}\right) \right|_{+}
\\&\overset{(a)}= \inf_{P_{X}} \sup_{P_{S}}\inf_{\widetilde{W}_{Y|X}} \inf_{\widetilde{T}_{Z|S} }\inf_{P_Z} \sup_{\alpha\in \mathbb{A}} \sup_{0\le \delta \le \alpha } \alpha D\left(P_X\widetilde{W}_{Y|X}\middle\|  p_XW_{Y|X}\right) + (1-\alpha)  rD\left(P_S \widetilde{T}_{Z|S}\middle\| P_S T_{Z|S}\right) 
\\&\qquad \qquad \qquad \qquad \qquad +\delta\left(r  D\left( P_S\widetilde{T}_{Z|S}\middle\| P_S P_Z\right) - D\left(P_X\widetilde{W}_{Y|X}\middle\|  P_X P_{Y}^{\star}\right) \right)
\\&\overset{(b)} =  \inf_{P_{X}} \sup_{P_{S}}\inf_{P_Z} \sup_{\alpha\in \mathbb{A}}\sup_{0\le \delta \le \alpha} \inf_{\widetilde{W}_{Y|X}} \inf_{\widetilde{T}_{Z|S} }\sup_{P_{Y}}\alpha D\left(P_X\widetilde{W}_{Y|X}\middle\|  p_XW_{Y|X}\right) + (1-\alpha)  rD\left(P_S \widetilde{T}_{Z|S}\middle\| P_S T_{Z|S}\right) 
\\&\qquad \qquad \qquad \qquad \qquad +\delta\left(r  D\left( P_S\widetilde{T}_{Z|S}\middle\| P_S P_Z\right) - D\left( P_X \widetilde{W}_{Y|X}\middle\|  P_X P_{Y}\right) \right)
\\&\overset{(c)} =   \inf_{P_{X}} \sup_{P_{S}}\inf_{P_Z} \sup_{\alpha\in \mathbb{A}}\sup_{0\le \delta \le \alpha}\sup_{P_{Y}} \inf_{\widetilde{W}_{Y|X}} \inf_{\widetilde{T}_{Z|S} }\alpha D\left(P_X\widetilde{W}_{Y|X}\middle\|  p_XW_{Y|X}\right)  -\delta D\left( P_X \widetilde{W}_{Y|X}\middle\|  P_X P_{Y}\right)
\\&\qquad \qquad \qquad \qquad \qquad + r (1-\alpha)   D\left(P_S \widetilde{T}_{Z|S}\middle\|P_S T_{Z|S}\right)+r\delta   D\left( P_S\widetilde{T}_{Z|S}\middle\| P_S P_Z\right)  
\end{align}
where we used Sion's minimax theorem: in $(a)$ the objective function is concave in $(\delta, \alpha)$ and convex in $P_Z$; in $(b$) the objective function is convex in $(\widetilde{W}_{x}, \widetilde{T}_{s} )$ even though $P_{Y}^\star = P \widetilde{W}_{Y}$, to see this we can write the function's part containing $\widetilde{T}$ as: 
\begin{align}
&\alpha D\left(\widetilde{W}_{x}\middle\|  W_{x}\right) - \delta D\left(\widetilde{W}_{x}\middle\|  P_{Y}^\star\right)= -(\alpha-\delta)H\left(\widetilde{W}_{x} \right) -\alpha \tr{\widetilde{W}_{x}\log W_x}+\delta \tr{\widetilde{W}_{x}\log \textstyle\sum_{x'}P_{X}(x')\widetilde{W}_{x'}};
\end{align}
in $(c)$ we used the fact that $D(P\| Q)$ is convex in $p$ and $q$. 
 
Using the variational property of the R\'enyi divergence \cite{vanErven2014Jun}, we obtain
\begin{align}
\inf_{\widetilde{W}_{x} }\alpha D\left(\widetilde{W}_{x}\middle\|  W_{x}\right)  -\delta D\left(\widetilde{W}_{x}\middle\|  P_{Y}\right)&=-\delta D_{\frac{\alpha}{\alpha-\delta}}\left(W_{x}  \middle\| P_{Y}\right),
\\\inf_{ \widetilde{T}_{s} }  (1-\alpha) D\left( \widetilde{T}_{s}\middle\| T_{s}\right)+\delta  D\left( \widetilde{T}_{s}\middle\| P_Z\right)  &= \delta D_{\frac{1-\alpha}{1+\delta-\alpha}}\left(T_{s}  \middle\| P_Z\right).
\end{align}
Note that because the infimum of linear functions is concave, the right hand sides of the previous equalities remain concave in $(\alpha, \delta)$ even when taking optimizations on the left hand sides on $P_{Y}$, $P_Z$,  and $P_{S}$ (see Lemma \ref{Lem:concave-alpha-delta} for a more precise statement). Therefore by applying successively Sion's minimax theorem, we have the chain of equalities:
\begin{align}
&\inf_{P_{X}} \sup_{P_{S}}\inf_{\widetilde{W}_{Y|X}} \inf_{\widetilde{T}_{Z|S} }\sup_{\alpha\in \mathbb{A}} \alpha D\left(P_X\widetilde{W}_{Y|X}\middle\|  p_XW_{Y|X}\right) + (1-\alpha)  rD\left(P_S \widetilde{T}_{Z|S}\middle\| P_S T_{Z|S}\right) 
\\&\qquad \qquad \qquad \qquad \qquad +\alpha \left|r\inf_{P_Z}  D\left( P_S\widetilde{T}_{Z|S}\middle\| P_S P_Z\right) - \inf_{P_{Y}}D\left( P_X \widetilde{W}_{Y|X}\middle\|  P_X P_{Y}\right) \right|_{+}
\\& =    \inf_{P_{X}} \sup_{P_{S}}\inf_{P_Z} \sup_{\alpha\in \mathbb{A}}\sup_{0\le \delta \le \alpha}\sup_{P_{Y}}  r \delta\mathbb{E}_{s\sim P_{S}}D_{\frac{1-\alpha}{1+\delta-\alpha}}\left(T_{s}  \middle\| P_Z\right) - \delta \mathbb{E}_{x\sim P_{X}}D_{\frac{\alpha}{\alpha-\delta}}\left(W_{x}  \middle\| P_{Y}\right)
\\&\overset{(d)} =    \inf_{P_{X}} \sup_{P_{S}} \sup_{\alpha\in \mathbb{A}}\sup_{0\le \delta \le \alpha}\inf_{P_Z}\sup_{P_{Y}}  r \delta\mathbb{E}_{s\sim P_{S}}D_{\frac{1-\alpha}{1+\delta-\alpha}}\left(T_{s}  \middle\| P_Z\right) - \delta \mathbb{E}_{x\sim P_{X}}D_{\frac{\alpha}{\alpha-\delta}}\left(W_{x}  \middle\| P_{Y}\right)
\\& =    \inf_{P_{X}} \sup_{\alpha\in \mathbb{A}}\sup_{0\le \delta \le \alpha}\sup_{P_{S}} \inf_{P_Z}\sup_{P_{Y}}  r \delta\mathbb{E}_{s\sim P_{S}}D_{\frac{1-\alpha}{1+\delta-\alpha}}\left(T_{s}  \middle\| P_Z\right) - \delta \mathbb{E}_{x\sim P_{X}}D_{\frac{\alpha}{\alpha-\delta}}\left(W_{x}  \middle\| P_{Y}\right)
\\&\overset{(e)} =    \sup_{\alpha\in \mathbb{A}}\sup_{0\le \delta \le \alpha}\inf_{P_{X}} \sup_{P_{S}} \inf_{P_Z}\sup_{P_{Y}}  r \delta\mathbb{E}_{s\sim P_{S}}D_{\frac{1-\alpha}{1+\delta-\alpha}}\left(T_{s}  \middle\| P_Z\right) - \delta \mathbb{E}_{x\sim P_{X}}D_{\frac{\alpha}{\alpha-\delta}}\left(W_{x}  \middle\| P_{Y}\right)
\\&\overset{(f)} =    \sup_{\alpha\in \mathbb{A}}\sup_{0\le \delta \le \alpha}  r \delta I_{\frac{1-\alpha}{1+\delta-\alpha}}\left(T_{Z|S}\right) - \delta I_{\frac{\alpha}{\alpha-\delta}}\left(W_{Y|X}\right)
\end{align}
where in $(d)$ we used Lemma~\ref{Lem:concave-alpha-delta} and that the R\'enyi divergence is convex in $P_Z$ and so is the objective function ; in $(e)$ we used Lemma \ref{Lem:concave-alpha-delta} and that the supremum of linear functions is convex so the objective function is convex in $P_{X}$; in $(f)$ we used the equivalence between these $\alpha$-mutual information \cite[Proposition 1]{csiszar1995generalized}. 

Finally, making the change of variable $ p = \frac{1}{1+\delta-\alpha}\in (1, \frac{1}{1-\alpha})$ concludes the proof.
\end{proof}


\section{Missing proofs and lemmas}\label{sec:lemmas}

\begin{lemma}[Restatement of Lemma \ref{lem:reverse-chernoff}]\label{lem-app:reverse-chernoff}
    Let $P, Q \in \cP(\cX)$. We have that for all $\eta\in (0,\frac{1}{2})$:
    \begin{align}
        \tr{P \wedge Q} \ge (1-2\eta) \sup_{ V \in \cP(\cX)}\inf_{0 < \alpha < 1} \exp\left(-\alpha D_{\max}^{\eta}(V \| P) -(1-\alpha) D_{\max}^{\eta}(V \| Q) \right).
    \end{align}
\end{lemma}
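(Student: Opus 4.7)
The plan is to fix an arbitrary $V \in \cP(\cX)$, extract near-optimal smoothings $\tilde V_P,\tilde V_Q$ of $V$ witnessing $D_{\max}^\eta(V\|P)$ and $D_{\max}^\eta(V\|Q)$ respectively, and then compare $P\wedge Q$ with $\tilde V_P\wedge \tilde V_Q$ through a single uniform rescaling chosen to match the infimum over $\alpha$ on the right-hand side.

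Concretely, I would set $\lambda_P:=D_{\max}^\eta(V\|P)$, $\lambda_Q:=D_{\max}^\eta(V\|Q)$. For any $\epsilon>0$, the definition of the smoothed max-divergence gives $\tilde V_P,\tilde V_Q\in\cP(\cX)$ with $\mathrm{TV}(V,\tilde V_P)\le\eta$, $\mathrm{TV}(V,\tilde V_Q)\le\eta$ and $\tilde V_P\mle e^{\lambda_P+\epsilon}P$, $\tilde V_Q\mle e^{\lambda_Q+\epsilon}Q$. Writing $\Lambda:=\max(\lambda_P,\lambda_Q)+\epsilon$, both domination bounds upgrade to $\tilde V_P\mle e^\Lambda P$ and $\tilde V_Q\mle e^\Lambda Q$, which rearrange pointwise to $P\mge e^{-\Lambda}\tilde V_P$ and $Q\mge e^{-\Lambda}\tilde V_Q$. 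Taking the classical (commutative) minimum on both sides yields
\begin{equation}
P\wedge Q\mge e^{-\Lambda}\bigl(\tilde V_P\wedge \tilde V_Q\bigr).
\end{equation}

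Next I would invoke the triangle inequality for the total variation distance: $\mathrm{TV}(\tilde V_P,\tilde V_Q)\le \mathrm{TV}(\tilde V_P,V)+\mathrm{TV}(V,\tilde V_Q)\le 2\eta$, together with the standard identity $\tr{\tilde V_P\wedge \tilde V_Q}=1-\mathrm{TV}(\tilde V_P,\tilde V_Q)$, to conclude $\tr{\tilde V_P\wedge \tilde V_Q}\ge 1-2\eta$. Combining this with the previous display gives $\tr{P\wedge Q}\ge (1-2\eta)e^{-\Lambda}$. Since $\alpha\mapsto \alpha\lambda_P+(1-\alpha)\lambda_Q$ is affine on $(0,1)$ with supremum $\max(\lambda_P,\lambda_Q)$, sending $\epsilon\to 0$ identifies $e^{-\Lambda}$ with $\inf_{0<\alpha<1}\exp(-\alpha\lambda_P-(1-\alpha)\lambda_Q)$. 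Taking the supremum over $V\in\cP(\cX)$ then delivers the advertised bound.

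The argument is essentially elementary: the only substantive observation is that loading the single uniform scalar $e^{-\Lambda}$ with the \emph{larger} of the two $\lambda$'s on both sides allows the factor to be pulled out of the commutative minimum, and that this choice $\Lambda=\max(\lambda_P,\lambda_Q)$ coincides exactly with $\sup_{\alpha\in(0,1)}\{\alpha\lambda_P+(1-\alpha)\lambda_Q\}$. Degenerate cases where $\lambda_P=+\infty$ or $\lambda_Q=+\infty$ make the claimed right-hand side vanish and the inequality trivial, so one may freely assume both quantities finite throughout.
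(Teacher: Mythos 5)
Your proof is correct and follows essentially the same route as the paper: extract (near-)optimal smoothings $\tilde V_P,\tilde V_Q$ of $V$, dominate $P$ and $Q$ from below by scaled copies of them, use the triangle inequality to get $\tr{\tilde V_P\wedge\tilde V_Q}\ge 1-2\eta$, and note that the prefactor $e^{-\max(\lambda_P,\lambda_Q)}$ equals the infimum over $\alpha\in(0,1)$ of the exponential convex combination. The only cosmetic differences are your $\epsilon$-approximate optimizers (the paper assumes attainment, which holds in finite dimensions) and your single scalar $e^{-\Lambda}$ in place of the paper's $\min\{e^{-\lambda_P},e^{-\lambda_Q}\}$, which is the same quantity.
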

\begin{proof}
    Let $V, V_{1}$ and $V_2$ be probability distributions such that $\mathrm{TV}(V_1 , V) \le \eta$, $\mathrm{TV}(V_2 , V)\le \eta$ and 
    \begin{align}
        D_{\max}^{\eta}(V \| P) = D_{\max}^{}(V_1 \| P), \quad  D_{\max}^{\eta}(V \| Q) = D_{\max}^{}(V_2 \| Q).
    \end{align}
    By the definition of the max divergence, we have that 
    \begin{align}
        P \mge e^{-D_{\max}^{\eta}(V \| P)} V_1 \;\text{ and } \;\; Q \mge e^{-D_{\max}^{\eta}(V \| Q)} V_2.
    \end{align}
    Hence 
    \begin{align}
        \tr{P \wedge Q} &\ge \tr{e^{-D_{\max}^{\eta}(V \| P)} V_1 \ \bigwedge \  e^{-D_{\max}^{\eta}(V \| Q)} V_2} 
        \\&\ge \min \left\{e^{-D_{\max}^{\eta}(V \| P)},  e^{-D_{\max}^{\eta}(V \| Q)} \right\}\tr{ V_1 \wedge V_2}. 
    \end{align}
    Moreover since $\mathrm{TV}(V_1 , V) \le \eta$ and  $\mathrm{TV}(V_2 , V)\le \eta$ we have by the triangle inequality $\mathrm{TV}(V_1 , V_{2})\le 2\eta$ thus 
    \begin{align}
        \tr{ V_1 \wedge V_2} =1-\mathrm{TV}(V_1 , V_{2})\ge 1-2\eta. 
    \end{align}
    Finally
     \begin{align}
        \tr{P \wedge Q} &\ge \min \left\{e^{-D_{\max}^{\eta}(V \| P)},  e^{-D_{\max}^{\eta}(V \| Q)} \right\}\tr{ V_1 \wedge V_2} 
        \\&\ge (1-2\eta) \inf_{0< \alpha < 1} e^{-\alpha D_{\max}^{\eta}(V \| P) -(1-\alpha) D_{\max}^{\eta}(V \| Q) }.
    \end{align}
    Since the choice of $V$ is arbitrary, we take the supremum over $V$ to conclude the proof.
\end{proof}

\begin{lemma}\label{lem:iid}
We have that for all probability distributions $p$ and $q$ and integer $n$:
\begin{align}
     D_{\max}^{\eta}(P^{\times n} \| Q^{\times n}) &\le  nD(P\| Q) + \cO(\sqrt{n}).
\end{align}
\end{lemma}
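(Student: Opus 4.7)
The strategy is the classical information-spectrum/typical-set argument. Define the log-likelihood ratio $L(x) := \log\frac{P(x)}{Q(x)}$, which under $P$ has mean $D(P\|Q)$ and variance $V := \mathrm{Var}_P(L) < \infty$ (finite because the alphabet is finite and we may assume $\supp(P)\subseteq\supp(Q)$, otherwise $D(P\|Q) = +\infty$ and the bound is vacuous). On the product space, $L^{(n)}(x^n) := \sum_{i=1}^n L(x_i) = \log\frac{P^{\times n}(x^n)}{Q^{\times n}(x^n)}$ is a sum of $n$ i.i.d.\ random variables with mean $nD(P\|Q)$ and variance $nV$ under $P^{\times n}$.

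The first step is to apply Chebyshev's inequality: choosing $c_\eta := \sqrt{V/\eta}$ and the typical set
\begin{equation}
A_n := \Big\{x^n \in \cX^n : L^{(n)}(x^n) \le n D(P\|Q) + c_\eta \sqrt{n}\Big\},
\end{equation}
one obtains $P^{\times n}(A_n) \ge 1-\eta$. The next step is to define the smoothed distribution
\begin{equation}
\widetilde P_n(x^n) := \frac{P^{\times n}(x^n)\,\bid\{x^n \in A_n\}}{P^{\times n}(A_n)},
\end{equation}
and check that $\mathrm{TV}(\widetilde P_n, P^{\times n}) = 1 - P^{\times n}(A_n) \le \eta$, so $\widetilde P_n$ is admissible in the infimum defining $D_{\max}^{\eta}(P^{\times n}\|Q^{\times n})$.

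Finally, on $A_n$ the definition directly yields $P^{\times n}(x^n) \le Q^{\times n}(x^n)\,\exp(nD(P\|Q) + c_\eta\sqrt{n})$, and dividing by $P^{\times n}(A_n) \ge 1-\eta$ gives
\begin{equation}
\widetilde P_n(x^n) \le Q^{\times n}(x^n)\,\exp\!\Big(nD(P\|Q) + c_\eta\sqrt{n} + \log\tfrac{1}{1-\eta}\Big)\qquad \forall\, x^n\in\cX^n.
\end{equation}
Hence $D_{\max}(\widetilde P_n\|Q^{\times n}) \le nD(P\|Q) + c_\eta\sqrt{n} + \log\tfrac{1}{1-\eta}$, which yields the claimed bound $D_{\max}^{\eta}(P^{\times n}\|Q^{\times n}) \le nD(P\|Q) + \cO(\sqrt{n})$, with the implicit constant depending only on $\eta$ and on the mutual-information variance $V$.

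There is no real obstacle here: Chebyshev is tight enough for the $\cO(\sqrt{n})$ order (a Berry--Esseen refinement would even give the sharp constant $\sqrt{V}\,\Phi^{-1}(1-\eta)$, but this is not needed). The only thing to verify is the degenerate case $\supp(P)\not\subseteq\supp(Q)$, which is handled by the convention $D(P\|Q) = +\infty$.
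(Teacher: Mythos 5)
Your proof is correct, but it takes a genuinely different route from the paper. You smooth by hand: conditioning $P^{\times n}$ on the Chebyshev-typical set $A_n$ where the log-likelihood sum exceeds its mean $nD(P\|Q)$ by at most $c_\eta\sqrt{n}$, checking that the conditioned distribution is within TV distance $\eta$ of $P^{\times n}$, and reading off the max-divergence bound directly (the normalization by $P^{\times n}(A_n)\ge 1-\eta$ and the degenerate support case are handled correctly, and the one-sided Chebyshev step is fine). The paper instead invokes the known inequality $D_{\max}^{\eta}(\rho\|\sigma)\le D_{\alpha}(\rho\|\sigma)+\frac{g(\eta)}{\alpha-1}$ with $g(\eta)=-\log(1-\sqrt{1-\eta^2})$, chooses $\alpha=1+\frac{1}{\sqrt{n}}$, and uses additivity of the R\'enyi divergence together with $D_{\alpha}(P\|Q)\le D(P\|Q)+\cO(\alpha-1)$. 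Your argument is more elementary and self-contained, with an explicit constant $\sqrt{V/\eta}+\log\frac{1}{1-\eta}$ in front of the $\sqrt{n}$ and no external citation; the paper's argument is shorter given the cited machinery and, more importantly, is the one that scales to the companion Lemma~\ref{lem:al-iid}, where $P^n$ is only TV-close to a product and the R\'enyi-divergence route combines cleanly with a continuity bound (a conditioning/typical-set argument would require extra work there). Both establish the stated $nD(P\|Q)+\cO(\sqrt{n})$ bound for fixed $\eta$, which is all the paper needs.
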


\begin{proof}
Let $\alpha = 1+\frac{1}{\sqrt{n}}$ and $g(\eta) = -\log(1- \sqrt{1-\eta^2})$. We have that using \cite[Proposition 6.22]{tomamichel2015quantum}:
\begin{align}
     D_{\max}^{\eta}(P^{\times n} \| Q^{\times n}) &\le D_{\alpha}(P^{\times n}\| Q^{\times n}) +\frac{g(\eta)}{\alpha-1}
     \\&=  nD_{\alpha}(P\| Q) +\frac{g(\eta)}{\alpha-1}
      \\&=  nD_{1+\frac{1}{\sqrt{n}}}(P\| Q) +\sqrt{n} g(\eta)
              \\&\le  nD(P\| Q) +\cO(\sqrt{n}),
\end{align}
where we use $D_{\alpha}(P\|Q)\le D(P\| Q)+ \cO(\alpha-1)$ \cite{tomamichel2015quantum}.
\end{proof}

\begin{lemma}\label{lem:al-iid}
Let $P_{1}, \dots, P_{n}\in \cP(\cX)$ and $P^n \in \cP(\cX^n)$ such that  $\mathrm{TV}(P^n, P_{1}\times \cdots \times P_{n})\le \kappa$. Let $Q_{1}, \dots, Q_{n}\in \cP(\cY)$  and $Q_{\min}$ is the smallest  positive part of  all $\{Q_{t}\}_{t\in [n]}$. Then we have that 
\begin{align}
    & D_{\max}^{\eta}(P^n \| Q_{1} \times \cdots \times Q_{n}) \le   \sum_{t=1}^n D(P_{t}\| Q_{t}) + (\sqrt{n}+1)\cdot \kappa\cdot (\tfrac{1}{Q_{\min}})^{{\sqrt{n}}}+\cO(\sqrt{n}).
\end{align}
\end{lemma}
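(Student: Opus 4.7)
The plan is to extend the proof strategy of Lemma~\ref{lem:iid} to handle the near-product structure of $P^n$. First, I would relax the smooth max-divergence to the R\'enyi divergence via \cite[Proposition~6.22]{tomamichel2015quantum}, $D_{\max}^{\eta}(P^n \| Q^n) \le D_{\alpha}(P^n \| Q^n) + g(\eta)/(\alpha-1)$, taking $\alpha = 1 + 1/\sqrt{n}$ as in Lemma~\ref{lem:iid} so that the smoothing contribution is $\sqrt{n}\,g(\eta) = \cO(\sqrt{n})$.

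The core of the argument is a perturbative comparison of $D_{\alpha}(P^n \| Q^n)$ with the product reference $D_{\alpha}(P_1 \times \cdots \times P_n \| Q^n)$. Expanding the defining sums, the difference reads $\sum_{\bm{x}^n}\!\big[P^n(\bm{x}^n)^{\alpha} - (P_1\times\cdots\times P_n)(\bm{x}^n)^{\alpha}\big]\, Q^n(\bm{x}^n)^{1-\alpha}$, which I would control via the elementary estimate $|a^{\alpha}-b^{\alpha}| \le \alpha\max(a,b)^{\alpha-1}|a-b| \le \alpha|a-b|$ (valid for $a,b \in [0,1]$ and $\alpha \ge 1$), combined with the worst-case bound $Q^n(\bm{x}^n)^{1-\alpha} \le Q_{\min}^{n(1-\alpha)} = Q_{\min}^{-\sqrt{n}}$ on the support of $Q^n$. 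The hypothesis $\mathrm{TV}(P^n, P_1\times\cdots\times P_n) \le \kappa$ then bounds the total difference by $2\alpha\kappa\,Q_{\min}^{-\sqrt{n}}$. Since R\'enyi divergences of probability distributions are nonnegative, one has $\exp((\alpha-1) D_{\alpha}(P_1\times\cdots\times P_n\|Q^n)) \ge 1$, so applying $\log(1+x) \le x$ and dividing by $\alpha-1 = 1/\sqrt{n}$ yields $D_{\alpha}(P^n \| Q^n) \le D_{\alpha}(P_1 \times \cdots \times P_n \| Q^n) + 2\alpha\sqrt{n}\,\kappa\,Q_{\min}^{-\sqrt{n}}$, which (using $\alpha \le 2$) absorbs into the announced $(\sqrt{n}+1)\,\kappa\,Q_{\min}^{-\sqrt{n}}$ contribution up to a universal constant.

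Finally, I would handle the product reference by additivity, $D_{\alpha}(P_1 \times \cdots \times P_n \| Q^n) = \sum_{t=1}^n D_{\alpha}(P_t \| Q_t)$, followed by the Taylor expansion $D_{\alpha}(P_t \| Q_t) = D(P_t \| Q_t) + \tfrac{1}{2}(\alpha-1)\var(P_t\|Q_t) + o(\alpha-1)$ from \cite[Proposition~11]{hayashi2016correlation}; summing with uniformly bounded variances (the relevant regime in the applications) gives $\sum_t D_{\alpha}(P_t \| Q_t) \le \sum_t D(P_t \| Q_t) + \cO(\sqrt{n})$, and concatenating the three estimates proves the lemma. The main obstacle is the second step: the interaction between the choice $\alpha-1 = 1/\sqrt{n}$ and the crude worst-case likelihood bound $Q^n(\bm{x}^n)^{1-\alpha} \le Q_{\min}^{-\sqrt{n}}$ is what dictates the form of the correction and explains why the lemma is only quantitatively useful when $\kappa$ is much smaller than $Q_{\min}^{\sqrt{n}}/\sqrt{n}$---precisely the regime ensured by the exponentially small TV error $\kappa = \exp(-\sqrt{n}\log n)$ of Proposition~\ref{prop:1order-improved}.
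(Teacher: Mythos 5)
Your proposal is essentially sound and follows the paper's skeleton (relax $D_{\max}^{\eta}$ to $D_{\alpha}$ with $\alpha=1+1/\sqrt{n}$ via \cite[Proposition 6.22]{tomamichel2015quantum}, then pass to the product reference, then additivity plus the Taylor expansion of $D_{\alpha}$), but it replaces the paper's key middle step by an elementary argument: the paper invokes the continuity bound \cite[Corollary 5.7]{Bluhm2024Dec}, which gives $D_{\alpha}(P^n\|Q_1\times\cdots\times Q_n)\le D_{\alpha}(P_1\times\cdots\times P_n\|Q_1\times\cdots\times Q_n)+\frac{\alpha}{\alpha-1}\log\bigl(1+\kappa\,Q_{\min}^{-n(\alpha-1)/\alpha}\bigr)$ and hence exactly the coefficient $\frac{\alpha}{\alpha-1}=\sqrt{n}+1$, whereas you re-derive this by perturbing the quantity $\sum_{\bm{x}^n}P^n(\bm{x}^n)^{\alpha}Q^n(\bm{x}^n)^{1-\alpha}$ directly with $|a^{\alpha}-b^{\alpha}|\le\alpha|a-b|$, the worst-case bound $Q^n(\bm{x}^n)^{1-\alpha}\le Q_{\min}^{-\sqrt{n}}$, nonnegativity of the product R\'enyi term, and $\log(1+x)\le x$. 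Your route is self-contained and transparent about where the $Q_{\min}^{-\sqrt{n}}$ correction comes from, at the price of a constant: since $\sum_{\bm{x}^n}|P^n-\prod_t P_t|=2\,\mathrm{TV}\le 2\kappa$, you obtain $2\alpha\sqrt{n}\,\kappa\,Q_{\min}^{-\sqrt{n}}=2(\sqrt{n}+1)\kappa\,Q_{\min}^{-\sqrt{n}}$, i.e.\ twice the stated coefficient; this is immaterial for every use of the lemma in the paper (there $\kappa=\exp(-\sqrt{n}\log n)$, so the whole term is $o(1)$), but strictly you prove the lemma with $2(\sqrt{n}+1)$ in place of $(\sqrt{n}+1)$. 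Two small caveats you share with the paper's proof and should state if written out: the final step $\sum_t D_{1+1/\sqrt{n}}(P_t\|Q_t)\le\sum_t D(P_t\|Q_t)+\cO(\sqrt{n})$ needs the second-order (variance-type) terms to be bounded uniformly in $t$, which you correctly flag and which holds in the paper's applications where the $(P_t,Q_t)$ range over a fixed finite family; and the passage through $D_{\alpha}$ with $\alpha>1$ implicitly assumes $P^n\ll Q_1\times\cdots\times Q_n$ (otherwise one must first smooth away the at most $\kappa$ mass of $P^n$ outside $\supp(Q_1\times\cdots\times Q_n)$, slightly reducing $\eta$), an edge case the paper delegates to the cited continuity bound.
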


\begin{proof}
We have for all $\alpha \in (0,1)$  and  $g(\eta) = -\log(1- \sqrt{1-\eta^2})$ \cite[Proposition 6.22]{tomamichel2015quantum}:
\begin{align}
    & D_{\max}^{\eta}(P^n \| Q_{1} \times \cdots \times Q_{n}) 
    \\&\le D_{\alpha}(P^n\| Q_{1} \times \cdots \times Q_{n}) +\frac{g(\eta)}{\alpha-1}
     \\&\le D_{\alpha}(P_{1}\times \cdots \times P_{n}\| Q_{1} \times \cdots \times Q_{n}) +\frac{\alpha}{\alpha-1} \log\left( 1+ \kappa\cdot (\tfrac{1}{Q_{\min}^n})^{\frac{\alpha-1}{\alpha}}\right)
+\frac{g(\eta)}{\alpha-1}
\end{align}
where we use the continuity bound of \cite[Corollary 5.7]{Bluhm2024Dec} in the last inequality. We choose $\alpha = 1+\frac{1}{\sqrt{n}}$ and obtain: 
\begin{align}
    & D_{\max}^{\eta}(P^n \| Q_{1} \times \cdots \times Q_{n}) 
    \\&\le  D_{1+\frac{1}{\sqrt{n}}}(P_{1}\times \cdots \times P_{n}\| Q_{1} \times \cdots \times Q_{n}) +(\sqrt{n}+1) \log\left( 1+ \kappa\cdot (\tfrac{1}{Q_{\min}})^{\sqrt{n}}\right)+ \sqrt{n}g(\eta)
            \\&= \sum_{t=1}^n  D_{1+\frac{1}{\sqrt{n}}}(P_{t}\| Q_{t} ) +(\sqrt{n}+1) \log\left( 1+ \kappa\cdot (\tfrac{1}{Q_{\min}})^{\sqrt{n}}\right)+ \sqrt{n}g(\eta)
        \\&\le  \sum_{t=1}^n D(P_{t}\| Q_{t}) +(\sqrt{n}+1) \log\left( 1+ \kappa\cdot (\tfrac{1}{Q_{\min}})^{\sqrt{n}}\right)+\cO(\sqrt{n})
        \\&\le  \sum_{t=1}^n D(P_{t}\| Q_{t}) +(\sqrt{n}+1) \cdot \kappa\cdot (\tfrac{1}{Q_{\min}})^{\sqrt{n}}+\cO(\sqrt{n}).
        \end{align}
\end{proof}

\begin{lemma}[\cite{li2024operational}]\label{Lem:fid-D}
Let $\rho$, $\sigma$ and $\tau$ be quantum states. We have that:
\begin{align}
-\log F(\rho\| \sigma) \le D(\tau\| \rho) + D(\tau\| \sigma). 
\end{align}
\end{lemma}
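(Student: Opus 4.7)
The plan is to lift the inequality from its classical counterpart via the Fuchs--Caves minimization formula for the fidelity. The classical version follows from a one-line Jensen argument, and the quantum reduction then requires only data-processing for the Umegaki relative entropy.

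First I would establish the classical analogue: for any probability distributions $P,Q,R$ on a finite alphabet with $\supp(R)\subseteq\supp(P)\cap\supp(Q)$,
\begin{align}
-\log F(P,Q)\;=\;-2\log\sum_x\sqrt{P(x)Q(x)}\;\le\;D(R\|P)+D(R\|Q).
\end{align}
Setting $s(x):=\sqrt{P(x)Q(x)}$, Jensen's inequality applied to $-\log$ against the probability measure $R$ yields $-\log\sum_x s(x)\le\sum_x R(x)\log\frac{R(x)}{s(x)}$, since $\sum_x s(x)=\exs{X\sim R}{s(X)/R(X)}$. Multiplying by two and using the pointwise identity $2\log\frac{R(x)}{\sqrt{P(x)Q(x)}}=\log\frac{R(x)}{P(x)}+\log\frac{R(x)}{Q(x)}$ gives the bound.

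To promote this to the quantum setting, I would invoke the Fuchs--Caves theorem: $F(\rho,\sigma)=\min_{\mathcal{M}}F(\mathcal{M}(\rho),\mathcal{M}(\sigma))$, where the minimum runs over POVM measurements and is attained by a specific projective $\mathcal{M}^{\star}$. Applying the classical inequality to the outcome distributions $\mathcal{M}^{\star}(\rho),\mathcal{M}^{\star}(\sigma),\mathcal{M}^{\star}(\tau)$ and then invoking the data-processing inequality for $D$, I obtain
\begin{align}
-\log F(\rho,\sigma)
&=-\log F(\mathcal{M}^{\star}(\rho),\mathcal{M}^{\star}(\sigma))\\
&\le D(\mathcal{M}^{\star}(\tau)\|\mathcal{M}^{\star}(\rho))+D(\mathcal{M}^{\star}(\tau)\|\mathcal{M}^{\star}(\sigma))\\
&\le D(\tau\|\rho)+D(\tau\|\sigma).
\end{align}

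The only conceptual hurdle is the invocation of Fuchs--Caves; support issues are routine, since whenever the right-hand side is infinite the statement is vacuous, so one may assume $\supp(\tau)\subseteq\supp(\rho)\cap\supp(\sigma)$. A parallel route would rely on the variational expression $\widetilde{D}_{1/2}(\rho\|\sigma)=\min_{\tau'}\{D(\tau'\|\rho)+D(\tau'\|\sigma)\}$ coming from Proposition~\ref{prop:mainpro}(v), combined with the identity $-\log F(\rho,\sigma)=\widetilde{D}_{1/2}(\rho\|\sigma)$; this works directly when $\rho$ commutes with $\sigma$, but the non-commuting case would require a supplementary pinching argument and is therefore less transparent than the Fuchs--Caves reduction.
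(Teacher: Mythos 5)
Your proof is correct. Note that the paper itself does not prove this lemma: it is imported verbatim from the cited reference, where the argument runs through the identity $-\log F(\rho,\sigma)=\widetilde{D}_{1/2}(\rho\|\sigma)$, the Araki--Lieb--Thirring/Golden--Thompson comparison $Q^{\flat}_{1/2}(\rho\|\sigma)\le \widetilde{Q}_{1/2}(\rho\|\sigma)$, and the (non-commutative) variational formula $D^{\flat}_{1/2}(\rho\|\sigma)=\min_{\tau}\{D(\tau\|\rho)+D(\tau\|\sigma)\}$ for the log-Euclidean divergence. Your route is genuinely different and arguably more elementary: you prove the classical inequality by a one-line Jensen argument and lift it with the Fuchs--Caves achievability of the fidelity under measurements plus data processing for $D$, never touching matrix trace inequalities; the price is the reliance on Fuchs--Caves (or, if one prefers to avoid attainment, an $\epsilon$-approximating measurement followed by a limit, which works just as well since only the direction $F(\mathcal{M}(\rho),\mathcal{M}(\sigma))\le F(\rho,\sigma)+\epsilon$ is needed). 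Your closing remark about the alternative via Proposition~\ref{prop:mainpro}(v) is also accurate, including the caveat that the stated variational formula requires commutativity and would need a pinching step in general. One small imprecision: with $s(x)=\sqrt{P(x)Q(x)}$ the identity $\sum_x s(x)=\exs{X\sim R}{s(X)/R(X)}$ holds only when $\supp(P)\cap\supp(Q)\subseteq\supp(R)$; in general the expectation equals $\sum_{x\in\supp(R)}s(x)\le\sum_x s(x)$, which only strengthens the inequality $-\log\sum_x s(x)\le\sum_x R(x)\log\frac{R(x)}{s(x)}$, so the argument stands after replacing the equality by this inequality.
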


\begin{lemma}\label{lem:DPI-SC} Let $N_{ZX|YS}$ be a non-signaling map and $P_{S}$ be a probability distribution. Let $\mathbb{D}$ be a divergence satisfying the data-processing inequality. We have that for all channels $(T_{1})_{Y|X}$ and $(T_{2})_{Y|X}$:
\begin{align}
\sup_{P_{X}}\mathbb{D}\left(P_{X}  (T_{1})_{Y|X} \middle\| P_{X}  (T_{2})_{Y|X} \right)\ge \mathbb{D}\left(P_{S} (N\circ T_{1})_{Z|S} \middle\| P_{S} (N\circ T_{2})_{Z|S} \right).
\end{align}
\end{lemma}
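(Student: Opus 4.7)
My plan is to realize the $(S,Z)$ marginal of both $P_S \cdot (N\circ T_i)$ as a classical post-processing of a joint distribution whose $(X,Y)$ marginal is $P_X T_i$ for a well-chosen $P_X$, and then invoke the data-processing inequality twice.

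First I would define the candidate input distribution $P_X(x) := \sum_{s} P_S(s)\, N(x|s)$, which is well posed since non-signaling guarantees that $N(x|s) = \sum_z N(x,z|s,y)$ does not depend on $y$. Next I would define a conditional by
\begin{align}
M(z|s,x,y) := \frac{N(x,z|s,y)}{N(x|s)} \qquad \text{whenever } N(x|s) > 0,
\end{align}
and arbitrarily otherwise. The non-signaling identity $\sum_z N(x,z|s,y) = N(x|s)$ ensures that $M(\cdot|s,x,y)$ is a genuine probability distribution over $\cZ$. Now consider the joint
\begin{align}
Q_i(s,x,y,z) := P_S(s)\, N(x|s)\, T_i(y|x)\, M(z|s,x,y), \qquad i \in \{1,2\}.
\end{align}

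The second step is to compute marginals. Summing over $(s,z)$ gives the $(X,Y)$-marginal $P_X(x)\, T_i(y|x)$, while summing over $(x,y)$ and using $N(x|s)\, M(z|s,x,y) = N(x,z|s,y)$ yields the $(S,Z)$-marginal $P_S(s)\, (N\circ T_i)(z|s)$. Crucially, the conditional $Q_i(s,z|x,y) = P_S(s)\, N(x|s)\, M(z|s,x,y)/P_X(x)$ does not depend on $i$, so there is a fixed stochastic map $\Phi_{SZ|XY}$ sending $P_X T_i$ to $Q_i$ for both $i$.

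The final step is to apply DPI to the channel $\Phi$ to obtain $\mathbb{D}(Q_1 \| Q_2) \le \mathbb{D}(P_X T_1 \| P_X T_2)$, and DPI again to the partial-trace/marginalization onto $(S,Z)$ to obtain $\mathbb{D}(P_S (N\circ T_1) \| P_S (N\circ T_2)) \le \mathbb{D}(Q_1 \| Q_2)$. Chaining these and taking the supremum over $P_X$ on the right gives the desired inequality. The only subtlety I expect is handling the zero-probability cases in the definition of $M$ and in dividing by $P_X(x)$, but since $Q_i$ is supported on $\{P_X(x) > 0\}$ and $M$ only needs to be defined where $N(x|s) > 0$ (the other terms contribute zero to $Q_i$), these are routine to resolve by restricting supports.
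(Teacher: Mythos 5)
Your proposal is correct and takes essentially the same route as the paper: the paper also sets $P_X(x)=\sum_s P_S(s)N(x|s)$ and applies the data-processing inequality to the stochastic map $M(z,s|x,y)=P_S(s)N(x,z|s,y)/P_X(x)$, which is exactly the $i$-independent conditional $Q_i(s,z|x,y)$ you construct. The only differences are presentational: you split the single DPI into an append-then-marginalize pair and make the zero-probability bookkeeping explicit.
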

\begin{proof}
Define the map 
\begin{align}
M_{ZS|XY}(zs|xy) = \frac{P_{S}(s)N(zx|ys) }{P_{X}(x)}.
\end{align}
Since $N$ is non-signaling, we have that $\sum_{z}N(zx|ys)  = N(x|s)$, so $M$ is a stochastic map iff $P_{X}(x) = \sum_{s} P_{S}(s) N(x|s)$. With such $P_{X}$ we have that by the data-processing inequality:
\begin{align}
&\mathbb{D}\left(P_{X}  (T_{1})_{Y|X} \middle\| P_{X} (T_{2})_{Y|X} \right)
\\&\ge \mathbb{D}\left( \sum_{x,y} M_{ZS,XY=xy} P_{X}(x) (T_{1})_{Y|X}(y|x)\middle\| \sum_{x,y} M_{ZS,XY=xy} P_{X}(x) (T_{2})_{Y|X}(y|x) \right)
\\&=\mathbb{D}\left( P_{S}\sum_{x,y} N_{ZX=x|Y=yS} (T_{1})_{Y|X}(y|x)\middle\| P_{S}\sum_{x,y} N_{ZX=x|Y=yS}  (T_{2})_{Y|X}(y|x) \right)
\\&=\mathbb{D}\left(P_{S} (N\circ T_{1})_{Z|S} \middle\| P_{S} (N\circ T_{2})_{Z|S} \right).
\end{align}
\end{proof} 

\begin{lemma}\label{Lem:concave-alpha-delta}
Let $T_{Z|S}$ and $W_{Y|X}$ be two channels over finite alphabets. The following functions are concave on $\{(\alpha, \delta) \in [0,1]^2: \alpha \ge \delta \}$:
\begin{align}
&f: (\alpha, \delta) \mapsto \delta \sup_{P_{S}}\inf_{P_Z} \mathbb{E}_{s\sim P_{S}}D_{\frac{1-\alpha}{1+\delta-\alpha}}\left(W_s  \middle\| P_Z\right),
\\&g: (\alpha, \delta) \mapsto  - \delta\inf_{P_{Y}}  \mathbb{E}_{x\sim P_{X}}D_{\frac{\alpha}{\alpha - \delta}}\left(T_x  \middle\| P_{Y}\right).
\end{align}
\end{lemma}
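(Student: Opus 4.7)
The common strategy in both parts is to use the variational (Fenchel-type) representation of the R\'enyi divergence to rewrite each of $f,g$ as a pointwise infimum over auxiliary ``tilted'' channels of a quantity that is \emph{affine} in $(\alpha,\delta)$; since an infimum of affine functions is always concave, this gives the claim. The key input is the classical variational identity $(1-\beta)D_\beta(P\|Q)=\inf_R\{\beta D(R\|P)+(1-\beta)D(R\|Q)\}$ for $\beta\in(0,1)$ and its $\beta>1$ analogue, which appears as Proposition~\ref{prop:mainpro}(v) in the paper.

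\textbf{Case $g$.} With $\gamma=\alpha/(\alpha-\delta)>1$ the variational identity gives
\[
-\delta D_{\gamma}(T_x\|P_Y)=\inf_{\widetilde{T}_x}\bigl\{\alpha D(\widetilde{T}_x\|T_x)-\delta D(\widetilde{T}_x\|P_Y)\bigr\}.
\]
Substituting into the definition of $g$ yields
\[
g(\alpha,\delta)=\sup_{P_Y}\inf_{\widetilde{T}_{X\to Y}}\mathbb{E}_{x\sim P_X}\bigl[\alpha D(\widetilde{T}_x\|T_x)-\delta D(\widetilde{T}_x\|P_Y)\bigr].
\]
On the admissible domain $\alpha\geq\delta\geq 0$ the integrand is convex in $\widetilde{T}_x$ (the entropy coefficient equals $-(\alpha-\delta)\leq 0$, so $-(\alpha-\delta)H(\widetilde{T}_x)$ is convex) and concave in $P_Y$ (since $-D(\cdot\|P_Y)$ is concave in $P_Y$). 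By Sion's minimax theorem the order of $\sup_{P_Y}$ and $\inf_{\widetilde{T}}$ may be interchanged. The inner supremum then collapses to $\sup_{P_Y}[-\delta\,\mathbb{E}_x D(\widetilde{T}_x\|P_Y)]=-\delta\,I(P_X,\widetilde{T})$ by the ``golden formula'' characterization of mutual information, giving
\[
g(\alpha,\delta)=\inf_{\widetilde{T}_{X\to Y}}\bigl\{\alpha\,\mathbb{E}_{x\sim P_X}D(\widetilde{T}_x\|T_x)-\delta\,I(P_X,\widetilde{T})\bigr\}.
\]
For each fixed $\widetilde{T}$ the bracketed expression is affine in $(\alpha,\delta)$, hence $g$ is concave.

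\textbf{Case $f$.} The template is the same, with $\beta=(1-\alpha)/(1+\delta-\alpha)\in(0,1)$ and the identity
\[
\delta D_\beta(W_s\|P_Z)=\inf_{\widetilde{W}_s}\bigl\{(1-\alpha)D(\widetilde{W}_s\|W_s)+\delta D(\widetilde{W}_s\|P_Z)\bigr\}.
\]
Plugging this into the definition of $f$, commuting $\inf_{P_Z}$ past $\inf_{\widetilde{W}}$, and collapsing $\inf_{P_Z}\mathbb{E}_sD(\widetilde{W}_s\|P_Z)=I(P_S,\widetilde{W})$ gives
\[
f(\alpha,\delta)=\sup_{P_S}\inf_{\widetilde{W}_{Y|S}}\bigl\{(1-\alpha)\mathbb{E}_{s\sim P_S}D(\widetilde{W}_s\|W_s)+\delta\,I(P_S,\widetilde{W})\bigr\}.
\]
The integrand is concave in $P_S$ (linear term plus concavity of mutual information in the input) and convex in $\widetilde{W}$ (both $\mathbb{E}_sD(\widetilde{W}_s\|W_s)$ and $I(P_S,\widetilde{W})$ are convex in the channel $\widetilde{W}$), so Sion's minimax again applies.

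\textbf{Main obstacle.} The hard part of the $f$ case is that, unlike in $g$, the outer optimization is a $\sup$ over $P_S$ that does not automatically collapse to a $P_S$-independent quantity that is affine in $(\alpha,\delta)$; a supremum of concave functions is not in general concave. I would address this via a perspective-function argument: setting $u=1-\alpha$ and $t=1+\delta-\alpha$ one has $\delta\,I_\beta(W)=t\cdot(1-\beta)I_\beta(W)$ with $\beta=u/t$, exhibiting $f$ as the perspective of $\phi(\beta):=(1-\beta)I_\beta(W)$. Since the perspective $(u,t)\mapsto t\phi(u/t)$ preserves concavity and $(\alpha,\delta)\mapsto(u,t)$ is an affine change of variables, joint concavity of $f$ in $(\alpha,\delta)$ reduces to concavity of $\phi(\beta)$ in $\beta\in(0,1)$. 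The latter is the classical Gallager-type concavity (equivalent to concavity of $E_0(\rho,W)$ in $\rho\geq 0$ under $\rho=(1-\beta)/\beta$), which for each fixed $P_S$ already follows from the inf-of-affine argument above and can be lifted to the capacity via a further Sion's swap that rewrites the sup over $P_S$ using the mini-max identity $I_\beta(W)=\inf_{P_Z}\sup_{P_S}\mathbb{E}_sD_\beta(W_s\|P_Z)$.
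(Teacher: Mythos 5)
Your treatment of $g$ is correct and coincides with the paper's own argument: the variational identity for $D_{\alpha/(\alpha-\delta)}$, one Sion swap of $\sup_{P_Y}$ and $\inf_{\widetilde T}$, and the collapse $\sup_{P_Y}\bigl[-\delta\,\mathbb{E}_x D(\widetilde T_x\|P_Y)\bigr]=-\delta\, I(P_X,\widetilde T)$ exhibit $g$ as an infimum over $\widetilde T$ of functions affine in $(\alpha,\delta)$, hence concave.

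The $f$-part, however, has a genuine gap, and it sits exactly where you place the ``main obstacle''. Your perspective reduction itself is sound: since $(\alpha,\delta)\mapsto(u,t)=(1-\alpha,1+\delta-\alpha)$ is affine and invertible and the slice $t=1$ lies in the domain, concavity of $f$ is equivalent to concavity of $\phi(\beta)=(1-\beta)I_\beta(W)$ on $(0,1)$, where $I_\beta(W)=\sup_{P_S}\inf_{P_Z}\mathbb{E}_{s}D_\beta(W_s\|P_Z)$ is the R\'enyi capacity. But the justification you offer for this scalar concavity does not go through. The inf-of-affine argument only yields concavity of $(1-\beta)\inf_{P_Z}\mathbb{E}_{s\sim P_S}D_\beta(W_s\|P_Z)$ for each \emph{fixed} $P_S$, and a supremum of concave functions need not be concave; the proposed ``further Sion swap'' to $I_\beta(W)=\inf_{P_Z}\sup_{P_S}\mathbb{E}_sD_\beta(W_s\|P_Z)$ does not help, because after the swap the $\beta$-dependence still sits inside the $\sup_{P_S}$ (equivalently a $\max_s$), so $\phi$ is again an infimum over $P_Z$ of suprema of concave functions of $\beta$ --- the same obstruction, merely relocated. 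Likewise, concavity in $\rho$ of the \emph{input-optimized} Gallager function $\sup_P E_0(\rho,P)$ is not a consequence of concavity of $E_0(\rho,P)$ for each fixed $P$; it is precisely the statement you are trying to prove, rewritten in other coordinates, so invoking it is circular. The paper closes this step by a different move: after the variational rewriting it applies Sion's theorem once more to pull the infimum over the auxiliary channel $\widetilde T$ \emph{outside} the $\sup_{P_S}\inf_{P_Z}$, and only then concludes concavity in $(\alpha,\delta)$. To complete your route you would need either that additional swap (and an argument for why the resulting inner optimization is still concave in $(\alpha,\delta)$), or an independent proof or citation of the capacity-level fact that $\beta\mapsto(1-\beta)I_\beta(W)$ is concave; as written, that key fact is asserted but not established.
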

\begin{proof}
We have by the variational property of the $\alpha$ R\'enyi divergence \cite{vanErven2014Jun}:
\begin{align}
f(\alpha)&= \delta \sup_{P_{S}}\inf_{P_Z} \mathbb{E}_{s\sim P_{S}}D_{\frac{1-\alpha}{1+\delta-\alpha}}\left(W_s  \middle\| P_Z\right)
\\&= \sup_{P_{S}}\inf_{P_Z} \inf_{\widetilde{T}} (1-\alpha) D\left( P_{S}\widetilde{T}_{Z|S}\middle\| P_{S}P_Z\right) + \delta  D\left(P_{S} \widetilde{T}_{Z|S}\middle\| P_{S}T_{Z|S}\right)
\\&= \sup_{P_{S}}\inf_{\widetilde{T}}\inf_{P_Z} (1- \alpha) D\left( P_{S}\widetilde{T}_{Z|S}\middle\| P_{S}P_Z\right) + \delta  D\left(P_{S} \widetilde{T}_{Z|S}\middle\| P_{S}T_{Z|S}\right)
\\&= \inf_{\widetilde{T}}\sup_{P_{S}}\inf_{P_Z}  (1-\alpha) D\left( P_{S}\widetilde{T}_{Z|S}\middle\| P_{S}P_Z\right) + \delta  D\left(P_{S} \widetilde{T}_{Z|S}\middle\| P_{S}T_{Z|S}\right)
\end{align}
where we use Sion's minimax theorem \cite{sion1958general}: the objective function is concave in $P_{S}$ as an infimum of linear functions and convex in $\widetilde{T}$ since $D$ is jointly convex in its  arguments.  Now, $f$ is an infimum of linear functions in $\alpha$ so it is concave. 

Similarly, we can write the function $g$ using the variational property of the $\alpha$ R\'enyi divergence \cite{vanErven2014Jun}:
\begin{align}
g(\alpha)&= - \delta\inf_{P_{Y}}  \mathbb{E}_{x\sim P_{X}}D_{\frac{\alpha}{\alpha - \delta}}\left(T_x  \middle\| P_{Y}\right)
\\&=\sup_{P_{Y}}\inf_{\widetilde{T}}  \alpha D\left(P_{X}\widetilde{W}_{Y|X}\middle\|  P_{X}W_{Y|X}\right) -\delta D\left(P_{X}\widetilde{W}_{Y|X}\middle\|  P_{X}P_{Y}\right) 
\\&=\inf_{\widetilde{T}}\sup_{P_{Y}}  \alpha D\left(P_{X}\widetilde{W}_{Y|X}\middle\|  P_{X}W_{Y|X}\right) -\delta D\left(P_{X}\widetilde{W}_{Y|X}\middle\|  P_{X}P_{Y}\right) 
\end{align}
where we use Sion's minimax theorem \cite{sion1958general}: the objective function is concave in $P_{Y}$ and convex in $\widetilde{T}$ since it can be written as 
\begin{align}
&\alpha D\left(P_{X}\widetilde{W}_{Y|X}\middle\|  P_{X}W_{Y|X}\right) -\delta D\left(P_{X}\widetilde{W}_{Y|X}\middle\|  P_{X}P_{Y}\right) 
\\&= \mathbb{E}_{x\sim P_{X}} -(\alpha-\delta)H(\widetilde{W}_{x}) -\alpha\tr{\widetilde{W}_{x}\log W_{x}} +\delta \tr{\widetilde{W}_{x}\log P_{Y}}.
\end{align}
Since $g$ is an infimum of linear functions in $\alpha$, it is concave.
\end{proof}

The following Lemma is from~\cite{li2024operational}.
\begin{lemma}[\cite{li2024operational}]
\label{lem:appen1}
Let $\rho, \sigma\in\mc{S}(\mc{H})$, and let $\mc{H}=\bigoplus_{i\in \mc{I}}\mc{H}_i$ decompose into a set of mutually orthogonal subspaces $\{\mc{H}_i\}_{i\in \mc{I}}$. Suppose that $\sigma=\sum\limits_{i \in \mc{I}} \sigma_i$ with $\supp(\sigma_i)\subseteq \mc{H}_i$. Then
\begin{equation}
F\left(\sum_{i \in \mc{I}} \Pi_i \rho \Pi_i, \sigma\right) \leq \sqrt{|\mc{I}|} F(\rho, \sigma),
\end{equation}
where $\Pi_i$ is the projection onto $\mc{H}_i$.
\end{lemma}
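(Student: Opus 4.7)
My plan is to exploit the block-diagonal structure shared by $\mathcal{P}(\rho):=\sum_{i\in\mathcal{I}}\Pi_{i}\rho\Pi_{i}$ and $\sigma=\sum_{i\in\mathcal{I}}\sigma_{i}$ relative to $\mathcal{H}=\bigoplus_{i\in\mathcal{I}}\mathcal{H}_{i}$. Since both operators respect this decomposition, so do their square roots, and the operator $\sqrt{\mathcal{P}(\rho)}\sqrt{\sigma}=\sum_{i}\sqrt{\Pi_{i}\rho\Pi_{i}}\sqrt{\sigma_{i}}$ is a direct sum of blocks each supported on $\mathcal{H}_{i}$ with pairwise orthogonal row and column ranges, so its trace norm is additive. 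This yields $\sqrt{F(\mathcal{P}(\rho),\sigma)}=\sum_{i}\|\sqrt{\Pi_{i}\rho\Pi_{i}}\sqrt{\sigma_{i}}\|_{1}=\sum_{i}\sqrt{F(\Pi_{i}\rho\Pi_{i},\sigma_{i})}$. The identity $\sqrt{\sigma_{i}}\Pi_{i}=\sqrt{\sigma_{i}}$, valid because $\supp(\sigma_{i})\subseteq\mathcal{H}_{i}$, collapses each block fidelity to $F(\Pi_{i}\rho\Pi_{i},\sigma_{i})=F(\rho,\sigma_{i})$.

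Writing $Y:=\sqrt{\rho}\sqrt{\sigma}$, so that $\|Y\|_{1}=\sqrt{F(\rho,\sigma)}$, and using $\sqrt{\sigma_{i}}=\Pi_{i}\sqrt{\sigma}$, each summand becomes $\|\sqrt{\rho}\sqrt{\sigma_{i}}\|_{1}=\|Y\Pi_{i}\|_{1}$. The proof therefore reduces to the pinching-type inequality
\[
\sum_{i\in\mathcal{I}}\|Y\Pi_{i}\|_{1}\leq\sqrt{|\mathcal{I}|}\,\|Y\|_{1},
\]
which, combined with the block decomposition and squared, yields the claim. To prove this inequality, I would first invoke the polar decomposition $Y=U|Y|$ to reduce to the positive case $T:=|Y|\geq 0$, using $\|Y\Pi_{i}\|_{1}=\||Y|\Pi_{i}\|_{1}$. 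H\"older's inequality $\|AB\|_{1}\leq\|A\|_{2}\|B\|_{2}$ applied with $A=T^{1/2}$ and $B=T^{1/2}\Pi_{i}$ gives $\|T\Pi_{i}\|_{1}\leq\sqrt{\operatorname{tr}T}\sqrt{\operatorname{tr}(T\Pi_{i})}$, and a subsequent Cauchy--Schwarz across $i$, combined with $\sum_{i}\operatorname{tr}(T\Pi_{i})=\operatorname{tr}T$, finishes the argument.

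The main obstacle is the pinching inequality itself: the naive triangle-inequality estimate $\sum_{i}\|Y\Pi_{i}\|_{1}\leq|\mathcal{I}|\,\|Y\|_{1}$ loses a square-root factor and is insufficient. The sharpening crucially relies on factoring $T=T^{1/2}\cdot T^{1/2}$ so that H\"older opens up a Cauchy--Schwarz in the block index, with the orthogonality $\sum_{i}\Pi_{i}=I$ performing the final simplification. A minor but important technical point is justifying the trace-norm additivity $\|\sum_{i}\sqrt{\Pi_{i}\rho\Pi_{i}}\sqrt{\sigma_{i}}\|_{1}=\sum_{i}\|\sqrt{\Pi_{i}\rho\Pi_{i}}\sqrt{\sigma_{i}}\|_{1}$, which uses that the blocks live on mutually orthogonal subspaces thanks to $\supp(\sigma_{i})\subseteq\mathcal{H}_{i}$.
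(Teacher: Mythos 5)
Your chain of estimates is correct and, since the paper gives no proof of this lemma (it is quoted from \cite{li2024operational}), it serves as a valid self-contained substitute along the standard lines: the block additivity of the trace norm for operators supported on the mutually orthogonal $\mc{H}_i$, the identities $\sqrt{\sigma_i}\,\Pi_i=\sqrt{\sigma_i}$ and $\sqrt{\sigma_i}=\Pi_i\sqrt{\sigma}$, and the pinching-type bound $\sum_i\|Y\Pi_i\|_1\le\sqrt{|\mc{I}|}\,\|Y\|_1$ proved via polar decomposition, H\"older with the split $T=T^{1/2}\cdot T^{1/2}$, and Cauchy--Schwarz over $i$ together with $\sum_i\Pi_i=\bid$ all check out. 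What you have established is precisely
$\bigl\|\sqrt{\textstyle\sum_i\Pi_i\rho\Pi_i}\,\sqrt{\sigma}\bigr\|_1\le\sqrt{|\mc{I}|}\,\bigl\|\sqrt{\rho}\sqrt{\sigma}\bigr\|_1$.

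The only point to repair is your closing instruction to ``square''. The lemma as stated is the inequality you derived \emph{before} squaring, read in the convention of the cited source, where $F(\rho,\sigma)$ denotes the root fidelity $\|\sqrt{\rho}\sqrt{\sigma}\|_1$; no squaring should be performed. If one instead inserts the squared fidelity $F(\rho,\sigma)=(\tr{\sqrt{\sqrt{\sigma}\rho\sqrt{\sigma}}})^2$ defined in this paper's preliminaries, squaring your inequality produces the prefactor $|\mc{I}|$, not $\sqrt{|\mc{I}|}$, and in fact the stated bound is false under that convention: take $\mc{H}=\bC^2$ with the two one-dimensional coordinate subspaces, $\sigma=\bid/2$ and $\rho=\proj{+}$; then $\sum_i\Pi_i\rho\Pi_i=\sigma$, so the left-hand side equals $1$, while $\sqrt{2}\,F(\rho,\sigma)=\sqrt{2}\cdot\tfrac12<1$ (with the root fidelity the same example gives equality, showing your bound is tight). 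So keep your displayed chain as the final statement, drop the word ``squared'', and note explicitly that $F$ here means $\|\sqrt{\rho}\sqrt{\sigma}\|_1$ as in \cite{li2024operational}.
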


The following lemma can be deduced from the same process as~\cite[Lemma~2]{WangWilde2019resource}. It was also implicitly proved in~\cite{LWD2016strong}.

\begin{lemma}
\label{lem:hof}
Let $\rho, \sigma \in \mc{S}(\mc{H})$ and $\tau \in \mc{P}(\mc{H})$, and suppose $\supp(\sigma)\not\perp\supp(\tau)$. Fix $\alpha \in (\frac{1}{2}, 1)$ and $\beta\in(1,+\infty)$
such that $\frac{1}{\alpha}+\frac{1}{\beta}=2$. Then
\begin{equation}
\frac{2\alpha}{1-\alpha} \log F(\rho, \sigma) \leq \widetilde{D}_\beta(\rho\|\tau)-\widetilde{D}_\alpha(\sigma\|\tau).
\end{equation}
\end{lemma}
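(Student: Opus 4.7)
The plan is to prove this by a single application of the operator Hölder inequality to $\sqrt{\rho}\sqrt{\sigma}$, with Hölder-conjugate exponents dictated by the hypothesis $\tfrac{1}{\alpha}+\tfrac{1}{\beta}=2$. Indeed, this hypothesis is exactly the dual condition $\tfrac{1}{2\beta}+\tfrac{1}{2\alpha}=1$, which pairs the Schatten-$2\beta$ and Schatten-$2\alpha$ norms — precisely the norms whose $2\beta$-th and $2\alpha$-th powers realise the sandwiched R\'enyi quantities $\widetilde{Q}_\beta(\rho\|\tau)$ and $\widetilde{Q}_\alpha(\sigma\|\tau)$. Everything else is routine bookkeeping.

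The starting point is the identity $\tr{\sqrt{\sqrt{\sigma}\rho\sqrt{\sigma}}}=\|\sqrt{\rho}\sqrt{\sigma}\|_1$, so the task reduces to upper-bounding $\log\|\sqrt{\rho}\sqrt{\sigma}\|_1$. Setting $a:=\tfrac{1-\alpha}{2\alpha}\in(0,\tfrac{1}{2})$, a short algebraic check using $\tfrac{1}{\alpha}+\tfrac{1}{\beta}=2$ yields the two identities $\tfrac{1-\beta}{2\beta}=-a$ and $\tfrac{\beta-1}{2\beta}=\tfrac{1-\alpha}{2\alpha}$; these are the only relations needed to align exponents. Assuming $\supp(\rho)\subseteq\supp(\tau)$ (otherwise $\widetilde{D}_\beta(\rho\|\tau)=+\infty$ and the inequality is trivial), I work on $\supp(\tau)$ and factorise
\[
\sqrt{\rho}\sqrt{\sigma}=\bigl(\sqrt{\rho}\,\tau^{-a}\bigr)\bigl(\tau^{a}\sqrt{\sigma}\bigr),
\]
then apply operator Hölder with $(p,q)=(2\beta,2\alpha)$ to obtain $\|\sqrt{\rho}\sqrt{\sigma}\|_1\le\|\sqrt{\rho}\tau^{-a}\|_{2\beta}\,\|\tau^{a}\sqrt{\sigma}\|_{2\alpha}$.

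Each factor collapses to a sandwiched R\'enyi quasi-divergence via $\|M\|_p^p=\tr{(M^*M)^{p/2}}$ together with cyclic invariance of the trace:
\[
\bigl\|\sqrt{\rho}\,\tau^{-a}\bigr\|_{2\beta}^{2\beta}=\tr{\bigl(\tau^{-a}\rho\tau^{-a}\bigr)^{\beta}}=\widetilde{Q}_\beta(\rho\|\tau),\qquad
\bigl\|\tau^{a}\sqrt{\sigma}\bigr\|_{2\alpha}^{2\alpha}=\tr{\bigl(\tau^{a}\sigma\tau^{a}\bigr)^{\alpha}}=\widetilde{Q}_\alpha(\sigma\|\tau).
\]
Taking logarithms and using $\tfrac{\beta-1}{2\beta}=\tfrac{1-\alpha}{2\alpha}$ to match coefficients yields
\[
\log\bigl\|\sqrt{\rho}\sqrt{\sigma}\bigr\|_1\le \tfrac{1-\alpha}{2\alpha}\bigl(\widetilde{D}_\beta(\rho\|\tau)-\widetilde{D}_\alpha(\sigma\|\tau)\bigr).
\]
Multiplying by $\tfrac{2\alpha}{1-\alpha}>0$ (valid for $\alpha\in(\tfrac{1}{2},1)$) delivers the claim, with the outstanding factor of two emerging naturally from the relation between $\log F(\rho,\sigma)$ and $\log\tr{\sqrt{\sqrt{\sigma}\rho\sqrt{\sigma}}}$.

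The only subtle aspect is the support bookkeeping around $\tau^{-a}$: one must verify that the factorisation is well-defined on $\supp(\tau)$, that the case $\supp(\rho)\not\subseteq\supp(\tau)$ is absorbed by the $+\infty$ convention of $\widetilde{D}_\beta$, and that the hypothesis $\supp(\sigma)\not\perp\supp(\tau)$ keeps $\widetilde{D}_\alpha(\sigma\|\tau)$ finite so that the bound is non-vacuous. Beyond this and the short algebra linking $\alpha$ and $\beta$ via $\tfrac{1}{\alpha}+\tfrac{1}{\beta}=2$, there is no real obstacle — the whole proof is essentially one invocation of operator Hölder with the correct exponent pairing.
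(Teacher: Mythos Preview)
Your argument is correct and is exactly the operator-H\"older route the paper defers to (the paper gives no self-contained proof, only the citation to Wang--Wilde). One bookkeeping note: since the paper defines $F$ as the \emph{squared} fidelity, your bound $\tfrac{2\alpha}{1-\alpha}\log\|\sqrt{\rho}\sqrt{\sigma}\|_1\le\widetilde{D}_\beta-\widetilde{D}_\alpha$ actually reads $\tfrac{\alpha}{1-\alpha}\log F\le\widetilde{D}_\beta-\widetilde{D}_\alpha$, which is \emph{stronger} than the stated lemma (and matches the coefficient $\tfrac{p}{2-p}$ the paper actually uses in the converse proof); the stated form with $\tfrac{2\alpha}{1-\alpha}$ then follows since $\log F\le 0$, so your ``outstanding factor of two'' sentence is slightly off but harmless.
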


\end{document}